\newtheorem{fact}{\bf Fact}
\DeclareMathOperator{\Diag}{Diag} %extra math operator that I need to define a diagonal matrix
\newcommand{\splitatcommas}[1]{%
	\begingroup
	\begingroup\lccode`~=`, \lowercase{\endgroup
		\edef~{\mathchar\the\mathcode`, \penalty0 \noexpand\hspace{0pt plus 1em}}%
	}\mathcode`,="8000 #1%
	\endgroup
}
\begin{document}
\title{New Insights into Involutory and Orthogonal MDS Matrices}
% \titlerunning{}

\author{
    Yogesh Kumar\inst{1} \and
    Susanta Samanta\inst{2} \and
    Atul Gaur\inst{3}
}

% %
% \authorrunning{Kumar et al.}
% % First names are abbreviated in the running head.
% % If there are more than two authors, 'et al.' is used.
% %

% % \institute{}

\institute{
    Scientific Analysis Group, DRDO, Metcalfe House Complex, Delhi-110054 \\ \email{adhana.yogesh@gmail.com} 
    \and Department of Electrical and Computer Engineering, University of Waterloo, Waterloo, ON N2L3G1, Canada \\ \email{ssamanta@uwaterloo.ca} \and
    Department of Mathematics, University of Delhi, Delhi-110007 \\ \email{gaursatul@gmail.com}
}
% 	%
\maketitle              % typeset the header of the contribution
	
% \setcounter{footnote}{0} % Reset footnote counter
% \keywords{Diffusion Layer, MDS matrix, Involutory matrix, Semi-involutory matrix, Orthogonal matrix, Semi-orthogonal matrix}

\begin{abstract}
MDS matrices play a critical role in the design of diffusion layers for block ciphers and hash functions due to their optimal branch number. Involutory and orthogonal MDS matrices offer additional benefits by allowing identical or nearly identical circuitry for both encryption and decryption, leading to equivalent implementation costs for both processes. These properties have been further generalized through the notions of semi-involutory and semi-orthogonal matrices. While much of the existing literature focuses on identifying efficiently implementable MDS candidates or proposing new constructions for MDS matrices of various orders, this work takes a different direction. Rather than introducing novel constructions or prioritizing implementation efficiency, we investigate structural relationships between the generalized variants and their conventional counterparts. Specifically, we establish nontrivial interconnections between semi-involutory and involutory matrices, as well as between semi-orthogonal and orthogonal matrices. Exploiting these relationships, we show that the number of semi-involutory MDS matrices can be directly derived from the number of involutory MDS matrices, and vice versa. A similar correspondence holds for semi-orthogonal and orthogonal MDS matrices. We also examine the intersection of these classes and show that the number of $3 \times 3$ MDS matrices that are both semi-involutory and semi-orthogonal coincides with the number of semi-involutory MDS matrices over~$\mathbb{F}_{2^m}$. Furthermore, we derive the general structure of orthogonal matrices of arbitrary order $n$ over $\mathbb{F}_{2^m}$. Finally, leveraging the aforementioned interconnections, we present an alternative and direct derivation of the explicit formulae for counting $3 \times 3$ semi-involutory MDS matrices and $3 \times 3$ semi-orthogonal MDS matrices.

\keywords{Diffusion Layer \and MDS matrix \and Involutory matrix \and Semi-involutory matrix \and Orthogonal matrix \and Semi-orthogonal matrix}

\end{abstract}

\section{Introduction}
The concepts of confusion and diffusion~\cite{shan} play a crucial role in the design of symmetric key cryptographic primitives. Confusion aims to create a complex statistical relationship between the ciphertext and the plaintext, making it difficult for an attacker to exploit. Typically, confusion is achieved through the interaction of non-linear S-boxes with mixing and shuffling processes over multiple rounds. Diffusion, on the other hand, ensures that a change in a single bit of the plaintext results in a significant change in approximately half of the bits in the ciphertext. Similarly, altering a single bit of the ciphertext should cause about half of the bits in the plaintext to change. This is related to the expectation that encryption schemes exhibit an avalanche effect~\cite{avalanche_effect}. In many block ciphers and hash functions, the diffusion property is achieved through the use of a linear layer, represented as a matrix. This matrix is designed to significantly alter the output in response to a small change in the input. Hence, MDS matrices find significant applications in the design of block ciphers and hash functions.

% Heys and Tavares~\cite{HEY3,HEY4,HEY5} showed that replacing the permutation layer of Substitution Permutation Networks (SPNs) with a diffusive linear transformation can improve the avalanche characteristics of a block cipher, thus increasing its resistance to differential and linear cryptanalysis. MDS matrices are important components of modern ciphers and hash functions, as they offer diffusion properties that enhance security against these types of attacks. Hence, MDS matrices find significant applications in the design of block ciphers and hash functions.

MDS matrices with specific properties, such as being involutory or orthogonal, are of particular importance. An involutory matrix is one that is its own inverse, while an orthogonal matrix has its transpose as its inverse. These characteristics facilitate implementation by enabling the use of identical or nearly identical circuitry for both encryption and decryption processes. 

Researchers have consequently devoted considerable attention to involutory and orthogonal MDS matrices constructed from various canonical matrix families. Notably, Cauchy, Vandermonde, circulant, and Hadamard structures, as well as their generalizations, serve as rich sources for such matrices. The literature features many prominent examples, including those discussed in references~\cite{Gupta2023direct,kc2,GR15,LACAN2003,CYCLICM,Roth1989,psa,sdm,skf,Orthogonal_MDS2017}.

In addition to algebraic properties, several studies also focus on practical issues such as hardware cost and delay in the design and use of MDS matrices~\cite{Duval2018,Li2019FSE,IterativeMDS2019,Sajadieh2021,xiang2020}. To further expand these foundational ideas, the concepts of semi-involutory~\cite{cheon2021semi} and semi-orthogonal~\cite{FIEDLER2012} matrices have been introduced. The work in~\cite{chatterjee2023note} represents the first systematic exploration of these generalized structures for the construction of semi-involutory and semi-orthogonal MDS matrices. Follow-up studies~\cite{chatterjee2024_circulant,chatterjee2024_3_SIMDS,kumar2026study} have continued to advance this line of research. 

More specifically, \cite{chatterjee2023note} shows that certain Cauchy and Vandermonde based MDS constructions also satisfy the semi-orthogonal property, and it initiates a study of circulant matrices with semi-involutory and semi-orthogonal structure. In~\cite{chatterjee2024_circulant}, the authors further analyze circulant matrices over finite fields and relate the MDS property to the trace of an associated diagonal matrix. In~\cite{chatterjee2024_3_SIMDS}, they characterize $3\times 3$ semi-involutory MDS matrices and enumerate them over finite fields of characteristic 2. Independently, \cite{kumar2026study} derives counting formula for $3\times 3$ semi-involutory MDS matrices over $\mathbb{F}_{2^m}$, and additionally provides algorithms to test the semi-involutory and semi-orthogonal properties, detailed structural results for circulant matrices with these properties, and non-existence results for circulant MDS matrices under the same structural constraints.

\vspace{1em}
\noindent\textbf{Our Contribution.} 
While much of the existing literature focuses on efficiently implementable MDS candidates or proposes new constructions for MDS matrices of various orders, this work adopts a different perspective by investigating structural relationships. In particular, we study the connections between generalized matrix forms, such as semi-involutory and semi-orthogonal matrices, and their conventional counterparts: involutory and orthogonal matrices. These relationships reveal deeper algebraic insights and provide a framework for systematically determining the number of MDS matrices that satisfy additional structural constraints.

More specifically, our key contributions are summarized as follows:
\begin{itemize}
    \setlength{\itemsep}{1em}
    \item[$\bullet$] We establish nontrivial connections between semi-involutory and involutory matrices, and between semi-orthogonal and orthogonal matrices over $\mathbb{F}_{2^m}$. These relationships enable bidirectional derivation: the number of semi-involutory MDS matrices over $\mathbb{F}_{2^m}$ determines the number of involutory MDS matrices over $\mathbb{F}_{2^m}$ and vice versa, with the same correspondence holding for semi-orthogonal and orthogonal MDS matrices.

    \item[$\bullet$] We characterize matrices that simultaneously satisfy both the semi-involutory and semi-orthogonal properties. Using this characterization, we prove that the number of $3 \times 3$ MDS matrices exhibiting both properties is equal to the number of $3 \times 3$ semi-involutory MDS matrices over $\mathbb{F}_{2^m}$.
    
    \item[$\bullet$] We derive the general structure of orthogonal matrices of arbitrary order $n$ over $\mathbb{F}_{2^m}$. Specializing this result to the case $n = 3$, we prove that the number of $3 \times 3$ orthogonal MDS matrices over $\mathbb{F}_{2^m}$ is $(2^m - 2)(2^m - 3)(2^m - 4)$.  

    \item[$\bullet$] We revisit the enumeration of $3 \times 3$ semi-involutory and semi-orthogonal MDS matrices over $\mathbb{F}_{2^m}$. Exploiting their correspondence with involutory MDS matrices and known enumeration results from~\cite{gmt}, we provide a direct derivation of the closed-form count for $3\times 3$ semi-involutory MDS matrices over $\mathbb{F}_{2^m}$. In an analogous way, we derive the closed-form count for $3\times 3$ semi-orthogonal MDS matrices over $\mathbb{F}_{2^m}$. Together with previously known formulas and those proved in this work, these results yield a consolidated set of explicit enumeration formulas for several algebraically constrained $3\times 3$ MDS classes, summarized in Table~\ref{tab:MDS_enumeration}.
    
    \item[$\bullet$] We extend our enumeration to $4 \times 4$ matrices by computing the number of semi-involutory MDS matrices over $\mathbb{F}_{2^m}$ for $m = 3, 4, \ldots, 8$, thereby extending the previously known results~\cite{kumar2026study} for $m=3,4$ to larger finite field. Finally, Tables~\ref{Table:Count_Comparison_3_MDS} and~\ref{Table:Count_Comparison_4_MDS} present a comparison of the counts of $3 \times 3$ and $4 \times 4$ MDS matrices with various structural properties.

\end{itemize}

\begin{table}[h]
    \centering
    \caption{Enumeration formulas for $3\times 3$ MDS matrices over $\mathbb{F}_{2^m}$}
    \label{tab:MDS_enumeration}
	\vspace{2mm}
    \scalebox{0.9}{%
	\begin{tabular}{|>{\raggedright\arraybackslash}p{4cm}|l|}
        \hline
       {Matrix Type} & {Number of Matrices} \\
	    \hline
	    $3\times 3$ MDS & $(2^m-1)^5 (2^m-2)(2^m-3)(2^{2m}-9\cdot 2^m+21)$ \cite{Kumar_MDS2024} \\[4pt]
	    $3\times 3$ Involutory MDS & $(2^m-1)^2 (2^m-2)(2^m-4)$ \cite{gmt}\\[4pt]
	    $3\times 3$ Semi-involutory MDS & $(2^m-1)^5 (2^m-2)(2^m-4)$ \cite{chatterjee2024_3_SIMDS,kumar2026study} [This work, Corollary~\ref{Th_count_SIMDS_3}]\\[4pt]
	    $3\times 3$ Orthogonal MDS & $(2^m-2)(2^m-3)(2^m-4)$ [This work, Theorem~\ref{count_ortho_mds}] \\[4pt]
	    $3\times 3$ Semi-orthogonal MDS & $(2^m-1)^5(2^m-2)(2^m-3)(2^m-4)$ \cite{kumar2026study} [This work, Theorem~\ref{Th_count_SOMDS_3}]\\[4pt]
	    $3\times 3$ MDS with both Semi-involutory and Semi-orthogonal & $(2^m-1)^5(2^m-2)(2^m-4)$ [This work, Theorem~\ref{count_3_SISO-MDS}] \\
	    \hline            
	\end{tabular}%
    }
\end{table}

\vspace*{1em}

\noindent The structure of the paper is as follows. In Section~\ref{Sec:Definition}, we provide a brief discussion of the mathematical background and notations employed throughout the paper. Section~\ref{Sec:Interconnections} establishes the interconnections between semi-involutory and involutory matrices, as well as between semi-orthogonal and orthogonal matrices. Section~\ref{Sec:orthogonal_count} analyzes the general structure of orthogonal matrices of arbitrary order $n$ over $\mathbb{F}_{2^m}$ and presents a comparison of the counts of $3 \times 3$ and $4 \times 4$ MDS matrices with various structural properties. Finally, Section~\ref{Sec:Conclusion} concludes the paper.

\section{Mathematical Preliminaries}~\label{Sec:Definition}

In this section, we discuss some definitions and mathematical preliminaries that are important in our context.

Let $\mathbb{F}_{2^m}$ be the finite field of order $2^m$. We denote the multiplicative group of the finite field $\mathbb{F}_{2^m}$ by $\mathbb{F}_{2^m}^*$. The set of vectors of length $n$ with entries from the finite field $\mathbb{F}_{2^m}$ is denoted by $\mathbb{F}_{2^m}^n$.

Let $M$ be any $n\times n$ matrix over $\mathbb{F}_{2^m}$ and let $|M|$ denote its determinant. An $n \times n$ matrix is referred to as a matrix of order $n$. A matrix $D$ of order $n$ is diagonal if $(D){ij}=0$ for $i\neq j$. Using $d_i = (D){ii}$, we represent the diagonal matrix $D$ as $\Diag(d_1, d_2, \ldots, d_n)$. The determinant of $D$ is $|D| = \prod_{i=1}^{n}{d_i}$. Hence, $D$ is non-singular over $\mathbb{F}_{2^m}$ if and only if $d_i \neq 0$ for $1 \leq i \leq n$.

MDS matrix finds its practical applications as a diffusion layer in cryptographic primitives. The concept of the MDS matrix comes from coding theory, specifically from the realm of maximum distance separable (MDS) codes. An $[n, k, d]$ code is MDS if it meets the singleton bound $d = n-k + 1$.

\begin{theorem}~\cite[page 321]{FJ77} 
    An $[n, k, d]$ code $C$ with a generator matrix $G = [ I ~|~ M ]$, where $M$ is a $k \times ( n - k )$ matrix, is MDS if and only if every square sub-matrix (formed from any $i$ rows and any $i$ columns, for any $i = 1, 2,\ldots, min \{k, n - k \}$) of $M$ is non-singular.
\end{theorem}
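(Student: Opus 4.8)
The plan is to route the argument through the standard column characterization of MDS codes and then translate that condition into a statement purely about submatrices of $M$. First I would recall that for a linear $[n,k,d]$ code the minimum distance equals the minimum Hamming weight of a nonzero codeword, and that the Singleton bound gives $d \le n-k+1$ unconditionally; hence $C$ is MDS precisely when it has no nonzero codeword of weight at most $n-k$, equivalently no nonzero codeword vanishing on some set of $k$ coordinates. Writing a codeword as $xG$ with $x$ a nonzero row vector, its $j$-th coordinate vanishes exactly when $x$ is orthogonal to the $j$-th column $g_j$ of $G$. Thus a nonzero codeword vanishing on a $k$-subset of coordinates exists iff the corresponding $k\times k$ matrix $[\,g_{j_1}\mid\cdots\mid g_{j_k}\,]$ has a nontrivial left kernel, i.e.\ is singular. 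This yields the first equivalence: $C$ is MDS $\iff$ every $k$ columns of $G$ are linearly independent.

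The second step is to convert the condition ``every $k$ columns of $G=[\,I\mid M\,]$ are independent'' into ``every square submatrix of $M$ is nonsingular.'' Here I would fix a choice of $k$ columns and split it into $k-i$ columns drawn from the identity block, indexed by a set $A\subseteq\{1,\dots,k\}$ with $|A|=k-i$, together with $i$ columns drawn from $M$, indexed by $C\subseteq\{1,\dots,n-k\}$ with $|C|=i$; keeping the total at $k$ forces these two sizes to agree. Let $R=\{1,\dots,k\}\setminus A$ be the complementary set of omitted identity indices, so $|R|=i$. The retained identity columns each carry a single nonzero entry, located in distinct rows indexed by $A$, so a Laplace expansion of the determinant of the selected $k\times k$ matrix along these unit columns collapses it, up to sign, to the determinant of the $i\times i$ submatrix of $M$ with rows indexed by $R$ and columns indexed by $C$. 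Consequently the selected $k$ columns are independent iff that $i\times i$ submatrix $M[R,C]$ is nonsingular. Letting $A,C$ (equivalently $R,C$) range over all admissible choices and letting $i=1,\dots,\min\{k,n-k\}$ sweeps out exactly all square submatrices of $M$, completing the reduction and hence the theorem.

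The main obstacle I anticipate is bookkeeping rather than conceptual: pinning down the precise index correspondence in the second step---namely that omitting the $i$ identity columns indexed by $R$ is what selects the $i$ rows of $M$ appearing in the relevant submatrix, while the chosen $M$-columns supply its columns---and verifying that the expansion along the retained unit columns genuinely reduces the $k\times k$ determinant to $\pm|M[R,C]|$ with no stray factors. One must also dispatch the two ends of the range: the case $i=0$ gives the all-identity selection, which is trivially independent and contributes no constraint, while the inequalities $i\le k$ and $i\le n-k$ (forced by the dimensions of $M$) explain the bound $i\le\min\{k,n-k\}$, so that the family of submatrices enumerated matches the theorem's stated range of $i$ exactly.
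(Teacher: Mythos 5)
The paper does not prove this statement; it is quoted verbatim from MacWilliams and Sloane \cite[page 321]{FJ77} and used as a black box, so there is no in-paper argument to compare against. Your proposal is correct and is essentially the classical proof from that reference: the reduction of MDS-ness to ``every $k$ columns of $G$ are independent'' via the weight/vanishing-coordinate argument is sound, and the Laplace expansion along the retained unit columns does collapse the $k\times k$ determinant to $\pm\det M[R,C]$ with the row set $R$ correctly identified as the complement of the chosen identity indices. The only cosmetic issue is the clash of notation between the code $C$ and your column index set $C$; otherwise the bookkeeping you flag as the main risk is handled correctly, including the trivial $i=0$ case and the bound $i\le\min\{k,n-k\}$.
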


\begin{definition}
    A matrix $M$ of order $n$ is said to be an MDS matrix if $[I~|~M]$ is a generator matrix of an $[2n,n]$ MDS code.
\end{definition}

Another way to define an MDS matrix is as follows:

\begin{fact}
    A square matrix $M$ is an MDS matrix if and only if every square sub-matrix of $M$ is non-singular. 
\end{fact}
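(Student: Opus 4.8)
The plan is to derive this Fact directly from the preceding definition together with the cited MDS characterization theorem, specialized to the parameters at hand. First I would unwind the definition: $M$ is an MDS matrix of order $n$ precisely when the block matrix $[I~|~M]$ is a generator matrix of an $[2n, n]$ MDS code. The task then reduces to translating "generates an MDS code" into a statement about the submatrices of $M$.

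Next I would invoke the quoted theorem, being careful about the clash of symbols. In the theorem the letter $n$ denotes the \emph{code length} and $k$ the dimension, with $M$ of size $k \times (n-k)$; in the Fact the letter $n$ denotes the \emph{order of the matrix} $M$. Matching the two, our code has length $2n$ and dimension $k = n$, so in the theorem's notation the parameter block is an $[2n, n]$ code and the associated matrix has size $n \times (2n - n) = n \times n$, which is exactly our $M$. Substituting these values, the range over which submatrices must be checked becomes $i = 1, 2, \ldots, \min\{k,\, 2n - k\} = \min\{n, n\} = n$.

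From here the conclusion is immediate: the theorem states that $[I~|~M]$ generates an MDS code if and only if every submatrix of $M$ formed from any $i$ rows and any $i$ columns, for each $i = 1, \ldots, n$, is non-singular. Since $M$ is $n \times n$, these submatrices (ranging over all sizes from $1 \times 1$ up to the full $n \times n$ matrix $M$ itself) are exactly \emph{all} the square submatrices of $M$. Combining this equivalence with the definition yields that $M$ is MDS if and only if every square submatrix of $M$ is non-singular.

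The only point requiring genuine care, rather than any substantive difficulty, is the notational bookkeeping in the specialization step: one must confirm that $\min\{n, 2n-n\} = n$ so that the upper end $i = n$ is included, guaranteeing that the full matrix $M$ is itself among the submatrices being tested, and that no square submatrix of $M$ is left out. Once this is checked, the Fact is simply the $[2n,n]$ instance of the general theorem, so I would present it as a short corollary-style argument rather than an independent proof.
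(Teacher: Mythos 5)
Your derivation is correct and is exactly the route the paper intends: the Fact is stated immediately after the definition and the cited theorem precisely as its $[2n,n]$ specialization, and your bookkeeping ($k=n$, $\min\{n,2n-n\}=n$, so the checked submatrices are all square submatrices up to and including $M$ itself) fills in the only step the paper leaves implicit. No gaps.
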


\noindent One of the elementary row operations on matrices is multiplying a row of a matrix by a non-zero scalar. MDS property remains invariant under such operations. Thus, we have the following result regarding MDS matrices.

\begin{lemma}\cite{kcz}\label{Lemma_DMD_MDS}
    Let $M$ be an MDS matrix, then for any non-singular diagonal matrices $D_1$ and $D_2$, $D_1MD_2$ will also be an MDS matrix.
\end{lemma}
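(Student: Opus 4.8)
The plan is to reduce everything to the characterization of MDS matrices stated in the preceding Fact: a square matrix is MDS if and only if every one of its square submatrices is non-singular. Write $D_1 = \Diag(a_1, \ldots, a_n)$ and $D_2 = \Diag(b_1, \ldots, b_n)$; the non-singularity hypothesis means that every $a_i$ and every $b_j$ is nonzero, and the $(i,j)$ entry of $D_1 M D_2$ is simply $a_i (M)_{ij} b_j$. So the whole task is to show that every square submatrix of $D_1 M D_2$ is non-singular, and I would track how such a submatrix relates to the corresponding submatrix of $M$.

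Concretely, first I would fix any $k$ rows $i_1 < \cdots < i_k$ and any $k$ columns $j_1 < \cdots < j_k$ and observe that the resulting $k \times k$ submatrix of $D_1 M D_2$ factors as $\Diag(a_{i_1}, \ldots, a_{i_k})\, M' \,\Diag(b_{j_1}, \ldots, b_{j_k})$, where $M'$ is the submatrix of $M$ on the same row and column indices. This is the single step that needs a short verification: the left diagonal factor scales the selected rows and the right diagonal factor scales the selected columns, so the entries match $a_{i_p}(M)_{i_p j_q} b_{j_q}$ exactly.

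Taking determinants and using multiplicativity together with the formula $|D| = \prod d_i$ for a diagonal matrix, the determinant of this submatrix equals $\left(\prod_{p=1}^{k} a_{i_p}\right) |M'| \left(\prod_{q=1}^{k} b_{j_q}\right)$. Since $M$ is MDS, $|M'| \neq 0$ by the Fact, and all scalars $a_{i_p}, b_{j_q}$ are nonzero by hypothesis; hence the product is nonzero and the submatrix is non-singular. As the choice of rows, columns, and size $k$ was arbitrary, every square submatrix of $D_1 M D_2$ is non-singular, and applying the Fact in the reverse direction yields that $D_1 M D_2$ is MDS.

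The argument is essentially routine, so I do not anticipate a genuine obstacle; the only point that requires care is the bookkeeping in the submatrix factorization, namely handling the row and column index sets independently so that the left and right diagonal blocks pick out exactly $a_{i_1}, \ldots, a_{i_k}$ and $b_{j_1}, \ldots, b_{j_k}$, respectively.
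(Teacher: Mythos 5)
Your proof is correct and follows the same idea the paper relies on: it cites this lemma from the literature and justifies it only by the remark that the MDS property is invariant under scaling rows and columns by nonzero scalars, which is precisely the submatrix-determinant argument you carry out in detail. Your factorization of each $k\times k$ submatrix of $D_1 M D_2$ as a diagonal-times-submatrix-times-diagonal product is the right bookkeeping and fills in what the paper leaves implicit.
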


\noindent Implementing involutory or orthogonal MDS matrices is more advantageous since it enables the use of an equivalent circuit or nearly equivalent circuit for both the encryption and decryption processes.

\begin{definition}
An involutory matrix is defined as a square matrix $M$ that is self-invertible or, equivalently, $M = M^{-1}$.
\end{definition}

\begin{definition}
	An orthogonal matrix is defined as a square matrix $M$ whose inverse is its transpose or, equivalently, $M^{-1} = M^T$.
\end{definition}

In 2021, Cheon et al.~\cite{cheon2021semi} introduced the concept of semi-involutory property as a generalization of involutory property. The definition is provided below.

\begin{definition}\cite{cheon2021semi}\label{Def_SI}
    A non-singular matrix $M$ is said to be semi-involutory if there exist two non-singular diagonal matrices $D$ and $D'$ such that $M^{-1} = DMD'$.
\end{definition}

We refer to the diagonal matrices $D$ and $D'$ as the corresponding diagonal matrices of the semi-involutory matrix $M$. 

% It follows directly that for any two non-singular diagonal matrices $D_1$ and $D_2$, if $M$ is semi-involutory, then $D_1MD_2$ is also a semi-involutory matrix. This result is stated below.

% \begin{fact}\label{Fact_D1MD2_semi_inv}
%  For any two non-singular diagonal matrices $D_1$ and $D_2$, if $M$ is semi-involutory, then $D_1MD_2$ is also a semi-involutory matrix.
% \end{fact}

\noindent Furthermore, in 2012, Fiedler et al.~\cite{FIEDLER2012} generalized orthogonal matrices and termed them G-matrices, which we refer to as semi-orthogonal matrices throughout this paper. The definition of a semi-orthogonal matrix is as follows:

\begin{definition}~\cite{FIEDLER2012} 
A non-singular matrix $M$ is semi-orthogonal if there exist two non-singular diagonal matrices $D$ and $D'$ such that $M^{-T} = DMD'$, where $M^{-T}$ denotes the transpose of the matrix $M^{-1}$.
\end{definition}

Similar to semi-involutory matrices, we refer to the diagonal matrices $D$ and $D'$ as the corresponding diagonal matrices of the semi-orthogonal matrix $M$.

%If a matrix $M$ is semi-involutory (or semi-orthogonal) then $M^{-1}=DMD'$ (or $M^{-T}=DMD'$) for some non-singular diagonal matrices $D$ and $D'$. Thus, we have 

\begin{lemma}\label{lemma_D1MD2_SI_SO}
    For any two non-singular diagonal matrices $P$ and $Q$, if $M$ is semi-involutory (or semi-orthogonal), then $PMQ$ is also a semi-involutory (or semi-orthogonal) matrix.
\end{lemma}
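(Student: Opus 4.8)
The plan is to verify each case directly from the definition, the only conceptual point being that diagonal matrices are symmetric and that products of non-singular diagonal matrices are again non-singular and diagonal; this lets the conjugating factors $P$ and $Q$ be absorbed into fresh diagonal matrices.

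For the semi-involutory case, by hypothesis there exist non-singular diagonal matrices $D, D'$ with $M^{-1} = DMD'$. I would compute $(PMQ)^{-1} = Q^{-1}M^{-1}P^{-1}$ and substitute to get $(PMQ)^{-1} = Q^{-1}DMD'P^{-1}$. The aim is to rewrite this as $\widetilde{D}(PMQ)\widetilde{D}'$ for suitable non-singular diagonal matrices. Taking $\widetilde{D} = Q^{-1}DP^{-1}$ and $\widetilde{D}' = Q^{-1}D'P^{-1}$, the inner factors telescope: $\widetilde{D}(PMQ)\widetilde{D}' = Q^{-1}D(P^{-1}P)M(QQ^{-1})D'P^{-1} = Q^{-1}DMD'P^{-1}$, which matches. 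Since $\widetilde{D}$ and $\widetilde{D}'$ are products of non-singular diagonal matrices, they are again non-singular and diagonal, so $PMQ$ is semi-involutory.

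The semi-orthogonal case follows the same template, with the transpose as an extra ingredient. Here $M^{-T} = DMD'$, and I would compute $(PMQ)^{-T}$ using $(AB)^{-1} = B^{-1}A^{-1}$ and $(AB)^{T} = B^{T}A^{T}$, together with the fact that diagonal matrices are symmetric, so $P^{T} = P$ and $Q^{T} = Q$ and hence $P^{-T} = P^{-1}$, $Q^{-T} = Q^{-1}$. This gives $(PMQ)^{-T} = P^{-1}M^{-T}Q^{-1} = P^{-1}DMD'Q^{-1}$. Choosing $\widetilde{D} = P^{-1}DP^{-1}$ and $\widetilde{D}' = Q^{-1}D'Q^{-1}$, the same telescoping yields $\widetilde{D}(PMQ)\widetilde{D}' = P^{-1}DMD'Q^{-1}$, so $PMQ$ is semi-orthogonal.

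There is essentially no hard step: the whole argument is bookkeeping with the order-reversal of inverse and transpose and the symmetry and product-closure of diagonal matrices. The one point requiring minor care, and the closest thing to an obstacle, is in the semi-orthogonal case, where one must recall that transposing a diagonal matrix leaves it unchanged, so that $P$ and $Q$ reappear as $P^{-1}$ and $Q^{-1}$ (rather than their transposes) and can be absorbed cleanly into the new diagonal matrices. Once this is observed, both parts reduce to a one-line telescoping identity.
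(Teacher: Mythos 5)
Your proof is correct and follows essentially the same route as the paper: a direct verification from the definition, absorbing $P$ and $Q$ into new non-singular diagonal matrices via commutativity of diagonal matrices (your $\widetilde{D}=Q^{-1}DP^{-1}$ is the paper's $P^{-1}Q^{-1}D$, written in a different order). You additionally spell out the semi-orthogonal case, including the $P^T=P$ point, which the paper dismisses with ``similarly''; your details there are accurate.
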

\begin{proof}
    Let $P$ and $Q$ be two non-singular diagonal matrices. Suppose that $M$ is a semi-involutory matrix with corresponding diagonal matrices $D$ and $D'$, i.e., $M^{-1}=DMD'$. Now, we have
    \begin{eqnarray*}
    (PMQ)^{-1}&=&Q^{-1}M^{-1}P^{-1}\\
    &=&PP^{-1}Q^{-1}(DMD')P^{-1}QQ^{-1}\\
    &=&P^{-1}Q^{-1}D(PMQ)D'P^{-1}Q^{-1}.
    \end{eqnarray*}

    Since $P^{-1}Q^{-1}D$ and $D'P^{-1}Q^{-1}$ are also non-singular diagonal matrices, therefore, $PMQ$ is a semi-involutory matrix. Similarly, we can prove that if $M$ is a semi-orthogonal matrix, then $PMQ$ is also a semi-orthogonal matrix. \qed
\end{proof}

% \section{Interconnections between Involutory and Semi-involutory matrices and Orthogonal and Semi-orthogonal Matrices}\label{Sec:Interconnections}

\section{The Interconnections}\label{Sec:Interconnections}
In~\cite{Kumar_MDS2024}, the authors introduce a technique for generating all $n \times n$ MDS and involutory MDS matrices over $\mathbb{F}_{2^m}$. The proposed method involves first identifying $n \times n$ representative MDS matrices using a search-based approach. Subsequently, all $n \times n$ MDS and involutory MDS matrices can be obtained by multiplying two diagonal matrices with these representative matrices. To find all $n \times n$ representative MDS matrices, the authors define the representative matrix form $M_1$ over $\mathbb{F}_{2^m}^*$ as follows:

\begin{equation}\label{Eqn_The_matrixM1}
	M_1=
	\begin{pmatrix}
		1&1&\ldots&1\\
		1& & &\\
		\vdots & &R&\\
		1&  & &
	\end{pmatrix},
\end{equation}
where $R$ is a $(n-1) \times (n-1)$ matrix.

They also present the unique decomposition of a matrix $M$ over $\mathbb{F}_{2^m}^*$ in the form $M=D_1M_1D_2$, where $D_1$ and $D_2$ are non-singular diagonal matrices. The result is stated below.

\begin{theorem}\cite{Kumar_MDS2024}\label{Th_decomposition}
    Let $M = (m_{ij})$ be a $n \times n$ matrix over $\mathbb{F}_{2^m}^*$. Then, there exist unique $n \times n$ matrices $D_1$, $D_2$, and $M_1$ over $\mathbb{F}_{2^m}^*$ such that 
    \[M = D_1 M_1 D_2,\] 
    where $D_1$ and $D_2$ are diagonal matrices, the first entry of $D_2$ is $1$, and $M_1$ is a matrix of the form given in (\ref{Eqn_The_matrixM1}).
\end{theorem}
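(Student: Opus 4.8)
The plan is to prove existence by an explicit construction and then observe that the defining constraints force every parameter, so that uniqueness comes for free. Write $D_1 = \Diag(a_1, \ldots, a_n)$ and $D_2 = \Diag(b_1, \ldots, b_n)$ with $b_1 = 1$, and recall that $M_1$ has all its first-row and first-column entries equal to $1$. Since left- and right-multiplication by diagonal matrices merely rescales, the $(i,j)$ entry of $D_1 M_1 D_2$ equals $a_i (M_1)_{ij} b_j$. I would first read off the first column: for $j = 1$ we have $(M_1)_{i1} = 1$ and $b_1 = 1$, so $m_{i1} = a_i$ for every $i$. Next I would read off the first row: for $i = 1$ we have $(M_1)_{1j} = 1$, so $m_{1j} = a_1 b_j = m_{11} b_j$, forcing $b_j = m_{1j}/m_{11}$. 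These two families of equations determine $D_1$ and $D_2$ completely.

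With $D_1$ and $D_2$ fixed, the remaining entries of $M_1$ are then forced: for $i, j \geq 2$ the equation $m_{ij} = a_i (M_1)_{ij} b_j$ gives $(M_1)_{ij} = m_{ij}\, m_{11} / (m_{i1}\, m_{1j})$. The crucial point here, and essentially the only place the hypothesis is used, is that $M$ has all entries in $\mathbb{F}_{2^m}^*$, so each of $m_{ij}$, $m_{i1}$, $m_{1j}$, and $m_{11}$ is nonzero; hence the $a_i$, the $b_j$, and the computed $R$-block entries all lie in $\mathbb{F}_{2^m}^*$, so that $D_1$, $D_2$, and $M_1$ are genuine matrices over $\mathbb{F}_{2^m}^*$ of the required shapes. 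Substituting back confirms $D_1 M_1 D_2 = M$, establishing existence.

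For uniqueness, I would note that the derivation above was not a choice but a deduction: any triple $(D_1, M_1, D_2)$ meeting the stated constraints must satisfy $a_i = m_{i1}$, $b_j = m_{1j}/m_{11}$, and the displayed formula for $(M_1)_{ij}$, so the triple is unique. I would also remark why the normalization $b_1 = 1$ is essential: replacing $(D_1, D_2)$ by $(c D_1,\, c^{-1} D_2)$ for any $c \in \mathbb{F}_{2^m}^*$ leaves the product unchanged, so without pinning down one scalar the decomposition would be unique only up to this one-parameter scaling, and fixing $b_1 = 1$ removes exactly this degree of freedom. Since the entire argument is a direct matching of entries, I do not anticipate any substantive obstacle; the only genuine care points are tracking the nonvanishing of all entries (which relies on $M$ being over $\mathbb{F}_{2^m}^*$) and correctly accounting for this single scaling symmetry.
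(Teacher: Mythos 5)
Your argument is correct: reading off the first column gives $a_i = m_{i1}$, the first row gives $b_j = m_{1j}/m_{11}$, and the remaining entries of $M_1$ are then forced, with the hypothesis $M$ over $\mathbb{F}_{2^m}^*$ guaranteeing all quantities are nonzero and the normalization $b_1=1$ killing the only scaling ambiguity. The paper does not actually prove this statement---it is imported from the cited reference---but your entry-matching construction is the standard proof of it and is complete as written.
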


We will denote this unique decomposition as $M = \Phi(D_1, D_2, M_1)$. Throughout this section, all results are established over $\mathbb{F}_{2^m}^*$. Since MDS matrices inherently cannot contain zero entries, their properties align naturally with $\mathbb{F}_{2^m}^*$. To streamline the presentation, we will refer to the results as being over $\mathbb{F}_{2^m}$ when discussing MDS matrices throughout the paper.
% \vspace{1em}

\subsection{The Interconnection between Involutory and Semi-involutory Matrices}

In this section, we utilize the representative matrix form to derive the interconnections between semi-involutory and involutory matrices. Specifically, in Theorem~\ref{Th_count_IMDS}, we establish that the number of $n\times n$ involutory MDS matrices over $\mathbb{F}_{2^m}$ is equal to $(2^m - 1)^{n-1}$ times the count of semi-involutory MDS matrices of order $n$ of the form $M_1$ over $\mathbb{F}_{2^m}$. To derive Theorem~\ref{Th_count_IMDS}, we first reference the theorem presented in \cite{Kumar_MDS2024}.

\begin{theorem}~\cite{Kumar_MDS2024}\label{Th_inv_semi_inv_rel}
    Let $M_1=(c_{ij})$ be a matrix of order $n$ over $\mathbb{F}_{2^m}^*$, as specified in~(\ref{Eqn_The_matrixM1}), and $M_2=(d_{ij})$ be its inverse. Then $\Phi(D_1,D_2,M_1)$ will be an involutory matrix if and only if $\exists~ \alpha_i \in \mathbb{F}_{2^m}^{*}$ such that
	\begin{center}
		$d_{ij}=\alpha_i \alpha_j c_{ij}$, $1\leq i,j\leq n$.
	\end{center}
	Moreover, $D_1$ and $D_2$ must take the following form:
	\begin{center}
		$D_1=\Diag(\alpha_1,\lambda_2,\lambda_3,\ldots,\lambda_n)$ and $D_2=\Diag(1,\frac{\alpha_2}{\lambda_2},\frac{\alpha_3}{\lambda_3},\ldots,\frac{\alpha_n}{\lambda_n})$,
	\end{center}
	where $\lambda_i \in \mathbb{F}_{2^m}^{*}$ for $i=2,3,\ldots,n$.
\end{theorem}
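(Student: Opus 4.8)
The plan is to argue directly from the decomposition, reducing the involutory condition $M = M^{-1}$ to an entrywise scalar identity. Write $M = \Phi(D_1, D_2, M_1) = D_1 M_1 D_2$ with $D_1 = \Diag(a_1, \ldots, a_n)$ and $D_2 = \Diag(b_1, \ldots, b_n)$, where $b_1 = 1$ by the normalization in the $\Phi$-decomposition (Theorem~\ref{Th_decomposition}). Since inversion reverses the diagonal factors, $M^{-1} = D_2^{-1} M_1^{-1} D_1^{-1} = D_2^{-1} M_2 D_1^{-1}$. Reading off the $(i,j)$ entries of both matrices gives $M_{ij} = a_i c_{ij} b_j$ and $(M^{-1})_{ij} = b_i^{-1} d_{ij} a_j^{-1}$, so the involutory condition $M = M^{-1}$ is equivalent to the system of scalar equations $a_i b_j c_{ij} = b_i^{-1} a_j^{-1} d_{ij}$ for all $1 \le i, j \le n$.

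Rearranging this identity yields $d_{ij} = (a_i b_i)(a_j b_j)\, c_{ij}$, which already exhibits the required symmetric product structure. The key step is to set $\alpha_i := a_i b_i \in \mathbb{F}_{2^m}^{*}$; then the condition reads precisely $d_{ij} = \alpha_i \alpha_j c_{ij}$, establishing the forward implication. For the converse, given any $\alpha_i \in \mathbb{F}_{2^m}^{*}$ with $d_{ij} = \alpha_i \alpha_j c_{ij}$, one simply reverses the computation: whenever $D_1, D_2$ are chosen so that $a_i b_i = \alpha_i$ for every $i$, substituting back gives $(M^{-1})_{ij} = b_i^{-1} (\alpha_i \alpha_j c_{ij}) a_j^{-1} = a_i c_{ij} b_j = M_{ij}$, so $M$ is involutory.

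It remains to pin down the shapes of $D_1$ and $D_2$, which is where the normalization $b_1 = 1$ does the work. From $\alpha_i = a_i b_i$ and $b_1 = 1$ we obtain $a_1 = \alpha_1$, so the first diagonal entry of $D_1$ is forced to be $\alpha_1$. The remaining entries $a_2, \ldots, a_n$ of $D_1$ are free; renaming them $\lambda_2, \ldots, \lambda_n$ and solving $b_i = \alpha_i / a_i = \alpha_i / \lambda_i$ recovers exactly $D_1 = \Diag(\alpha_1, \lambda_2, \ldots, \lambda_n)$ and $D_2 = \Diag(1, \alpha_2/\lambda_2, \ldots, \alpha_n/\lambda_n)$ with $\lambda_i \in \mathbb{F}_{2^m}^{*}$ arbitrary. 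This simultaneously shows that every involutory $\Phi(D_1, D_2, M_1)$ has this form and that each admissible choice of the $\lambda_i$ produces one.

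I do not expect a genuine obstacle here: the whole argument is elementary index bookkeeping once the substitution $\alpha_i = a_i b_i$ is spotted. The only point needing a little care is tracking how the normalization $b_1 = 1$ collapses the two-parameter family $(a_i, b_i)$ into the single free parameter $\lambda_i$ together with the $M_1$-determined quantities $\alpha_i$; the products $a_i b_i$ are invariant under this collapse, which is exactly why the characterization comes out clean. As a sanity check on the statement, observe that $d_{ij} = \alpha_i \alpha_j c_{ij}$ is equivalent to $M_2 = A M_1 A$ with $A = \Diag(\alpha_1, \ldots, \alpha_n)$, i.e. $M_1^{-1} = A M_1 A$, so the condition says precisely that the representative $M_1$ is itself semi-involutory with both corresponding diagonal matrices equal to $A$.
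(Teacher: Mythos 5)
Your proof is correct. Note that the paper itself gives no proof of this statement: it is imported verbatim from the cited reference \cite{Kumar_MDS2024}, so there is nothing in this document to compare your argument against. Your derivation is the natural one and is complete: writing $M=D_1M_1D_2$ with $D_2$ normalized so that $b_1=1$, equating $(M)_{ij}=a_ic_{ij}b_j$ with $(M^{-1})_{ij}=b_i^{-1}d_{ij}a_j^{-1}$ entrywise, and substituting $\alpha_i:=a_ib_i$ reduces the involutory condition exactly to $d_{ij}=\alpha_i\alpha_jc_{ij}$, and the normalization $b_1=1$ then forces $a_1=\alpha_1$ while leaving $a_2,\dots,a_n$ as the free parameters $\lambda_i$. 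Two minor remarks: first, you correctly read the ``if and only if'' together with the ``moreover'' clause (the backward implication only holds for $D_1,D_2$ of the stated form, since the existence of the $\alpha_i$ is a property of $M_1$ alone); second, it is worth observing, as your final sanity check implicitly does, that the $\alpha_i$ are in fact uniquely determined by $M_1$ when they exist, since $c_{11}=1$ gives $\alpha_1^2=d_{11}$ (unique square root in characteristic $2$) and then $\alpha_j=d_{1j}/\alpha_1$; this is what makes the count of $(2^m-1)^{n-1}$ choices in Corollary~\ref{Coro_SI} exact.
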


\begin{corollary}\label{Coro_SI}
	The number of involutory matrices of order $n$ over $\mathbb{F}_{2^m}^{*}$ is given by 
	\[
		(2^m - 1)^{n-1} \cdot N_1,
	\]
	where $N_1$ denotes the number of semi-involutory matrices of order $n$ over $\mathbb{F}_{2^m}^{*}$ of the form $M_1$, as specified in (\ref{Eqn_The_matrixM1}).
\end{corollary}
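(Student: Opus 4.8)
The plan is to partition the set of all $n \times n$ involutory matrices over $\mathbb{F}_{2^m}^{*}$ according to the unique decomposition $M = \Phi(D_1, D_2, M_1)$ guaranteed by Theorem~\ref{Th_decomposition}, and to show that the fiber consisting of the involutory matrices sharing a fixed representative $M_1$ is nonempty precisely when $M_1$ is semi-involutory, and that each such nonempty fiber has exactly $(2^m-1)^{n-1}$ elements. Summing over the $N_1$ admissible representatives then yields the claimed count. Concretely, Theorem~\ref{Th_inv_semi_inv_rel} already tells us that $\Phi(D_1,D_2,M_1)$ is involutory exactly when the inverse $M_2 = (d_{ij})$ of $M_1$ satisfies $d_{ij} = \alpha_i \alpha_j c_{ij}$ for some $\alpha_i \in \mathbb{F}_{2^m}^{*}$, together with $D_1 = \Diag(\alpha_1, \lambda_2, \ldots, \lambda_n)$ and $D_2 = \Diag(1, \alpha_2/\lambda_2, \ldots, \alpha_n/\lambda_n)$. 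So the two things left to verify are: (i) this ``symmetric'' scaling condition on $M_1$ is equivalent to $M_1$ being semi-involutory, and (ii) the fiber over each admissible $M_1$ has size $(2^m-1)^{n-1}$.

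The main obstacle is direction (i): showing that a semi-involutory $M_1$ of the form~(\ref{Eqn_The_matrixM1}) necessarily admits the symmetric representation $d_{ij} = \alpha_i\alpha_j c_{ij}$, rather than merely a two-sided one $d_{ij} = a_i b_j c_{ij}$ with possibly distinct left and right diagonal scalings. The forward implication is immediate: if $d_{ij} = \alpha_i \alpha_j c_{ij}$, then $M_1^{-1} = D M_1 D$ with $D = \Diag(\alpha_1,\ldots,\alpha_n)$, so $M_1$ is semi-involutory. For the converse, I would start from $M_1^{-1} = D M_1 D'$ with $D = \Diag(a_i)$ and $D' = \Diag(b_j)$, so that $d_{ij} = a_i b_j c_{ij}$, and invert this identity. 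Substituting back gives $M_1 = (D'^{-1}D)\, M_1\, (D'^{-1}D)^{-1}$, i.e. the diagonal matrix $E = \Diag(a_i/b_i)$ commutes with $M_1$. Since $M_1$ has no zero entries, the relation $e_i c_{ij} = c_{ij} e_j$ forces $e_i = e_j$ for all $i,j$, so $a_i/b_i = e$ is a constant. Here the characteristic-two setting is essential: the Frobenius map is a bijection, so $e$ has a unique square root $k$, and setting $\alpha_i = a_i/k$ gives $\alpha_i\alpha_j = a_i a_j / e = a_i b_j$, which recovers the symmetric form. This settles the equivalence in (i).

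For (ii), I would first note that the scalars $\alpha_i$ attached to an admissible $M_1$ are unique: if $\alpha_i\alpha_j = \alpha_i'\alpha_j'$ for all $i,j$, then $\alpha_i/\alpha_i'$ is independent of $i$ and equals its own reciprocal, hence equals $1$ in characteristic two, so $\alpha_i' = \alpha_i$. Consequently $\alpha_1$, the forced first entry of $D_1$, and all the quantities $\alpha_i$ appearing in $D_2$ are determined by $M_1$ alone, and the only freedom in the pair $(D_1,D_2)$ from Theorem~\ref{Th_inv_semi_inv_rel} lies in the $n-1$ parameters $\lambda_2,\ldots,\lambda_n$, each ranging freely over $\mathbb{F}_{2^m}^{*}$. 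By the uniqueness half of Theorem~\ref{Th_decomposition}, distinct choices of $(\lambda_2,\ldots,\lambda_n)$ produce distinct involutory matrices $M = D_1 M_1 D_2$ with the same fixed representative $M_1$, so each admissible fiber has exactly $(2^m-1)^{n-1}$ elements. Finally, since Theorem~\ref{Th_decomposition} assigns to every involutory matrix a unique representative $M_1$, and part (i) identifies these representatives as exactly the semi-involutory matrices of the form~(\ref{Eqn_The_matrixM1}), summing $(2^m-1)^{n-1}$ over the $N_1$ such representatives gives the total count $(2^m-1)^{n-1}\cdot N_1$.
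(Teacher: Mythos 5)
Your proof is correct and follows essentially the same route as the paper: decompose every involutory matrix via Theorem~\ref{Th_decomposition}, use Theorem~\ref{Th_inv_semi_inv_rel} to identify the admissible representatives $M_1$ and to parametrize each fiber by the free scalars $\lambda_2,\ldots,\lambda_n$, and sum $(2^m-1)^{n-1}$ over the $N_1$ representatives. The one place where you go beyond the paper is your step (i): the paper only verifies that an involutory $M$ forces $M_1^{-1}=D_2D_1M_1D_2D_1$ (so $M_1$ is semi-involutory), and then asserts without justification that, conversely, every semi-involutory $M_1$ of the form~(\ref{Eqn_The_matrixM1}) yields $(2^m-1)^{n-1}$ involutory matrices --- which tacitly requires that an arbitrary two-sided relation $d_{ij}=a_ib_jc_{ij}$ can always be rewritten in the symmetric form $d_{ij}=\alpha_i\alpha_jc_{ij}$ demanded by Theorem~\ref{Th_inv_semi_inv_rel}. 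Your commutation argument ($\Diag(a_i/b_i)$ commutes with a matrix having no zero entries, hence is scalar, and the scalar has a square root in characteristic two) supplies exactly this missing converse, so your write-up is in fact tighter than the paper's own proof.
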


\begin{proof}
	Let $M$ be an involutory matrix of order $n$ over $\mathbb{F}_{2^m}^{*}$. Then by Theorem~\ref{Th_decomposition}, there exists a unique $M_1$ of the form of~(\ref{Eqn_The_matrixM1}) such that $M=D_1M_1D_2$ for some non-singular diagonal matrices $D_1$ and $D_2$. 
    Also, from Theorem~\ref{Th_inv_semi_inv_rel}, when $M$ is involutory, $D_1$ and $D_2$ have the particular form. We can see that  
	\begin{equation*}
		\begin{aligned}
			&M_1^{-1}=D_2D_1M_1D_2D_1.\\
		\end{aligned}
	\end{equation*}
    
    Since $D_2D_1$ is again a non-singular diagonal matrix, from Definition~\ref{Def_SI}, $M_1$ is a semi-involutory matrix with the corresponding diagonal matrices $D$ and $D'$, where $D = D'=D_2D_1$. 
        
	Moreover, it is possible to construct other involutory matrices by selecting different values for $D_1$ and $D_2$. However, according to Theorem~\ref{Th_inv_semi_inv_rel}, the matrix $\Phi(D_1, D_2, M_1)$ will be involutory only when $D_1$ and $D_2$ assume the specific form described. The scalar values $\lambda_i$ for $i=2,3,\ldots,n$ can be chosen as any non-zero element from the field $\mathbb{F}_{2^m}$. Thus, in total, we have $(2^m - 1)^{n-1}$ possible choices for $D_1$ and $D_2$. 
    
    Hence, for a given semi-involutory matrix $M_1$, we can construct $(2^m - 1)^{n-1}$ distinct involutory matrices. Since each matrix $M$ over $\mathbb{F}_{2^m}^*$ corresponds uniquely to an $M_1$, we conclude that the total number of involutory matrices over $\mathbb{F}_{2^m}^*$ is given by

    \begin{equation*}
        \begin{aligned}
            (2^m - 1)^{n-1} \cdot N_1,
        \end{aligned}
    \end{equation*}
    where $N_1$ represents the number of semi-involutory matrices of order $n$ over $\mathbb{F}_{2^m}^{*}$ of the form $M_1$. \qed
\end{proof}

% The relationship between $M_1$ and $\Phi(D_1,D_2,M_1)=M$ as a semi-involutory and involutory matrix is established in the following theorem.

Thus, from the above corollary, we can easily derive the following result for involutory MDS matrices over the finite field $\mathbb{F}_{2^m}$.

\begin{theorem}\label{Th_count_IMDS}
	The number of involutory MDS matrices of order $n$ over $\mathbb{F}_{2^m}$ is given by 
	\[
		(2^m - 1)^{n-1} \cdot N_2,
	\]
	where $N_2$ denotes the number of semi-involutory MDS matrices of order $n$ over $\mathbb{F}_{2^m}$ of the form $M_1$, as specified in (\ref{Eqn_The_matrixM1}).
\end{theorem}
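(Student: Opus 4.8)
The plan is to mirror the counting argument of Corollary~\ref{Coro_SI}, restricting attention to MDS matrices and invoking Lemma~\ref{Lemma_DMD_MDS} to guarantee that the MDS property is transported faithfully across the decomposition $M = D_1 M_1 D_2$. The single new ingredient beyond Corollary~\ref{Coro_SI} is the observation that, for non-singular diagonal matrices $D_1$ and $D_2$, the matrix $D_1 M_1 D_2$ is MDS if and only if $M_1$ is MDS. This is immediate from Lemma~\ref{Lemma_DMD_MDS} applied in both directions: forward via $M_1 \mapsto D_1 M_1 D_2$, and backward via $D_1^{-1}(D_1 M_1 D_2)D_2^{-1} = M_1$, using that $D_1^{-1}$ and $D_2^{-1}$ are again non-singular diagonal matrices.

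First I would set up the forward direction. Let $M$ be an involutory MDS matrix of order $n$ over $\mathbb{F}_{2^m}$. By Theorem~\ref{Th_decomposition}, there is a unique matrix $M_1$ of the form~(\ref{Eqn_The_matrixM1}) together with non-singular diagonal matrices $D_1, D_2$ satisfying $M = D_1 M_1 D_2$. As shown in the proof of Corollary~\ref{Coro_SI}, the involutivity of $M$ forces $M_1$ to be semi-involutory, with corresponding diagonal matrices $D = D' = D_2 D_1$. Moreover, since $M$ is MDS, the observation above ensures that $M_1$ is also MDS. Hence every involutory MDS matrix yields a semi-involutory MDS matrix of the form~(\ref{Eqn_The_matrixM1}).

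Next I would establish the reverse direction and perform the count. Fix a semi-involutory MDS matrix $M_1$ of the form~(\ref{Eqn_The_matrixM1}); there are $N_2$ such matrices. By Theorem~\ref{Th_inv_semi_inv_rel}, the diagonal matrices $D_1, D_2$ for which $M = D_1 M_1 D_2$ is involutory are exactly those of the prescribed form, parameterized by the free scalars $\lambda_2, \ldots, \lambda_n \in \mathbb{F}_{2^m}^*$; this gives precisely $(2^m - 1)^{n-1}$ distinct involutory matrices $M$. By Lemma~\ref{Lemma_DMD_MDS}, each such $M$ is MDS. Since the decomposition in Theorem~\ref{Th_decomposition} is unique, distinct pairs $(M_1, (\lambda_2, \ldots, \lambda_n))$ produce distinct matrices $M$, so the assignment from such pairs to involutory MDS matrices is a bijection, and the two directions together give the count $(2^m - 1)^{n-1} \cdot N_2$.

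The only point requiring care is the bidirectional transfer of the MDS property under diagonal multiplication, since restricting to MDS matrices on the involutory side must correspond exactly to restricting to MDS matrices on the $M_1$ side. Once this equivalence is secured by Lemma~\ref{Lemma_DMD_MDS}, the enumeration is a direct specialization of Corollary~\ref{Coro_SI} to the MDS setting, and I do not anticipate any further obstacle; in particular, the multiplier $(2^m - 1)^{n-1}$ is unaffected because the admissible choices of $(D_1, D_2)$ depend only on the involutivity constraint and not on the MDS constraint.
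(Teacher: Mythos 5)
Your proposal is correct and follows essentially the same route as the paper: decompose $M = D_1 M_1 D_2$ via Theorem~\ref{Th_decomposition}, use Theorem~\ref{Th_inv_semi_inv_rel} to parameterize the $(2^m-1)^{n-1}$ admissible pairs $(D_1, D_2)$ per representative, and transfer the MDS property with Lemma~\ref{Lemma_DMD_MDS}. The only difference is presentational: the paper simply cites Corollary~\ref{Coro_SI} and notes the MDS transfer, whereas you unfold the counting argument explicitly and make the bidirectional MDS equivalence under diagonal multiplication more precise, which is a harmless (indeed slightly more careful) restatement of the same argument.
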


\begin{proof}
	Let $M = \Phi(D_1, D_2, M_1)$ be an involutory MDS matrix of order $n$ over $\mathbb{F}_{2^m}$. Since $D_1$ and $D_2$ are non-singular matrices, by Lemma~\ref{Lemma_DMD_MDS}, $M_1$ is also an MDS matrix. Additionally, as $M$ is an involutory matrix, Theorem~\ref{Th_inv_semi_inv_rel} implies that $M_1$ is a semi-involutory matrix.
	
    Now, since each MDS matrix $M$ over $\mathbb{F}_{2^m}$ corresponds uniquely to an MDS matrix of the form $M_1$, we conclude from Corollary~\ref{Coro_SI}~\footnote{It is worth noting that the result in Corollary~\ref{Coro_SI} is stated for $\mathbb{F}_{2^m}^*$. However, since MDS matrices cannot contain zero entries, this condition is inherently satisfied by the results over $\mathbb{F}_{2^m}^*$. The same holds for the results in Theorems~\ref{Th_count_SIMDS}, \ref{Th_count_Orthogonal_MDS}, and \ref{Th_count_Semi-OMDS}, as they are derived from results over $\mathbb{F}_{2^m}^*$.} that the total number of involutory MDS matrices over $\mathbb{F}_{2^m}$ is given by
	\begin{equation*}
		\begin{aligned}
			(2^m - 1)^{n-1} \cdot N_2,
		\end{aligned}
	\end{equation*}
	where $N_2$ represents the number of semi-involutory MDS matrices of order $n$ over $\mathbb{F}_{2^m}$ of the form $M_1$.
    \qed
\end{proof}

Now, we will discuss the results related to the semi-involutory MDS matrices over $\mathbb{F}_{2^m}$. For this purpose, we need the following lemma.

\begin{lemma}\label{Lemma_complete_SI}
	The number of semi-involutory matrices of order $n$ over $\mathbb{F}_{2^m}^{*}$ is given by 
	\[
		(2^m - 1)^{2n-1} \cdot N_1,
	\]
	where $N_1$ denotes the number of semi-involutory matrices of order $n$ over $\mathbb{F}_{2^m}^{*}$ of the form $M_1$, as specified in (\ref{Eqn_The_matrixM1}).
\end{lemma}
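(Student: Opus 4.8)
The plan is to combine the unique decomposition from Theorem~\ref{Th_decomposition} with the conjugation-invariance of the semi-involutory property established in Lemma~\ref{lemma_D1MD2_SI_SO}. By Theorem~\ref{Th_decomposition}, every matrix $M$ over $\mathbb{F}_{2^m}^*$ can be written uniquely as $M = \Phi(D_1, D_2, M_1) = D_1 M_1 D_2$, where $D_1$ is a non-singular diagonal matrix, $D_2$ is a non-singular diagonal matrix whose first entry is $1$, and $M_1$ has the representative form in~(\ref{Eqn_The_matrixM1}). The strategy is to count the semi-involutory matrices $M$ by grouping them according to their associated representative $M_1$ and then multiplying by the number of admissible diagonal factors.

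First I would establish the key equivalence: $M = D_1 M_1 D_2$ is semi-involutory if and only if $M_1$ is semi-involutory. The reverse direction is immediate from Lemma~\ref{lemma_D1MD2_SI_SO} applied with $P = D_1$ and $Q = D_2$; the forward direction follows from the same lemma applied with $P = D_1^{-1}$ and $Q = D_2^{-1}$, since then $M_1 = D_1^{-1} M D_2^{-1}$. Consequently, the semi-involutory status of $M$ is completely determined by whether its unique representative $M_1$ is semi-involutory.

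Next I would count the admissible diagonal pairs. Writing $D_1 = \Diag(a_1, \ldots, a_n)$ with each $a_i \in \mathbb{F}_{2^m}^*$ gives $(2^m - 1)^n$ choices, and writing $D_2 = \Diag(1, b_2, \ldots, b_n)$ with the first entry fixed to $1$ and each remaining $b_i \in \mathbb{F}_{2^m}^*$ gives $(2^m - 1)^{n-1}$ choices. Hence there are $(2^m - 1)^n \cdot (2^m - 1)^{n-1} = (2^m - 1)^{2n-1}$ admissible pairs $(D_1, D_2)$. The uniqueness asserted in Theorem~\ref{Th_decomposition} guarantees that distinct triples $(D_1, D_2, M_1)$ produce distinct matrices $M$, so this enumeration involves no overcounting.

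Finally, combining these facts: the semi-involutory matrices over $\mathbb{F}_{2^m}^*$ are precisely those $M = \Phi(D_1, D_2, M_1)$ for which the representative $M_1$ is one of the $N_1$ semi-involutory matrices of form $M_1$, each such $M_1$ paired with one of the $(2^m - 1)^{2n-1}$ admissible diagonal pairs. Summing over all choices gives the total $(2^m - 1)^{2n-1} \cdot N_1$. I do not anticipate a serious obstacle, as the only nonroutine ingredient is the biconditional, which is a direct consequence of the invariance lemma; the rest is a bijective counting argument made rigorous by the uniqueness of the decomposition.
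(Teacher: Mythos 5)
Your proposal is correct and follows essentially the same route as the paper: unique decomposition via Theorem~\ref{Th_decomposition}, the biconditional that $M=\Phi(D_1,D_2,M_1)$ is semi-involutory iff $M_1$ is, and then counting the $(2^m-1)^{2n-1}$ admissible diagonal pairs. The only (minor, and arguably cleaner) difference is that for the forward direction you apply Lemma~\ref{lemma_D1MD2_SI_SO} with $P=D_1^{-1}$, $Q=D_2^{-1}$ instead of the paper's explicit computation of $M_1^{-1}=D_2DD_1M_1D_2D'D_1$; both are valid.
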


\begin{proof}
	Let $M=\Phi(D_1,D_2,M_1)$ be a semi-involutory matrix over $\mathbb{F}_{2^m}^{*}$ with corresponding diagonal matrices $D$ and $D'$. Thus, we have 

	\begin{equation*}
		\begin{aligned}
		& M^{-1} =DMD'\\
		\implies & M (DMD')=I \\
		\implies & D_1M_1D_2(D(D_1M_1D_2) D')= I \\
		\implies & D_1M_1D_2(DD_1M_1D_2D')= I \\
		\implies & M_1(D_2DD_1M_1D_2D' D_1)=I.
		\end{aligned}
	\end{equation*}

	Thus, the inverse of the matrix $M_1$ is given by 
	\begin{equation*}
		\begin{aligned}
			&M_1^{-1}=D_2DD_1M_1D_2D' D_1.
		\end{aligned}
	\end{equation*}
	
	Now, since $D_2DD_1$ and $D_2D'D_1$ are non-singular diagonal matrices, we can say that $M_1$ is also a semi-involutory matrix.

	\noindent Conversely, suppose that $M_1$ is a semi-involutory matrix. Then, by Lemma~\ref{lemma_D1MD2_SI_SO}, we can say that $M=\Phi(D_1,D_2,M_1)=D_1M_1D_2$ is also a semi-involutory matrix.

	Therefore, we have established that $M=\Phi(D_1,D_2,M_1)$ is a semi-involutory matrix if and only if $M_1$ is a semi-involutory matrix.

	Now, by selecting different values for $D_1$ and $D_2$, it is possible to construct other semi-involutory matrices from $M_1$. According to Theorem~\ref{Th_decomposition}, we have $(2^m - 1)^{2n-1}$ possible choices for $D_1$ and $D_2$. Therefore, we conclude that the total number of semi-involutory matrices over $\mathbb{F}_{2^m}^*$ is given by
	\begin{equation*}
		\begin{aligned}
			(2^m - 1)^{2n-1} \cdot N_1,
		\end{aligned}
	\end{equation*}
	where $N_1$ represents the number of semi-involutory matrices of order $n$ over $\mathbb{F}_{2^m}^{*}$ of the form $M_1$.
    \qed
\end{proof}

\noindent Similar to Theorem~\ref{Th_count_IMDS}, using the above lemma, we can easily derive the result for semi-involutory MDS matrices over $\mathbb{F}_{2^m}$. Thus, we provide the result without proof.

\begin{theorem}\label{Th_count_SIMDS}
	The number of semi-involutory MDS matrices of order $n$ over $\mathbb{F}_{2^m}$ is given by
	\[
		(2^m - 1)^{2n-1} \cdot N_2,
	\]
	where $N_2$ denotes the number of semi-involutory MDS matrices of order $n$ over $\mathbb{F}_{2^m}$ of the form $M_1$, as specified in (\ref{Eqn_The_matrixM1}).
\end{theorem}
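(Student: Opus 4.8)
The plan is to mirror the argument used for Theorem~\ref{Th_count_IMDS}, but with Corollary~\ref{Coro_SI} replaced by Lemma~\ref{Lemma_complete_SI}, and to verify at each step that the reasoning there is compatible with the MDS property. Concretely, I would begin with an arbitrary semi-involutory MDS matrix $M$ of order $n$ over $\mathbb{F}_{2^m}$ and invoke Theorem~\ref{Th_decomposition} to write it uniquely as $M = \Phi(D_1, D_2, M_1)$, where $D_1, D_2$ are non-singular diagonal matrices with the first entry of $D_2$ equal to $1$, and $M_1$ has the form given in~(\ref{Eqn_The_matrixM1}).

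The first step is to transfer both defining properties from $M$ down to $M_1$. Since $D_1$ and $D_2$ are non-singular, Lemma~\ref{Lemma_DMD_MDS} guarantees that $M_1 = D_1^{-1} M D_2^{-1}$ is again MDS. For the semi-involutory property I would reuse the exact computation from the proof of Lemma~\ref{Lemma_complete_SI}: writing $M^{-1} = DMD'$ and substituting $M = D_1 M_1 D_2$ gives $M_1^{-1} = (D_2 D D_1) M_1 (D_2 D' D_1)$, and since $D_2 D D_1$ and $D_2 D' D_1$ are non-singular diagonal matrices, $M_1$ is semi-involutory. Thus every semi-involutory MDS matrix $M$ determines a semi-involutory MDS matrix $M_1$ of the prescribed form.

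The converse direction together with the counting is the crux. Given a semi-involutory MDS matrix $M_1$ of the form~(\ref{Eqn_The_matrixM1}) and any non-singular diagonal matrices $D_1, D_2$ (with the leading entry of $D_2$ equal to $1$), Lemma~\ref{lemma_D1MD2_SI_SO} shows that $D_1 M_1 D_2$ is semi-involutory and Lemma~\ref{Lemma_DMD_MDS} shows it is MDS; by the uniqueness asserted in Theorem~\ref{Th_decomposition}, distinct pairs $(D_1, D_2)$ yield distinct matrices. The number of admissible pairs is $(2^m-1)^{2n-1}$, since the $n$ diagonal entries of $D_1$ range freely over $\mathbb{F}_{2^m}^*$ while $D_2$ contributes only $n-1$ free entries, its leading entry being pinned to $1$. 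The one point demanding care -- and the reason the exponent here is $2n-1$ rather than the $n-1$ appearing in Theorem~\ref{Th_count_IMDS} -- is that, unlike the involutory case governed by the rigid form of $D_1, D_2$ in Theorem~\ref{Th_inv_semi_inv_rel}, the semi-involutory property places no constraint whatsoever on the choice of $(D_1, D_2)$, so all $(2^m-1)^{2n-1}$ pairs are genuinely realized.

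Combining the two directions establishes a bijection between the semi-involutory MDS matrices of order $n$ over $\mathbb{F}_{2^m}$ and the pairs consisting of a semi-involutory MDS matrix of the form~(\ref{Eqn_The_matrixM1}) together with an admissible pair $(D_1, D_2)$. Counting the latter set gives $(2^m-1)^{2n-1} \cdot N_2$, which is the claimed formula.
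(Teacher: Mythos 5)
Your proposal is correct and follows essentially the same route the paper intends: the paper omits the proof of this theorem, noting that it follows from Lemma~\ref{Lemma_complete_SI} in the same way that Theorem~\ref{Th_count_IMDS} follows from Corollary~\ref{Coro_SI}, which is precisely the argument you have written out. Your explicit remark on why the exponent is $2n-1$ here (no rigidity constraint on $(D_1,D_2)$, unlike the involutory case) correctly identifies the one point where the two counts diverge.
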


\noindent In the following lemma, we mention the explicit formula obtained in~\cite{gmt} for enumerating all $3\times 3$ involutory MDS matrices over $\mathbb{F}_{2^m}$.

\begin{lemma}\cite{gmt}\label{lemma:inv_3_MDS_count}
    For $m\geq3$, the number of all $3 \times 3$ involutory MDS matrices over $\mathbb{F}_{2^m}$ is given by $(2^m-1)^2 (2^m-2)(2^m-4)$.
\end{lemma}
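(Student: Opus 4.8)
The plan is to exploit characteristic two to convert the involutory condition $M = M^{-1}$, i.e.\ $M^2 = I$, into the nilpotency statement $(M+I)^2 = 0$. Writing $N = M + I$, a rank count shows that $N^2 = 0$ forces $\operatorname{im}(N) \subseteq \ker(N)$, hence $\operatorname{rank}(N) \le 3 - \operatorname{rank}(N)$ and so $\operatorname{rank}(N) \le 1$. The case $\operatorname{rank}(N) = 0$ gives $M = I$, which is never MDS, so every $3 \times 3$ involutory MDS matrix has the form
\[ M = I + u\,v^{\top}, \qquad u, v \in \mathbb{F}_{2^m}^{3} \setminus \{0\}, \]
where the rank-one factorisation must satisfy $N^2 = (v^{\top} u)\, u v^{\top} = 0$, i.e.\ the single scalar relation $v^{\top} u = \sum_{i} u_i v_i = 0$. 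I would record at the outset that this factorisation is unique only up to the scaling $(u,v) \mapsto (c\,u,\, c^{-1} v)$ with $c \in \mathbb{F}_{2^m}^{*}$, an action that is free since $u \neq 0$; each matrix $M$ therefore corresponds to exactly $2^m - 1$ admissible pairs $(u,v)$.

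Next I would translate the MDS requirement into conditions on $u$ and $v$. The off-diagonal entries are $M_{ij} = u_i v_j$ and the diagonal entries are $M_{ii} = 1 + u_i v_i$, so nonvanishing of all entries forces $u_i \neq 0$, $v_j \neq 0$ for all $i,j$, together with $u_i v_i \neq 1$ for each $i$. The full determinant is $\det(I + u v^{\top}) = 1 + v^{\top} u = 1$, which is automatically nonzero. The crucial step is the nine $2 \times 2$ minors: a direct expansion, simplified using the relation $v^{\top} u = 0$, shows that every minor built from one diagonal position collapses to a single product $u_i v_j$ ($i \neq j$), hence is automatically nonzero, while every minor built from two diagonal positions collapses to $1 + u_k v_k$ for the remaining index $k$, i.e.\ to the diagonal conditions already in hand. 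Thus the MDS property imposes no constraints beyond $u_i, v_i \in \mathbb{F}_{2^m}^{*}$, $u_i v_i \neq 1$, and $\sum_i u_i v_i = 0$.

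With the constraints isolated, the count becomes combinatorial. I would substitute $w_i = u_i v_i$, which ranges over $\mathbb{F}_{2^m} \setminus \{0,1\}$, and observe that for each fixed admissible triple $(w_1, w_2, w_3)$ the number of pairs $(u,v)$ producing it is $(2^m - 1)^3$, since in each coordinate $u_i$ is an arbitrary nonzero element and $v_i = w_i/u_i$ is then determined and nonzero. It remains to count triples with $w_i \notin \{0,1\}$ and $w_1 + w_2 + w_3 = 0$: choosing $w_1, w_2$ freely and setting $w_3 = w_1 + w_2$, one excludes the $2^m - 2$ pairs with $w_1 = w_2$ (giving $w_3 = 0$) and the $2^m - 2$ pairs with $w_2 = w_1 + 1$ (giving $w_3 = 1$), which are disjoint, leaving $(2^m - 2)^2 - 2(2^m - 2) = (2^m - 2)(2^m - 4)$ triples. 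Multiplying by $(2^m - 1)^3$ and dividing by the scaling redundancy $2^m - 1$ yields $(2^m - 1)^2 (2^m - 2)(2^m - 4)$, as claimed.

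The main obstacle I anticipate is the minor computation in the second paragraph: one must verify that all nine $2 \times 2$ minors genuinely reduce, under the orthogonality relation $v^{\top} u = 0$, to conditions already implied by the nonvanishing of the entries, so that no hidden independent constraint survives to spoil the clean product form. Everything else---the nilpotency reduction, the rank-one parametrisation, and the combinatorial count---is routine once this reduction is established; the hypothesis $m \ge 3$ simply ensures the field is large enough for the factor $2^m - 4$ to reflect an actual nonempty count.
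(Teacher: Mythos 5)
Your proof is correct, but note that the paper does not prove this statement at all: Lemma~\ref{lemma:inv_3_MDS_count} is imported verbatim from~\cite{gmt}, so there is no internal argument to compare against. Your derivation is a clean, self-contained alternative. The key observations all check out: in characteristic $2$, $M^2=I$ is equivalent to $(M+I)^2=0$, the rank inequality forces $N=M+I$ to have rank at most $1$, and the identity $N^2=(v^{\top}u)\,uv^{\top}$ turns involutivity into the single relation $\sum_i u_iv_i=0$. The minor reduction you flag as the main risk does go through: for a $3\times 3$ matrix every $2\times 2$ submatrix meets the diagonal, a submatrix meeting it in one position has determinant $u_av_b$ with $a\neq b$ (the quadratic terms cancel in characteristic $2$), and a principal $2\times 2$ minor equals $1+u_{i_1}v_{i_1}+u_{i_2}v_{i_2}=1+u_kv_k$ by the orthogonality relation, so no constraint beyond $u_i,v_i\neq 0$ and $u_iv_i\neq 1$ survives. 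The double counting is also handled correctly, since the admissibility conditions depend on $(u,v)$ only through the products $u_iv_i$, which are invariant under the free scaling action, giving exactly $2^m-1$ pairs per matrix. The final tally $(2^m-1)^2(2^m-2)(2^m-4)$ agrees with the entries $7^2\times 24$ and $15^2\times 168$ in Table~\ref{Table:Count_Comparison_3_MDS}. What your route buys is independence from the representative-form machinery of Theorem~\ref{Th_decomposition} and from the external reference; it would in fact make the paper's Corollary~\ref{Th_count_SIMDS_3} self-contained, and the same unipotent rank-one parametrisation is a natural starting point for attacking $n=4$, where the rank of $M+I$ can be $2$ and the count is only known numerically (Table~\ref{Table:count_4_SIMDS}).
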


\begin{corollary}\label{Th_count_SIMDS_3}
    For $m\geq3$, the number of semi-involutory MDS matrices of order $3$ over $\mathbb{F}_{2^m}$ is $(2^m-1)^5(2^m-2)(2^m-4)$.
\end{corollary}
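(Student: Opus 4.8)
The plan is to combine the two enumeration results that have just been assembled. The final statement (Corollary~\ref{Th_count_SIMDS_3}) asks for the number of $3 \times 3$ semi-involutory MDS matrices over $\mathbb{F}_{2^m}$, and we already have two ingredients pointing at exactly this quantity. First, Theorem~\ref{Th_count_SIMDS} (specialized to $n = 3$) tells us that the total number of $3 \times 3$ semi-involutory MDS matrices equals $(2^m - 1)^{2 \cdot 3 - 1} \cdot N_2 = (2^m - 1)^5 \cdot N_2$, where $N_2$ is the number of semi-involutory MDS matrices of the representative form $M_1$. So the whole problem reduces to pinning down $N_2$ for $n = 3$.

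To extract $N_2$, I would use Theorem~\ref{Th_count_IMDS}, which relates involutory MDS matrices to the same representative count: the number of $3 \times 3$ involutory MDS matrices equals $(2^m - 1)^{3 - 1} \cdot N_2 = (2^m - 1)^2 \cdot N_2$. The key observation is that the \emph{same} $N_2$ appears in both Theorem~\ref{Th_count_IMDS} and Theorem~\ref{Th_count_SIMDS}, since both count semi-involutory MDS matrices of the form $M_1$; this is precisely the interconnection the section has been building toward. Now invoke Lemma~\ref{lemma:inv_3_MDS_count}, which gives the known count of $3 \times 3$ involutory MDS matrices as $(2^m - 1)^2 (2^m - 2)(2^m - 4)$. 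Equating the two expressions for the involutory count yields
\[
    (2^m - 1)^2 \cdot N_2 = (2^m - 1)^2 (2^m - 2)(2^m - 4),
\]
so dividing by $(2^m - 1)^2$ gives $N_2 = (2^m - 2)(2^m - 4)$.

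With $N_2$ determined, the conclusion follows by substituting back into Theorem~\ref{Th_count_SIMDS} for $n = 3$: the number of $3 \times 3$ semi-involutory MDS matrices is
\[
    (2^m - 1)^5 \cdot N_2 = (2^m - 1)^5 (2^m - 2)(2^m - 4),
\]
which is exactly the claimed formula. There is essentially no genuine obstacle here, since all the substantive work—the decomposition into the representative form, the counting of diagonal freedoms, and the enumeration of involutory MDS matrices—has already been carried out in the preceding theorems and in the cited result of~\cite{gmt}. The only point requiring slight care is to make sure the constraint $m \geq 3$ is carried along, since Lemma~\ref{lemma:inv_3_MDS_count} is stated under that hypothesis and the factor $(2^m - 4)$ would otherwise be problematic for small $m$; this is why the corollary is also stated for $m \geq 3$. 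Thus the proof is a short two-line computation chaining Theorem~\ref{Th_count_SIMDS}, Theorem~\ref{Th_count_IMDS}, and Lemma~\ref{lemma:inv_3_MDS_count} together through the common quantity $N_2$.
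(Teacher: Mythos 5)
Your proposal is correct and follows exactly the same route as the paper: extract $N_2 = (2^m-2)(2^m-4)$ by equating Theorem~\ref{Th_count_IMDS} with the count from Lemma~\ref{lemma:inv_3_MDS_count}, then substitute into Theorem~\ref{Th_count_SIMDS} with $n=3$. The observation that the same $N_2$ appears in both theorems is precisely the interconnection the paper exploits.
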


\begin{proof}
    From the above lemma, we know that the number of all $3 \times 3$ involutory MDS matrices is $(2^m-1)^2(2^m-2)(2^m-4)$. Therefore, from Theorem~\ref{Th_count_IMDS}, we have
	\begin{equation*}
		\begin{aligned}
			&(2^m-1)^2 (2^m-2)(2^m-4)= (2^m-1)^2 \cdot N_2\\
			\implies & N_2= (2^m-2)(2^m-4).
		\end{aligned}
	\end{equation*}
	Here $N_2$ denotes the number of semi-involutory MDS matrices of order $3$ over $\mathbb{F}_{2^m}$ of the form $M_1$, as specified in (\ref{Eqn_The_matrixM1}).

	Therefore, from Theorem~\ref{Th_count_SIMDS}, the count of semi-involutory MDS matrices of order $3$ over $\mathbb{F}_{2^m}$ is given by
    \begin{equation*}
        \begin{aligned}
            &(2^m-1)^{2\cdot 3-1}\cdot N_2\\
            =& (2^m-1)^5(2^m-2)(2^m-4).
        \end{aligned}
    \end{equation*}
    \qed
\end{proof}

% \begin{remark}
%     It is worth noting that Chatterjee et al.~\cite{chatterjee2024_3_SIMDS} also derived the formula for counting semi-involutory MDS matrices of order $3$ over $\mathbb{F}_{2^m}$. Our discussion provides an independent verification of that result. Moreover, while their proof spans several pages, our approach leverages the established interconnection, allowing the formula to be derived directly from the count of $3 \times 3$ involutory MDS matrices, offering a more concise perspective. 
% \end{remark}

\begin{remark}
    We note that the enumeration formula for $3 \times 3$ semi-involutory MDS matrices over $\mathbb{F}_{2^m}$ has been previously established by Chatterjee et al.~\cite{chatterjee2024_3_SIMDS} through a comprehensive proof, and independently derived in recent work by Kumar et al.~\cite{kumar2026study}. Our approach complements these results by leveraging the structural interconnection with involutory MDS matrices, which yields an alternative derivation that is more direct and concise.
\end{remark}

% \begin{remark}
%     In~\cite{kumar2026study}, the authors also derived a formula for counting semi-involutory MDS matrices of order $3$ over $\mathbb{F}_{2^m}$. Our discussion offers an alternative and independent derivation of this result. Our approach draws on the structural interconnection with involutory MDS matrices, leading to a more direct and concise development of the formula.
% \end{remark}

% \noindent Although we have derived a formula for counting the number of $3\times 3$ semi-involutory MDS matrices, we would like to provide a table that displays the corresponding results for various finite fields, as shown in Table~\ref{Table:count_3_SIMDS}.

% \begin{table}[htbp]
%     \centering
% 	\captionof{table}{Count of $3\times 3$ involutory and semi-involutory MDS Matrices}
% 	\label{Table:count_3_SIMDS}
% 	\vspace{2mm}
% 	\begin{tabular}{|c|c|c|c|}\hline
% 		Finite Field & Semi-involutory of the form $M_1$ & Involutory & Semi-involutory  \\ \hline
% 		$\mathbb{F}_{2^3}$ & $24$       & $7^2\times24$     &$7^5\times24$\\   
% 		$\mathbb{F}_{2^4}$ & $168$      & $15^2\times168$   &$15^5\times168$\\ 
% 		$\mathbb{F}_{2^5}$ & $840$      & $31^2\times840$   &$31^5\times840$\\ 
% 		$\mathbb{F}_{2^6}$ & $3720$     & $63^2\times3720$  &$63^5\times3720$\\ 
% 		$\mathbb{F}_{2^7}$ & $15624$    &$127^2\times15624$ &$127^5\times15624$\\ 
% 		$\mathbb{F}_{2^8}$ & $64008$    &$255^2\times64008$ &$255^5\times64008$\\ 
% 		\hline
% 	\end{tabular}
% 	\vspace{2mm}
% \end{table}

\noindent In~\cite{Kumar_4MDS}, the authors propose a matrix form for generating all $4 \times 4$ involutory MDS matrices over the field $\mathbb{F}_{2^m}$. Their approach specifically involves searching for these representative matrices within a set of cardinality $(2^m-1)^5$, facilitating the explicit enumeration of the total number of $4 \times 4$ involutory MDS matrices over $\mathbb{F}_{2^m}$ for $m = 3, 4, \ldots, 8$. We take their results and combine Theorems~\ref{Th_count_IMDS} and \ref{Th_count_SIMDS} to derive the exact count of $4 \times 4$ semi-involutory MDS matrices over $\mathbb{F}_{2^m}$. For example, as derived in \cite{Kumar_4MDS}, the count of $4\times 4$ involutory MDS matrices over $\mathbb{F}_{2^3}$ is $7^3 \times 48$. By applying Theorem~\ref{Th_count_IMDS}, the number of semi-involutory MDS matrices of the form $M_1$ is $48$. Therefore, using Theorem~\ref{Th_count_SIMDS}, we conclude that the number of $4 \times 4$ semi-involutory MDS matrices over $\mathbb{F}_{2^3}$ is $7^7 \times 48$. Similarly, we can derive the count for the finite fields $\mathbb{F}_{2^m}$, where $m = 4, \dots, 8$. A comparison of the counts for $4 \times 4$ involutory and semi-involutory MDS matrices is presented in Table~\ref{Table:count_4_SIMDS}.
% \vspace{1em}

\begin{table}[htbp]
    \centering
	\caption{Count of $4\times 4$ involutory and semi-involutory MDS Matrices}
	\label{Table:count_4_SIMDS}
	\vspace{2mm}
	\begin{tabular}{|c|c|c|c|}\hline
		Finite Field & Semi-involutory $M_1$ & Involutory & Semi-involutory  \\ \hline
		$\mathbb{F}_{2^3}$ & $48$       & $7^3\times48$  &$7^7\times48$\\   
		$\mathbb{F}_{2^4}$ & $71856$    & $15^3\times71856$ &$15^7\times71856$\\ 
		$\mathbb{F}_{2^5}$ & $10188240$    & $31^3\times10188240$ &$31^7\times10188240$\\ 
		$\mathbb{F}_{2^6}$ & $612203760$    & $63^3\times612203760$ &$63^7\times612203760$\\ 
		$\mathbb{F}_{2^7}$ & $26149708368$    &$127^3\times26149708368$ &$127^7\times26149708368$\\ 
		$\mathbb{F}_{2^8}$ & $961006331376$    &$255^3\times961006331376$ &$255^7\times961006331376$\\ 
		\hline
	\end{tabular}
	\vspace{2mm}
\end{table}

\subsection{The Interconnection between Orthogonal and Semi-orthogonal Matrices}

In this section, we utilize the representative matrix form $M_1$, as defined in~(\ref{Eqn_The_matrixM1}), to explore the interconnections between semi-orthogonal and orthogonal matrices. Specifically, in Theorem~\ref{Th_count_Semi-OMDS}, we demonstrate that the number of orthogonal MDS matrices of order $n$ over $\mathbb{F}_{2^m}$ corresponds to the count of semi-orthogonal MDS matrices of order $n$ of the form $M_1$. To establish this relationship, we present the following theorem.

% We will now discuss the relationship between $M_1$ and $M=\Phi(D_1,D_2,M_1)$ when $M$ is an orthogonal matrix.

\begin{theorem}\label{Th_ortho_semi_ortho_rel}
    Let $M_1=(c_{ij})$ be a matrix of order $n$ over $\mathbb{F}_{2^m}^*$ of the form given in~(\ref{Eqn_The_matrixM1}) and $M_2=(d_{ij})$ be its inverse. Then $M=\Phi(D_1,D_2,M_1)$ is an orthogonal matrix if and only if $M_1$ is a semi-orthogonal matrix such that
	\begin{center}
		$M_1^{-T}=D_1^2M_1D_2^2$.
	\end{center}
	Moreover, $D_1$ and $D_2$ must take the following form:
	\begin{center}
		$D_1=\Diag(\sqrt{d_{11}},\sqrt{d_{12}},\ldots,\sqrt{d_{1n}})$ and $D_2=\Diag(1,\sqrt{\frac{d_{21}}{d_{11}}},\ldots,\sqrt{\frac{d_{n1}}{d_{11}}})$.
	\end{center}
\end{theorem}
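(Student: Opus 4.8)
The plan is to translate the orthogonality condition $M^T M = I$ into an identity purely about the representative $M_1$, and then to read off the forced shapes of $D_1$ and $D_2$ from the distinguished all-ones first row and first column of $M_1$. Since each manipulation will be reversible, the same chain of equalities yields both directions of the biconditional at once.

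First I would substitute $M = D_1 M_1 D_2$ into $M^T M = I$. As $D_1$ and $D_2$ are diagonal and hence symmetric, $M^T = D_2 M_1^T D_1$, so $M^T M = D_2 M_1^T D_1^2 M_1 D_2$. Peeling off the outer $D_2$ factors gives $M_1^T D_1^2 M_1 = D_2^{-2}$, and therefore $M_1^T\bigl(D_1^2 M_1 D_2^2\bigr) = I$, i.e. $D_1^2 M_1 D_2^2 = (M_1^T)^{-1} = M_1^{-T}$. Every step here is an equivalence, so $M$ is orthogonal precisely when $M_1^{-T} = D_1^2 M_1 D_2^2$, which is exactly the assertion that $M_1$ is semi-orthogonal with corresponding diagonal matrices $D_1^2$ and $D_2^2$. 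This establishes the ``if and only if'' part.

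Next I would extract the explicit forms of $D_1$ and $D_2$ by comparing entries in the identity $M_1^{-T} = D_1^2 M_1 D_2^2$. Writing $D_1 = \Diag(a_1, \ldots, a_n)$ and $D_2 = \Diag(b_1, \ldots, b_n)$ with $b_1 = 1$ (the normalization guaranteed by Theorem~\ref{Th_decomposition}), and using that the $(i,j)$ entry of $M_1^{-T}$ equals $d_{ji}$, the identity reads entrywise as $d_{ji} = a_i^2\, c_{ij}\, b_j^2$. Specializing to the first column $j=1$, where $c_{i1}=1$ and $b_1=1$, gives $a_i^2 = d_{1i}$, hence $D_1 = \Diag(\sqrt{d_{11}},\ldots,\sqrt{d_{1n}})$. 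Specializing to the first row $i=1$, where $c_{1j}=1$, gives $d_{j1} = a_1^2 b_j^2 = d_{11}\,b_j^2$, hence $b_j^2 = d_{j1}/d_{11}$ and $D_2 = \Diag(1, \sqrt{d_{21}/d_{11}}, \ldots, \sqrt{d_{n1}/d_{11}})$; the case $j=1$ is consistent with $b_1 = 1$. These are precisely the claimed expressions.

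The one point needing care, and the main obstacle, is that these formulas invoke square roots in $\mathbb{F}_{2^m}$, so I must justify that $a_i = \sqrt{d_{1i}}$ and $b_j = \sqrt{d_{j1}/d_{11}}$ are well defined and unique. This is immediate in characteristic two: the Frobenius map $x \mapsto x^2$ is a field automorphism of $\mathbb{F}_{2^m}$, hence a bijection, so every element has a unique square root. I would state this explicitly, as it simultaneously shows that $D_1$ and $D_2$ are \emph{uniquely} forced by $M_1$ (justifying the ``must take the following form'' clause) and that they are non-singular, since all entries $d_{ij}$ are nonzero over $\mathbb{F}_{2^m}^*$.
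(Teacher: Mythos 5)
Your proof is correct and follows essentially the same route as the paper: substitute $M = D_1 M_1 D_2$ into the orthogonality condition to obtain $M_1^{-T} = D_1^2 M_1 D_2^2$, then read off $D_1$ and $D_2$ from the entrywise relation $d_{ji} = a_i^2 c_{ij} b_j^2$ along the all-ones first row and column, using the Frobenius automorphism for unique square roots. The only difference is that you obtain the converse by observing the chain of equalities is reversible, whereas the paper re-verifies $(MM^T)_{ij} = \delta_{ij}$ by a direct entrywise computation; your version is a legitimate and slightly more economical way to close that direction.
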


\begin{proof}
        Let $D_1=\Diag(\lambda_1,\lambda_2,\ldots,\lambda_n)$ and $D_2=\Diag(1,\theta_2,\ldots,\theta_n)$ be two non-singular diagonal matrices over $\mathbb{F}_{2^m}$. Suppose that $M=\Phi(D_1,D_2,M_1)$ is an orthogonal matrix. Thus, we have
	\begin{eqnarray}
		&& MM^T=I \nonumber\\
		\implies && (D_1M_1D_2)(D_2M_1^TD_1)= I \nonumber\\
		\implies && D_1M_1D_2^2M_1^T= D_1^{-1} \nonumber\\
		\implies && D_1^2M_1D_2^2M_1^T= I \nonumber \\
		\implies && M_1^{-T}=D_1^2M_1D_2^2. \nonumber
	\end{eqnarray}

	Since $D_1^2$ and $D_2^2$ are non-singular diagonal matrices, we can conclude that $M_1$ is a semi-orthogonal matrix. Moreover, since $M_2 = (d_{ij})$ is the inverse of $M_1$, we also have the relation
	$$d_{ji} = \lambda_i^2 c_{ij} \theta_j^2.$$

	\noindent If we compare the first row and first column entries, since $c_{i1}=c_{1j}=1$ for $1\leq i,j \leq n$, we have
	
	\begin{equation*}
		\begin{aligned}
			d_{i1}&=  \lambda_1^2\theta_i^2 ~\text{ and }~ d_{1j}=  \theta_1^2\lambda_j^2 ~\text{ for }~ 1\leq i,j \leq n.
		\end{aligned}
	\end{equation*}

\noindent Since $\theta_1=1$, it follows that $d_{11}=\lambda_1^2$, which implies that $d_{11}\neq 0$. Furthermore, we have the following relations
\begin{equation*}
	\begin{aligned}
		\lambda_i=\sqrt{d_{1i}} ~\text{  and  }~ \theta_i=\sqrt{\frac{d_{i1}}{d_{11}}}~\text{  for  }~ 1\leq i\leq n.
	\end{aligned}
\end{equation*}

It is important to note that we are working in a finite field of characteristic $2$, which ensures the existence of square roots in the above equations. Consequently, the diagonal matrices $D_1$ and $D_2$ assume the desired form as stated in the theorem.\\

\noindent Conversely, suppose that $M_1$ is a semi-orthogonal matrix such that
	\begin{eqnarray*}
	&&M_1^{-T}=D_1^2M_1D_2^2,
	\end{eqnarray*}
where $D_1$ and $D_2$ has the following form
\begin{center}
$D_1=\Diag(\sqrt{d_{11}},\sqrt{d_{12}},\ldots,\sqrt{d_{1n}})$ and $D_2=\Diag(1,\sqrt{\frac{d_{21}}{d_{11}}},\ldots,\sqrt{\frac{d_{n1}}{d_{11}}})$.
\end{center}

Thus, we obtain the following expression
\begin{equation}\label{Eqn_M1_SOMDS}
\begin{aligned}
    d_{ji} &= \lambda_i^2 c_{ij} \theta_j^2 \quad \text{for} \quad 1 \leq i, j \leq n.
\end{aligned}
\end{equation}

Now, we will show that $M=\Phi(D_1,D_2,M_1)$ is an orthogonal matrix. If $M=(m_{ij})$ then we have $m_{ij}=\lambda_i \theta_j c_{ij}$ for $1\leq i,j \leq n$. Therefore, we have

\begin{equation*}
    \begin{aligned}
        (MM^T)_{ij} &= \sum_{k=1}^n m_{ik}m_{jk} \\
        &= \sum_{k=1}^n (\lambda_i \theta_k c_{ik})(\lambda_j \theta_k c_{jk}) \\
        &= \sum_{k=1}^n \lambda_i \lambda_j c_{ik} \theta_k^2 c_{jk} \\
        & =\sum_{k=1}^n \lambda_i \lambda_j c_{ik} \frac{d_{kj}}{\lambda_j^2} \quad [\text{From Equation}~\ref{Eqn_M1_SOMDS}] \\
        &=\frac{\lambda_i}{\lambda_j}\sum_{k=1}^n c_{ik}d_{kj} \\
        &=
        \begin{cases}
        0 & \text{if } i \neq j, \\
        1 & \text{if } i = j. \\
        \end{cases} \quad \quad [\text{Since} ~M_2=M_1^{-1}]
    \end{aligned}
\end{equation*}

\noindent This demonstrates that $MM^T=I$, implying that $M$ is an orthogonal matrix. This completes the proof. 
\qed
\end{proof}

\begin{corollary}\label{Coro_SO}
	The number of orthogonal matrices of order $n$ over $\mathbb{F}_{2^m}^{*}$ is equal to the number of semi-orthogonal matrices of order $n$ over $\mathbb{F}_{2^m}^{*}$ of the form $M_1$, as specified in (\ref{Eqn_The_matrixM1}).
\end{corollary}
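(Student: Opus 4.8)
The plan is to follow the same strategy as in Corollary~\ref{Coro_SI}, but to exploit the crucial structural difference revealed by Theorem~\ref{Th_ortho_semi_ortho_rel}: whereas in the involutory setting the diagonal matrices $D_1, D_2$ carried $(2^m-1)^{n-1}$ free parameters, here they are \emph{uniquely} determined by $M_1$. This upgrades the many-to-one correspondence of the involutory case to an exact bijection, which is precisely why the two counts coincide rather than differing by a power of $(2^m-1)$.

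First I would take an arbitrary orthogonal matrix $M$ of order $n$ over $\mathbb{F}_{2^m}^*$ and invoke Theorem~\ref{Th_decomposition} to write it uniquely as $M = \Phi(D_1, D_2, M_1)$, with $M_1$ of the form~(\ref{Eqn_The_matrixM1}). By Theorem~\ref{Th_ortho_semi_ortho_rel}, the orthogonality of $M$ forces $M_1$ to be semi-orthogonal with $M_1^{-T} = D_1^2 M_1 D_2^2$. Thus the assignment $M \mapsto M_1$ sends every orthogonal matrix to a semi-orthogonal matrix of the form $M_1$, and by the uniqueness of the decomposition this assignment is well defined and injective.

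Next I would establish surjectivity together with the absence of any multiplicity factor. Conversely, given a semi-orthogonal matrix $M_1$ of the form~(\ref{Eqn_The_matrixM1}) satisfying the relation in Theorem~\ref{Th_ortho_semi_ortho_rel}, that theorem also prescribes the \emph{unique} pair $(D_1, D_2)$, namely $D_1 = \Diag(\sqrt{d_{11}}, \ldots, \sqrt{d_{1n}})$ and $D_2 = \Diag(1, \sqrt{d_{21}/d_{11}}, \ldots, \sqrt{d_{n1}/d_{11}})$, where the $d_{ij}$ are the entries of $M_2 = M_1^{-1}$. Since $\mathbb{F}_{2^m}$ has characteristic $2$, each square root exists and is unique, so exactly one orthogonal matrix $\Phi(D_1, D_2, M_1)$ arises from each such $M_1$. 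This is the point where the orthogonal case departs sharply from the involutory one: there are no free scalars $\lambda_i$ to select, so the lifting factor is $1$ rather than $(2^m-1)^{n-1}$.

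The main obstacle will be confirming that no hidden multiplicity creeps in: specifically, that every orthogonal matrix's decomposition really does produce $D_1, D_2$ of the mandated square-root form, and that distinct orthogonal matrices cannot collapse to the same $M_1$. Both facts are already secured by the uniqueness statements of Theorems~\ref{Th_decomposition} and~\ref{Th_ortho_semi_ortho_rel}, so once these are assembled the map $M \mapsto M_1$ is a bijection between orthogonal matrices over $\mathbb{F}_{2^m}^*$ and semi-orthogonal matrices of the form $M_1$. Counting both sides then yields the claimed equality.
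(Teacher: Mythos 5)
Your proposal is correct and follows essentially the same route as the paper: decompose $M=\Phi(D_1,D_2,M_1)$ via Theorem~\ref{Th_decomposition}, apply Theorem~\ref{Th_ortho_semi_ortho_rel} to see that $M_1$ is semi-orthogonal and that $D_1,D_2$ are forced into the unique square-root form, and conclude that the correspondence is one-to-one rather than $(2^m-1)^{n-1}$-to-one as in the involutory case. The paper's own proof is a more compressed version of exactly this argument.
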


\begin{proof}
	Let $M$ be an orthogonal matrix of order $n$ over $\mathbb{F}_{2^m}^{*}$. By Theorem~\ref{Th_decomposition}, there exists a unique matrix $M_1$ in the form of~(\ref{Eqn_The_matrixM1}) such that $M = D_1 M_1 D_2$, where $D_1$ and $D_2$ are non-singular diagonal matrices.

	According to Theorem~\ref{Th_ortho_semi_ortho_rel}, $M_1$ is a semi-orthogonal matrix, with $D_1^2$ and $D_2^2$ as the corresponding diagonal matrices. Furthermore, $D_1$ and $D_2$ assume the specific forms described in Theorem~\ref{Th_ortho_semi_ortho_rel}, which ensures that the choice of $D_1$ and $D_2$ is unique when $M$ is orthogonal.

	Thus, for a given semi-orthogonal matrix $M_1$, there is a unique construction of an orthogonal matrix $M$. Consequently, the total number of orthogonal matrices over $\mathbb{F}_{2^m}^{*}$ is equal to the number of semi-orthogonal matrices of order $n$ over $\mathbb{F}_{2^m}^{*}$ of the form $M_1$.
    \qed
\end{proof}

If $M = \Phi(D_1, D_2, M_1)$ is an MDS matrix of order $n$ over $\mathbb{F}_{2^m}$, then by Lemma~\ref{Lemma_DMD_MDS}, $M_1$ is also an MDS matrix. Furthermore, if $M$ is an orthogonal MDS matrix, Theorem~\ref{Th_ortho_semi_ortho_rel} implies that $M_1$ is a semi-orthogonal MDS matrix. Therefore, based on Corollary~\ref{Coro_SO}, we can conclude the following result.

\begin{theorem}\label{Th_count_Orthogonal_MDS}
    The number of orthogonal MDS matrices of order $n$ over $\mathbb{F}_{2^m}$ is equal to the number of semi-orthogonal MDS matrices of order $n$ over $\mathbb{F}_{2^m}$ of the form $M_1$, as specified in (\ref{Eqn_The_matrixM1}).
\end{theorem}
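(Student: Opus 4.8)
The plan is to exhibit a bijection between the set of orthogonal MDS matrices of order $n$ over $\mathbb{F}_{2^m}$ and the set of semi-orthogonal MDS matrices of order $n$ over $\mathbb{F}_{2^m}$ of the form $M_1$, thereby lifting Corollary~\ref{Coro_SO} to the MDS setting. The map will be exactly the one furnished by the unique decomposition of Theorem~\ref{Th_decomposition}, and the only work beyond Corollary~\ref{Coro_SO} is to check that the MDS property is carried along in both directions.

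First I would take an arbitrary orthogonal MDS matrix $M$ over $\mathbb{F}_{2^m}$ and apply Theorem~\ref{Th_decomposition} to write it uniquely as $M = \Phi(D_1, D_2, M_1)$, with $D_1, D_2$ non-singular diagonal and $M_1$ of the form~(\ref{Eqn_The_matrixM1}). Since $M = D_1 M_1 D_2$ is MDS and $D_1, D_2$ are non-singular diagonal, Lemma~\ref{Lemma_DMD_MDS} applied to $M_1 = D_1^{-1} M D_2^{-1}$ shows that $M_1$ is MDS. Because $M$ is orthogonal, Theorem~\ref{Th_ortho_semi_ortho_rel} forces $M_1$ to be semi-orthogonal with $M_1^{-T} = D_1^2 M_1 D_2^2$. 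Hence the decomposition sends $M$ to a semi-orthogonal MDS matrix $M_1$ of the prescribed form.

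For the reverse direction I would start from a semi-orthogonal MDS matrix $M_1$ of the form~(\ref{Eqn_The_matrixM1}) and invoke the converse part of Theorem~\ref{Th_ortho_semi_ortho_rel}: defining $D_1 = \Diag(\sqrt{d_{11}}, \ldots, \sqrt{d_{1n}})$ and $D_2 = \Diag(1, \sqrt{d_{21}/d_{11}}, \ldots, \sqrt{d_{n1}/d_{11}})$ from the entries $d_{ij}$ of $M_1^{-1}$ produces an orthogonal matrix $M = \Phi(D_1, D_2, M_1)$. Lemma~\ref{Lemma_DMD_MDS} again shows $M$ is MDS, so $M$ is an orthogonal MDS matrix. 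The essential point is that these $D_1, D_2$ are \emph{uniquely} determined by $M_1$, so each semi-orthogonal MDS matrix of the form $M_1$ yields exactly one orthogonal MDS matrix; combined with the uniqueness in Theorem~\ref{Th_decomposition}, the two assignments are mutually inverse, giving the bijection and hence the equality of counts.

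The step requiring the most care is justifying this uniqueness, which is precisely where characteristic $2$ is indispensable: the Frobenius map $x \mapsto x^2$ is a bijection on $\mathbb{F}_{2^m}$, so every element has a unique square root. This is what removes the free scaling parameters that appeared in the involutory analogue (Theorem~\ref{Th_inv_semi_inv_rel} and Corollary~\ref{Coro_SI}, where a factor $(2^m-1)^{n-1}$ arose) and collapses the orthogonal correspondence to a one-to-one map. Checking that the radicands are nonzero — so that $D_1, D_2$ stay non-singular — is immediate, since an MDS $M_1$ has $M_1^{-1}$ with all nonzero entries, in keeping with the footnoted observation that MDS matrices live naturally over $\mathbb{F}_{2^m}^*$.
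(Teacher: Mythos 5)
Your proposal is correct and follows essentially the same route as the paper: the unique decomposition of Theorem~\ref{Th_decomposition}, Lemma~\ref{Lemma_DMD_MDS} to transfer the MDS property, and Theorem~\ref{Th_ortho_semi_ortho_rel} to identify orthogonality of $M$ with semi-orthogonality of $M_1$ together with the uniquely determined $D_1,D_2$, exactly as in Corollary~\ref{Coro_SO}. You merely spell out the reverse direction of the bijection more explicitly than the paper does, which is a welcome but inessential elaboration.
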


% \begin{proof}
%     According to Theorems (\ref{Th_inv_semi_inv_rel}) and (\ref{Th_ortho_semi_ortho_rel}), for any orthogonal matrix $M=\Phi(D_1,D_2,M_1)$, one can get semi-orthogonal $M_1$ with unique values of $D_1$ and $D_2$. Therefore, the number of all $3 \times 3$ orthogonal MDS matrices is equal to the count of semi-orthogonal MDS matrices over $\mathbb{F}_{2^m}$ of the form $M_1$.
% \end{proof}

We now turn our attention to the results concerning semi-orthogonal MDS matrices over $\mathbb{F}_{2^m}$. To facilitate this discussion, we present the following lemma. The proof of this lemma closely resembles that of Lemma~\ref{Lemma_complete_SI}. Nevertheless, for the sake of completeness, we provide it here.

\begin{lemma}\label{Lemma_complete_SO}
	The number of semi-orthogonal matrices of order $n$ over $\mathbb{F}_{2^m}^{*}$ is given by 
	\[
		(2^m - 1)^{2n-1} \cdot N_3,
	\]
	where $N_3$ denotes the number of semi-orthogonal matrices of order $n$ over $\mathbb{F}_{2^m}^{*}$ of the form $M_1$, as specified in (\ref{Eqn_The_matrixM1}).
\end{lemma}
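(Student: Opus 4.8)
The plan is to mirror the argument used for Lemma~\ref{Lemma_complete_SI}, replacing the semi-involutory relation $M^{-1}=DMD'$ with the semi-orthogonal relation $M^{-T}=DMD'$ throughout. The goal is to prove that $M=\Phi(D_1,D_2,M_1)$ is semi-orthogonal if and only if $M_1$ is semi-orthogonal, and then to count the decompositions that realize each $M_1$.

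First I would establish the forward direction. Suppose $M=\Phi(D_1,D_2,M_1)=D_1M_1D_2$ is semi-orthogonal with corresponding diagonal matrices $D$ and $D'$, so that $M^{-T}=DMD'$. Since $D_1$ and $D_2$ are diagonal, hence symmetric and inverted entrywise, I would compute
\[
M^{-T}=(D_1M_1D_2)^{-T}=\bigl(D_2^{-1}M_1^{-1}D_1^{-1}\bigr)^{T}=D_1^{-1}M_1^{-T}D_2^{-1}.
\]
Substituting this into $M^{-T}=DD_1M_1D_2D'$ and then left-multiplying by $D_1$ and right-multiplying by $D_2$, I would obtain
\[
M_1^{-T}=(D_1DD_1)\,M_1\,(D_2D'D_2).
\]
As $D_1DD_1$ and $D_2D'D_2$ are non-singular diagonal matrices, this exhibits $M_1$ as a semi-orthogonal matrix. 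For the converse, if $M_1$ is semi-orthogonal then Lemma~\ref{lemma_D1MD2_SI_SO} immediately gives that $M=D_1M_1D_2$ is semi-orthogonal, establishing the equivalence.

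With the equivalence in hand, the counting step follows the template of Lemma~\ref{Lemma_complete_SI}. By Theorem~\ref{Th_decomposition}, every matrix over $\mathbb{F}_{2^m}^{*}$ decomposes uniquely as $\Phi(D_1,D_2,M_1)$ with the first entry of $D_2$ equal to $1$; the number of admissible pairs $(D_1,D_2)$ is $(2^m-1)^{2n-1}$, since $D_1$ contributes $n$ free entries and $D_2$ contributes $n-1$. Each semi-orthogonal $M_1$ thus produces exactly $(2^m-1)^{2n-1}$ distinct semi-orthogonal matrices $M$, and by the equivalence no semi-orthogonal matrix is omitted or double-counted. Summing over the $N_3$ semi-orthogonal matrices of the form $M_1$ yields the claimed total $(2^m-1)^{2n-1}\cdot N_3$.

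The only point requiring care relative to the semi-involutory case is the transpose bookkeeping in the identity $M^{-T}=D_1^{-1}M_1^{-T}D_2^{-1}$: one must use that diagonal matrices are symmetric, so that transposition leaves each $D_i^{-1}$ fixed while reversing their order, which is precisely what converts the outer factors into $D_1DD_1$ and $D_2D'D_2$ after rearrangement. I do not anticipate any genuine obstacle, as the remaining manipulations are routine diagonal algebra identical in spirit to the involutory argument.
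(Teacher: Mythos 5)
Your proposal is correct and follows essentially the same route as the paper: both establish the equivalence ``$\Phi(D_1,D_2,M_1)$ is semi-orthogonal iff $M_1$ is'' by deriving $M_1^{-T}=(D_1DD_1)M_1(D_2D'D_2)$ (the paper writes the commuting-diagonal product as $DD_1^2M_1D_2^2D'$, which is the same matrix), invoke Lemma~\ref{lemma_D1MD2_SI_SO} for the converse, and then count the $(2^m-1)^{2n-1}$ admissible pairs $(D_1,D_2)$ via the unique decomposition of Theorem~\ref{Th_decomposition}. The only cosmetic difference is that you compute $M^{-T}$ directly while the paper rearranges $M^{T}(DMD')=I$; the algebra and conclusion are identical.
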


\begin{proof}
	Let $M=\Phi(D_1,D_2,M_1)$ be a semi-orthogonal matrix over $\mathbb{F}_{2^m}^{*}$ with corresponding diagonal matrices $D$ and $D'$ . Thus, we have 

	\begin{equation*}
		\begin{aligned}
		& M^{-T} =DMD'\\
		\implies & M^{T}(DMD')=I\\
		\implies & D_2M_1^{T}D_1 D(D_1M_1D_2) D'= I \\
		\implies & M_1^{T}(DD_1^2M_1D_2^2D')=I.
		\end{aligned}
	\end{equation*}

	Thus, the inverse of the matrix $M_1$ is given by 
	\begin{equation*}
		\begin{aligned}
			&M_1^{-T}=DD_1^2M_1D_2^2D'.
		\end{aligned}
	\end{equation*}
	
	Now, since $DD_1^2$ and $D_2^2D'$ are non-singular diagonal matrices, we can say that $M_1$ is also a semi-orthogonal matrix.

    \noindent Conversely, suppose that $M_1$ is a semi-orthogonal matrix. Then, by Lemma~\ref{lemma_D1MD2_SI_SO}, we can say that $M=\Phi(D_1,D_2,M_1)=D_1M_1D_2$ is also a semi-orthogonal matrix.

	Therefore, we have established that $M=\Phi(D_1,D_2,M_1)$ is a semi-orthogonal matrix if and only if $M_1$ is a semi-orthogonal matrix.

	Now, by selecting different values for $D_1$ and $D_2$, it is possible to construct other semi-orthogonal matrices from $M_1$. According to Theorem~\ref{Th_decomposition}, we have $(2^m - 1)^{2n-1}$ possible choices for $D_1$ and $D_2$. Therefore, we conclude that the total number of semi-orthogonal matrices over $\mathbb{F}_{2^m}^*$ is given by
	\begin{equation*}
		\begin{aligned}
			(2^m - 1)^{2n-1} \cdot N_3,
		\end{aligned}
	\end{equation*}
	where $N_3$ represents the number of semi-orthogonal matrices of order $n$ over $\mathbb{F}_{2^m}^{*}$ of the form $M_1$.
    \qed
\end{proof}

\noindent Similar to Theorem~\ref{Th_count_SIMDS}, using the above lemma, we can easily derive the result for semi-orthogonal MDS matrices over $\mathbb{F}_{2^m}$. Thus, we provide the result without proof.

\begin{theorem}\label{Th_count_Semi-OMDS}
	The number of semi-orthogonal MDS matrices of order $n$ over $\mathbb{F}_{2^m}$ is given by
	\[
		(2^m - 1)^{2n-1} \cdot N_4,
	\]
	where $N_4$ denotes the number of semi-orthogonal MDS matrices of order $n$ over $\mathbb{F}_{2^m}$ of the form $M_1$, as specified in (\ref{Eqn_The_matrixM1}).
\end{theorem}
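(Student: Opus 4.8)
The plan is to mirror the derivation of Theorem~\ref{Th_count_IMDS} from Corollary~\ref{Coro_SI}, replacing the involutory characterization with the semi-orthogonal one established in Lemma~\ref{Lemma_complete_SO}. The essential observation is that both the MDS property and the semi-orthogonal property are invariant under the unique decomposition $M = \Phi(D_1, D_2, M_1)$ of Theorem~\ref{Th_decomposition}, so that counting semi-orthogonal MDS matrices reduces to counting their representatives $M_1$ together with the admissible diagonal factors.

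First I would note that any semi-orthogonal MDS matrix $M$ over $\mathbb{F}_{2^m}$ has no zero entries, hence lies over $\mathbb{F}_{2^m}^*$, and so admits the unique decomposition $M = \Phi(D_1, D_2, M_1) = D_1 M_1 D_2$ guaranteed by Theorem~\ref{Th_decomposition}. Since $D_1$ and $D_2$ are non-singular diagonal matrices, Lemma~\ref{Lemma_DMD_MDS} gives that $M$ is MDS if and only if $M_1$ is MDS. Combining this with the equivalence proved in Lemma~\ref{Lemma_complete_SO}, namely that $\Phi(D_1, D_2, M_1)$ is semi-orthogonal if and only if $M_1$ is semi-orthogonal, yields that $M$ is a semi-orthogonal MDS matrix if and only if its representative $M_1$ is a semi-orthogonal MDS matrix of the form~(\ref{Eqn_The_matrixM1}).

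Next I would carry out the counting. For each fixed semi-orthogonal MDS representative $M_1$, Theorem~\ref{Th_decomposition} allows $D_1$ to be any non-singular diagonal matrix, giving $(2^m-1)^n$ choices, and $D_2$ to be any non-singular diagonal matrix whose first entry is $1$, giving $(2^m-1)^{n-1}$ choices; the product is $(2^m-1)^{2n-1}$. Uniqueness of the decomposition ensures that distinct triples $(D_1, D_2, M_1)$ produce distinct matrices $M$, so no overcounting occurs. Summing over the $N_4$ semi-orthogonal MDS representatives then gives the claimed total $(2^m-1)^{2n-1} \cdot N_4$.

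Since the argument is a faithful transcription of Lemma~\ref{Lemma_complete_SO} with the property \emph{semi-orthogonal} strengthened to \emph{semi-orthogonal MDS}, I do not expect a genuine obstacle. The only point requiring care is the passage from $\mathbb{F}_{2^m}^*$ to MDS matrices, exactly as flagged in the footnote to Theorem~\ref{Th_count_IMDS}: one must confirm that intersecting with the MDS condition respects the correspondence, which it does because the MDS status of $M$ matches that of $M_1$ under diagonal multiplication by Lemma~\ref{Lemma_DMD_MDS}. With that verified, the bijection between semi-orthogonal MDS matrices and pairs consisting of a representative $M_1$ and an admissible $(D_1, D_2)$ is complete, and the count follows immediately.
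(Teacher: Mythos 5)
Your proposal is correct and matches the paper's intent exactly: the paper omits this proof, stating only that it follows from Lemma~\ref{Lemma_complete_SO} in the same way that Theorem~\ref{Th_count_SIMDS} follows from Lemma~\ref{Lemma_complete_SI}, and your argument is precisely that derivation, with the counting of $(2^m-1)^{2n-1}$ diagonal pairs and the use of Lemma~\ref{Lemma_DMD_MDS} handled correctly.
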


\noindent In the existing literature, a precise count of the $3 \times 3$ orthogonal MDS matrices over $\mathbb{F}_{2^m}$ has not been established. In Theorem~\ref{count_ortho_mds}, we derive an explicit formula for the enumeration of $3 \times 3$ orthogonal MDS matrices over $\mathbb{F}_{2^m}$. Furthermore, by utilizing Theorems~\ref{Th_count_Orthogonal_MDS} and \ref{Th_count_Semi-OMDS}, we obtain an explicit formula for counting the $3 \times 3$ semi-orthogonal MDS matrices over $\mathbb{F}_{2^m}$. 

% We would like to provide a table that displays the corresponding results for various finite fields, as shown in Table~\ref{Table:count_3_Orthogonal_MDS}.

% \begin{table}[htbp]
%     \centering
% 	\captionof{table}{Count of $3\times 3$ orthogonal and semi-orthogonal MDS Matrices}
% 	\label{Table:count_3_Orthogonal_MDS}
% 	\vspace{2mm}
% 	\begin{tabular}{|c|c|c|c|}\hline
% 		Finite Field & Semi-orthogonal of the form $M_1$ & Orthogonal & Semi-orthogonal  \\ \hline
% 		$\mathbb{F}_{2^3}$ & $120$       & $120$      & $7^5\times120$ \\   
% 		$\mathbb{F}_{2^4}$ & $2184$      & $2184$     & $15^5\times2184$ \\ 
% 		$\mathbb{F}_{2^5}$ & $24360$     & $24360$    & $31^5\times24360$ \\ 
% 		$\mathbb{F}_{2^6}$ & $226920$    & $226920$    & $63^5\times 226920$ \\ 
% 		$\mathbb{F}_{2^7}$ & $1953000$   & $1953000$  & $127^5\times 1953000$ \\ 
% 		$\mathbb{F}_{2^8}$ & $16194024$  & $16194024$ & $255^5\times 16194024$ \\ 
% 		\hline
% 	\end{tabular}
% 	\vspace{2mm}
% \end{table}

In Theorem~\ref{Th_gen_Form_Ortho}, we establish a matrix structure for the $n \times n$ orthogonal matrices. Utilizing this structure, we are able to determine the count of $4 \times 4$ orthogonal MDS matrices over $\mathbb{F}_{2^m}$ for $m = 3$ and $4$. Subsequently, by applying Theorems~\ref{Th_count_Orthogonal_MDS} and \ref{Th_count_Semi-OMDS}, we derive the count of $4 \times 4$ semi-orthogonal MDS matrices over $\mathbb{F}_{2^m}$ for the same values of $m$. The results are presented in Table~\ref{Table:4_semi_ortho_MDS}.
% \vspace{1em}

\begin{table}[htbp]
    \centering
	\caption{Count of $4\times 4$ orthogonal and semi-orthogonal MDS Matrices}
	\label{Table:4_semi_ortho_MDS}
	\vspace{2mm}
	\begin{tabular}{|c|c|c|c|}\hline
		Finite Field & Semi-orthogonal of the form $M_1$ & Orthogonal & Semi-orthogonal  \\ \hline 
		$\mathbb{F}_{2^3}$    & $ 720$      &  $720$      & $7^7\times 720$\\ 
		$\mathbb{F}_{2^4}$    &  $1147440$  &  $1147440$  & $15^7\times  1147440$ \\
		\hline
	\end{tabular}
\end{table}

\subsection{The Interconnection between Semi-Involutory and Semi-Orthogonal Matrices}

In this section, we will discuss the case when a matrix possesses both the properties of being semi-involutory and semi-orthogonal. It is straightforward to observe that if a matrix $M$ is symmetric, then $M$ is semi-involutory if and only if it is semi-orthogonal. However, the question arises whether this relationship holds for matrices that are not necessarily symmetric. The following theorem aims to generalize this condition for matrices that exhibit both semi-orthogonal and semi-involutory properties.

\begin{theorem}~\cite{cheon2021semi}\label{semi-invo-ortho_cond}
Let $M$ be a non-singular matrix of order $n$. Suppose $M = DM^{T}D'$ for some non-singular diagonal matrices $D$ and $D'$. Then $M$ is semi-involutory if and only if $M$ is a semi-orthogonal matrix.
\end{theorem}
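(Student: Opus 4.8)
The plan is to extract from the hypothesis $M = DM^{T}D'$ a single algebraic identity relating $M^{-1}$ and $M^{-T}$ through diagonal matrices, and then obtain each direction of the equivalence by direct substitution. First I would invert the defining relation. Since $D$ and $D'$ are non-singular diagonal matrices and $(M^{T})^{-1} = M^{-T}$, inverting $M = DM^{T}D'$ gives
\[
M^{-1} = (D')^{-1} M^{-T} D^{-1}, \qquad \text{equivalently} \qquad M^{-T} = D' M^{-1} D.
\]
This is the pivotal identity: it asserts that $M^{-1}$ and $M^{-T}$ differ only by left and right multiplication by non-singular diagonal matrices.

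For the forward direction, I would assume $M$ is semi-orthogonal, so $M^{-T} = FMF'$ for some non-singular diagonal $F$ and $F'$. Substituting this into the pivotal identity yields $M^{-1} = (D')^{-1}FMF'D^{-1} = \bigl[(D')^{-1}F\bigr] M \bigl[F'D^{-1}\bigr]$. Since a product of non-singular diagonal matrices is again non-singular and diagonal, this exhibits $M$ as semi-involutory with corresponding diagonal matrices $(D')^{-1}F$ and $F'D^{-1}$. The converse is symmetric: assuming $M$ is semi-involutory, so $M^{-1} = EME'$ for non-singular diagonal $E$ and $E'$, substitution into $M^{-T} = D'M^{-1}D$ gives $M^{-T} = \bigl[D'E\bigr] M \bigl[E'D\bigr]$, showing $M$ is semi-orthogonal.

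The essential content of the argument is the derivation of the pivotal identity; once it is in hand, both implications are immediate. There is no deep obstacle here, only the need to track carefully which diagonal factor multiplies on which side, since diagonal matrices do not in general commute with $M$. The remaining steps rely solely on the elementary closure of non-singular diagonal matrices under multiplication, which is precisely the property already exploited in Lemma~\ref{lemma_D1MD2_SI_SO}.
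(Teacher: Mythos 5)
Your proof is correct: inverting $M = DM^{T}D'$ to obtain $M^{-T} = D'M^{-1}D$ and then substituting either defining relation into this identity settles both directions, using only closure of non-singular diagonal matrices under products. The paper does not supply its own argument for this statement --- it is quoted from the cited reference without proof --- but your derivation is the natural one and is complete; the only cosmetic slip is that you label the semi-orthogonal-implies-semi-involutory direction as the ``forward'' direction, whereas the theorem as stated reads in the opposite order, which does not affect the validity of the equivalence.
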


% The next theorem examines the conditions under which matrix $M_1$ becomes a semi-involutory and semi-orthogonal matrix.

In Lemma~\ref{Lemma_complete_SI}, we establish that the matrix $M = \Phi(D_1, D_2, M_1)$ is a semi-involutory matrix if and only if $M_1$ is semi-involutory. Similarly, in Lemma~\ref{Lemma_complete_SO}, we demonstrate that $M = \Phi(D_1, D_2, M_1)$ is a semi-orthogonal matrix if and only if $M_1$ is semi-orthogonal. Therefore, to determine whether the matrix $M$ is both semi-involutory and semi-orthogonal, we must examine the case in which $M_1$ possesses both properties. The following theorem addresses this result.

\begin{theorem}\label{rep_mat_semi-invo-ortho_cond}
	Let $M_1=(c_{ij})$ be a matrix of order $n$ of the form given in~(\ref{Eqn_The_matrixM1}) and suppose that $M_1 = DM_1^{T}D'$ for some non-singular diagonal matrices $D$ and $D'$. Then $M_1$ is a symmetric matrix.
\end{theorem}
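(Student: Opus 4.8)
The plan is to pass to the entrywise form of the hypothesis and then exploit the fact that the first row and first column of $M_1$ consist entirely of ones. Writing $D=\Diag(d_1,\ldots,d_n)$ and $D'=\Diag(d_1',\ldots,d_n')$ with all $d_i,d_j'\in\mathbb{F}_{2^m}^{*}$, the relation $M_1=DM_1^{T}D'$ reads, entry by entry,
\[
c_{ij}=d_i\,d_j'\,c_{ji},\qquad 1\le i,j\le n,
\]
since $(M_1^{T})_{ij}=c_{ji}$ and both outer matrices are diagonal. The goal $c_{ij}=c_{ji}$ will follow once I show that $d_i d_j'=1$ for every pair $(i,j)$, and for this the border of ones is exactly the right tool.

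First I would specialize to the first row, i.e.\ $i=1$. Because $c_{1j}=c_{j1}=1$ for all $j$ by the form~(\ref{Eqn_The_matrixM1}), the displayed identity collapses to $1=d_1 d_j'$, whence $d_j'=d_1^{-1}$ for every $j$; in particular $d_1'=d_1^{-1}$. Next I would specialize to the first column, $j=1$, which similarly gives $1=d_i d_1'$, so $d_i=(d_1')^{-1}=d_1$ for every $i$. Thus $D=d_1 I$ and $D'=d_1^{-1}I$ are forced to be reciprocal scalar multiples of the identity.

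Substituting these values back into the entrywise relation, for arbitrary $i,j$ we obtain
\[
c_{ij}=d_i\,d_j'\,c_{ji}=d_1\cdot d_1^{-1}\cdot c_{ji}=c_{ji},
\]
so $M_1=M_1^{T}$ and $M_1$ is symmetric, as claimed. The entire force of the argument is concentrated in the two specializations above: the normalized border $c_{1j}=c_{i1}=1$ rigidly pins down $D$ and $D'$ up to a single reciprocal scalar, after which symmetry is automatic. Consequently there is no genuine computational obstacle here; the only point requiring care is to invoke the specializations in the correct order ($i=1$ first, to obtain $d_1'=d_1^{-1}$, then $j=1$) so that the value of $d_1'$ is already available when solving for the $d_i$.
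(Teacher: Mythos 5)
Your proof is correct and follows essentially the same route as the paper: write the hypothesis entrywise as $c_{ij}=d_i d_j' c_{ji}$, use the all-ones first row and first column to force $D$ and $D'$ to be reciprocal scalar matrices, and conclude $c_{ij}=c_{ji}$. The only (harmless) difference is that the paper additionally writes down the diagonal conditions $\alpha_i\beta_i=1$, which are redundant given the row and column specializations you already use.
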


\begin{proof}
	Suppose that $M_1 = DM_1^{T}D'$ for some non-singular diagonal matrices $D$ and $D'$, where $D=\Diag(\splitatcommas{\alpha_1,\alpha_2,\ldots,\alpha_n})$ and $D'=\Diag(\beta_1,\beta_2,\ldots,\beta_n)$. Thus, we have 
    \begin{equation}\label{rep_mat_semi-invo-ortho_cond:eq1}
        c_{ij}=\alpha_ic_{ji}\beta_j \quad \text{ for } 1\leq i, j\leq n.
    \end{equation}

	If we compare the first row, first column, and diagonal entries, since $c_{i1}=c_{1j}=1$ for $1\leq i,j \leq n$, we have
	%Then we get the following equations: 
	\begin{eqnarray*}
		&&\alpha_1\beta_j=1  \text{ for } 1\leq j\leq n   \\
		&&\alpha_i\beta_1=1  \text{ for } 1\leq i\leq n    \\
		&&\alpha_i\beta_i=1  \text{ for } 1\leq i\leq n.
	\end{eqnarray*}

	After solving this system of equations, we have
	\begin{center}
		$\alpha_1=\alpha_2=\cdots =\alpha_n, \text{ and } \beta_1=\beta_2=\cdots =\beta_n$.
	\end{center}

	By putting these values of $\alpha_i$'s and $\beta_i$'s for $1\leq i\leq n$  in (\ref{rep_mat_semi-invo-ortho_cond:eq1}) we get

	\begin{equation*}
		\begin{aligned}
			&c_{ij}=c_{ji} \quad \text{ for } 1\leq i,j\leq n.
		\end{aligned}
	\end{equation*}

	Therefore, we have $M_1=M_1^T$. This completes the proof.
    \qed
\end{proof}

The next Lemma shows that the semi-involutory matrix of order $3$ over $\mathbb{F}_{2^m}^*$ of the form $M_1$ is symmetric.

\begin{lemma}\label{Th_symm_semi-invo_ortho_3x3}
    Let $M_1$ be a semi-involutory matrix of order $3$ over $\mathbb{F}_{2^m}^*$. Then $M_1$ is a symmetric matrix.
\end{lemma}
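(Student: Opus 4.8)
The plan is to reduce the semi-involutory hypothesis to a small polynomial system in the four free entries of $M_1$ and then show that system forces $c_{23}=c_{32}$. Write the order-$3$ representative as
\[
M_1=\begin{pmatrix}1&1&1\\1&c_{22}&c_{23}\\1&c_{32}&c_{33}\end{pmatrix},
\]
so that $M_1$ is symmetric precisely when $c_{23}=c_{32}$; this single equality is the only thing to establish. Since $M_1$ is semi-involutory, there are non-singular diagonal matrices $D=\Diag(\alpha_1,\alpha_2,\alpha_3)$ and $D'=\Diag(\beta_1,\beta_2,\beta_3)$ with $M_1^{-1}=DM_1D'$. Writing $M_2=(d_{ij})=M_1^{-1}$, this reads $d_{ij}=\alpha_i\beta_j c_{ij}$ for all $i,j$. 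The first row and first column of $M_1$ are all ones, so $d_{1j}=\alpha_1\beta_j$, $d_{i1}=\alpha_i\beta_1$, and $d_{11}=\alpha_1\beta_1$; cross-multiplying eliminates the $\alpha$'s and $\beta$'s and yields the compatibility relations
\[
d_{11}\,d_{ij}=d_{i1}\,d_{1j}\,c_{ij}\qquad(i,j\in\{2,3\}).
\]

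First I would make these relations explicit by computing the relevant entries of $M_2$ through cofactors (recall we are in characteristic $2$, so the adjugate carries no signs). Setting $p=c_{22}$, $q=c_{33}$, $x=c_{23}$, $y=c_{32}$, one finds $d_{11}=pq+xy$, $d_{21}=q+x$, $d_{13}=x+p$, $d_{23}=x+1$, $d_{31}=y+p$, $d_{12}=q+y$, and $d_{32}=y+1$ (up to the common factor $|M_1|^{-1}$, which cancels in every relation). Substituting into the $(i,j)=(2,3)$ and $(3,2)$ relations and expanding reduces them, after simplification in characteristic $2$, to
\[
x^2\,(p+q+x+y)=pq+xy\qquad\text{and}\qquad y^2\,(p+q+x+y)=pq+xy.
\]
Adding the two identities gives $(x+y)^2\,(p+q+x+y)=0$.

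The decisive step is to rule out the spurious factor. If $x\neq y$ then $(x+y)^2\neq0$, forcing $p+q+x+y=0$; feeding this back into either identity yields $pq+xy=0$. Thus $p+q=x+y$ and $pq=xy$, so $p,q$ and $x,y$ are the roots of one and the same monic quadratic $t^2+(x+y)t+xy$, whence $\{c_{22},c_{33}\}=\{c_{23},c_{32}\}$. Each of the two possibilities is fatal: if $(c_{22},c_{33})=(c_{23},c_{32})$ then columns $2$ and $3$ of $M_1$ both equal $(1,c_{22},c_{33})^{T}$, while if $(c_{22},c_{33})=(c_{32},c_{23})$ then rows $2$ and $3$ both equal $(1,c_{22},c_{33})$. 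Either way $M_1$ has a repeated row or column and is singular, contradicting the non-singularity of a semi-involutory matrix. Hence $x=y$, i.e.\ $c_{23}=c_{32}$, and $M_1$ is symmetric. The main obstacle is precisely this last branch: the two $(2,3)/(3,2)$ relations on their own only give the factored equation, and it is the interaction between the symmetric-function equalities $p+q=x+y$, $pq=xy$ and the non-singularity of $M_1$ that closes the argument.
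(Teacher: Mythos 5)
Your proof is correct, but it takes a genuinely different route from the paper's. The paper works directly with the identity $DM_1D'M_1=I$ and reads off only the $(1,2)$ and $(2,1)$ entries: because the first row and column of $M_1$ are all ones, these give the two \emph{linear} relations $\beta_1=a\beta_2+c\beta_3$ and $\beta_1=a\beta_2+b\beta_3$ in the nonzero scalars $\beta_i$, and adding them yields $(b+c)\beta_3=0$, hence $b=c$ immediately --- no inverse entries, no case analysis. You instead eliminate the diagonal matrices entirely via the multiplicative compatibility relations $d_{11}d_{ij}=d_{i1}d_{1j}c_{ij}$, compute the adjugate explicitly, and are then forced to rule out the spurious branch $p+q+x+y=0$ by combining the symmetric-function identities with non-singularity of $M_1$; I checked the cofactors, the two identities $x^2(p+q+x+y)=pq+xy$ and $y^2(p+q+x+y)=pq+xy$, and the degeneracy argument (equal rows or equal columns in either subcase), and all of it holds. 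What your approach buys is a complete picture of what the entrywise constraints on $M_1^{-1}=DM_1D'$ actually say, independent of any clever choice of entries; what it costs is length and the extra branch, which the paper's choice of the off-diagonal entries of the product $DM_1D'M_1$ avoids by staying linear in the $\beta_i$.
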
 
\begin{proof}
Let
\begin{equation*}
    \begin{aligned}
       M_1&
       =\begin{pmatrix}
           1 &1 &1 \\
           1 &a &b \\
           1 &c &d
        \end{pmatrix}
    \end{aligned}
\end{equation*}
be a semi-involutory matrix and $D=\Diag(\alpha_1,\alpha_2,\alpha_3)$ and $D'=\Diag(\splitatcommas{\beta_1,\beta_2,\beta_3})$ be the corresponding diagonal matrices. Thus, we have
% We have
% \[DM_1=\begin{pmatrix}
% \alpha_1 &\alpha_1 &\alpha_1       \\
% \alpha_2 &a\alpha_2 &b\alpha_2 \\
% \alpha_3 &c\alpha_3 &d\alpha_3
% \end{pmatrix} \text{ and } D'M_1=\begin{pmatrix}
% \beta_1 &\beta_1 &\beta_1       \\
% \beta_2 &a\beta_2 &b\beta_2 \\
% \beta_3 &c\beta_3 &d\beta_3
% \end{pmatrix}.\]\\
% Now, since $M_1$ is a semi-involutory matrix, we have
\begin{eqnarray*}
&& DM_1D'=M_1^{-1} \nonumber\\
&\implies& DM_1D'M_1 =I. 
\end{eqnarray*}

Since $(DM_1D'M_1)_{12}=(DM_1D'M_1)_{21}=0$, we have
\[
\beta_1=a\beta_2+c\beta_3 \quad \text{ and }
\beta_1=a\beta_2+b\beta_3.
\]

Adding these two equations, we can derive that $b=c$.

Therefore, $M_1$ is a symmetric matrix. This completes the proof.
\qed
\end{proof}

\begin{remark}
    From Lemma~\ref{Th_symm_semi-invo_ortho_3x3}, we know that any $3\times 3$ semi-involutory of the form $M_1$ is always symmetric. Additionally, by applying Theorems \ref{semi-invo-ortho_cond} and \ref{rep_mat_semi-invo-ortho_cond}, we can conclude that the number of $3 \times 3$ matrices, of the form $M_1$, that are both semi-involutory and semi-orthogonal is equal to the number of $3 \times 3$ semi-involutory matrices of the form $M_1$ over $\mathbb{F}_{2^m}^*$.
\end{remark}

\begin{theorem}\label{count_3_SISO-MDS}
    For $m\geq3$, the count of all $3 \times 3$ MDS matrices possessing both semi-involutory and semi-orthogonal properties over $\mathbb{F}_{2^m}$ is $(2^m-1)^5(2^m-2)(2^m-4)$.
\end{theorem}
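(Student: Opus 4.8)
The plan is to show that, over $\mathbb{F}_{2^m}^*$, the class of $3\times 3$ semi-involutory MDS matrices coincides exactly with the class of $3\times 3$ MDS matrices that are simultaneously semi-involutory and semi-orthogonal; the count $(2^m-1)^5(2^m-2)(2^m-4)$ then follows immediately from Corollary~\ref{Th_count_SIMDS_3}. One inclusion is free: any matrix that is both is in particular semi-involutory. The whole content is therefore the reverse inclusion, namely that every $3\times 3$ semi-involutory MDS matrix is automatically semi-orthogonal.

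First I would pass to the representative form. Given such an $M$, write $M = \Phi(D_1, D_2, M_1)$ by Theorem~\ref{Th_decomposition}; Lemma~\ref{Lemma_DMD_MDS} guarantees that $M_1$ is MDS, and by Lemma~\ref{Lemma_complete_SI} the semi-involutory hypothesis on $M$ is equivalent to $M_1$ being semi-involutory. Since $M_1$ has order $3$ and the form of~(\ref{Eqn_The_matrixM1}), Lemma~\ref{Th_symm_semi-invo_ortho_3x3} forces $M_1 = M_1^{T}$. For a symmetric matrix the relations $M_1^{-1} = D M_1 D'$ and $M_1^{-T} = D M_1 D'$ are identical, because $M_1^{-T} = (M_1^{-1})^{T} = M_1^{-1}$, so $M_1$ is semi-orthogonal with the very same corresponding diagonal matrices; equivalently one invokes Theorem~\ref{semi-invo-ortho_cond} with the trivial relation $M_1 = I\, M_1^{T}\, I$. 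Finally Lemma~\ref{Lemma_complete_SO} lifts the semi-orthogonality of $M_1$ back to $M$, establishing that $M$ is semi-orthogonal and completing the reverse inclusion.

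Combining the two inclusions, the two classes are equal, so the desired count equals the number of $3\times 3$ semi-involutory MDS matrices over $\mathbb{F}_{2^m}$, namely $(2^m-1)^5(2^m-2)(2^m-4)$ by Corollary~\ref{Th_count_SIMDS_3}. Alternatively, one can argue at the representative level directly: the reverse inclusion shows that the MDS matrices of the form $M_1$ that are both coincide with the semi-involutory ones, of which there are $N_2 = (2^m-2)(2^m-4)$, and then one multiplies by the $(2^m-1)^{2\cdot 3 - 1} = (2^m-1)^5$ choices of $(D_1,D_2)$ supplied by Theorems~\ref{Th_decomposition} and~\ref{Th_count_SIMDS}.

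The main obstacle is the reverse inclusion, and in particular its reliance on Lemma~\ref{Th_symm_semi-invo_ortho_3x3}: the implication ``semi-involutory $\Rightarrow$ both'' rests entirely on every $3\times 3$ representative semi-involutory matrix being symmetric. This is a genuinely order-$3$ phenomenon, so the argument does not obviously extend to $n\geq 4$, and the remaining care goes into confirming that symmetry (hence the applicability of Theorem~\ref{semi-invo-ortho_cond}) transfers cleanly through the $\Phi$-decomposition in both directions via Lemmas~\ref{Lemma_complete_SI} and~\ref{Lemma_complete_SO}.
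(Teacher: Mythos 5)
Your proposal is correct and follows essentially the same route as the paper: the paper's (implicit) argument likewise reduces to the representative form $M_1$, invokes Lemma~\ref{Th_symm_semi-invo_ortho_3x3} to get symmetry, applies Theorem~\ref{semi-invo-ortho_cond} (whose hypothesis holds trivially for symmetric matrices, as characterized by Theorem~\ref{rep_mat_semi-invo-ortho_cond}) to conclude that the semi-involutory and the ``both'' classes of form $M_1$ coincide, and then lifts the count via the $(2^m-1)^5$ factor and Corollary~\ref{Th_count_SIMDS_3}. Your direct observation that $M_1^{-T}=M_1^{-1}$ for symmetric $M_1$ is a harmless shortcut that bypasses Theorem~\ref{semi-invo-ortho_cond} but does not change the substance of the argument.
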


% Now in the tables given below, we provide the count of those MDS matrices of order $3$ and $4$ which are both semi-involutory and semi-orthogonal.

% \begin{table}[htbp]
%     \centering
% 	\captionof{table}{Count of $3\times 3$ MDS matrices with both semi-involutory and semi-orthogonal properties}
%         \label{Table:count_3_SIMD&SOMDS}
% 	\vspace{2mm}
% 	\begin{tabular}{|c|c|c|}\hline
% 		Finite Field & The matrix $M_1$ with both properties & Total Count \\ \hline
% 		$\mathbb{F}_{2^3}$ & $24$    &$7^5\times24$\\   
% 		$\mathbb{F}_{2^4}$ & $168$    &$15^5\times168$\\ 
% 		$\mathbb{F}_{2^5}$ & $840$    &$31^5\times840$\\ 
% 		$\mathbb{F}_{2^6}$ & $3720$   &$63^5\times3720$\\ 
% 		$\mathbb{F}_{2^7}$ & $15624$   &$127^5\times15624$\\ 
% 		$\mathbb{F}_{2^8}$ & $64008$   &$255^5\times64008$\\ 
% 		\hline
% 	\end{tabular}
% \end{table}

\begin{remark}
    It is worth mentioning that Lemma \ref{Th_symm_semi-invo_ortho_3x3} is not true for higher orders. For example, consider the matrix
    \[
    M_1=(c_{ij})
    =\begin{pmatrix}
    1  & 1 & 1 & 1 \\
    1  & \alpha^5   &  \alpha       & \alpha^4\\
    1  & \alpha^4   & \alpha^{10}   & \alpha^2 \\
    1  & \alpha^8   &\alpha^5       & \alpha^{11}
    \end{pmatrix}
    \]
    over $\mathbb{F}_{2^4}^*$, where $\alpha$ is a primitive element and a root of the constructing polynomial $x^4+x+1$. It can be checked that the matrix $M_1$ is semi-involutory with the corresponding diagonal matrices $D=(1,\alpha^{12},\alpha,\alpha^{11})$ and $D'=(\alpha^{14},\alpha^{11},1,\alpha^{10})$. However, the matrix $M_1$ is not symmetric. This implies that over $\mathbb{F}_{2^m}^*$, the count of higher order (say $\geq 4$) matrices that are both semi-involutory and semi-orthogonal may not be the same as the number of semi-involutory matrices of that order.
\end{remark}

In Table~\ref{Table:count_4_SIMD&SOMDS}, we present the count of MDS matrices of order $4$ that possess both semi-involutory and semi-orthogonal properties.

\begin{table}[htbp]
    \centering
    \caption{Count of $4\times 4$ MDS matrices with both semi-involutory and semi-orthogonal properties}
    \label{Table:count_4_SIMD&SOMDS}
    \vspace{2mm}
    \begin{tabular}{|c|c|c|}\hline
        Finite Field & The matrix $M_1$ with both properties & Total Count\\ \hline
        $\mathbb{F}_{2^3}$ & $48$        &$7^7\times48$\\   
        $\mathbb{F}_{2^4}$ & $11088$     &$15^7\times11088$\\ 
        \hline
    \end{tabular}
\end{table}

\begin{remark}
    From Tables~\ref{Table:count_4_SIMDS} and \ref{Table:count_4_SIMD&SOMDS}, we can observe that the number of $4 \times 4$ MDS matrices that are both semi-involutory and semi-orthogonal is equal to the number of $4 \times 4$ semi-involutory MDS matrices over $\mathbb{F}_{2^3}$. This equivalence arises because there are no non-symmetric semi-involutory MDS matrices of order $4$ over $\mathbb{F}_{2^3}$ of the form $M_1$. In contrast, among the semi-involutory MDS matrices of the form $M_1$ of order $4$ over $\mathbb{F}_{2^4}$, we find $60768$ non-symmetric matrices and $11088$ symmetric matrices. This discrepancy leads to a different count for the matrices over the field $\mathbb{F}_{2^4}$.
\end{remark}

\section{On the Structure of Orthogonal MDS Matrices}~\label{Sec:orthogonal_count}
In this section, we begin by examining the structure of orthogonal matrices of order $n$ over $\mathbb{F}_{2^m}$. We then establish that the number of orthogonal MDS matrices of order $3$ over $\mathbb{F}_{2^m}$ is given by $(2^m - 2)(2^m - 3)(2^m - 4)$. Additionally, we provide the explicit count of orthogonal MDS matrices of order $4$ over $\mathbb{F}_{2^m}$ for $m = 3$ and $m = 4$. In the following theorem, we derive the general form of orthogonal matrices of order $n$ over $\mathbb{F}_{2^m}$.

\begin{theorem}\label{Th_gen_Form_Ortho}
    Let $M = (m_{ij})$ be a $n \times n$ matrix over $\mathbb{F}_{2^m}$. Then $M$ is orthogonal if and only if $M$ can be expressed in the following form:
    $$\begin{pmatrix}
    m_{11} & m_{12}  & \ldots  & m_{1n-1}    & b_1+1\\
    m_{21} & m_{22}  & \ldots  & m_{2n-1}    & b_2+1 \\
        \vdots &\vdots   &  \ddots &\vdots       & \vdots\\
    m_{n-11} & m_{n-12}  & \ldots  & m_{n-1n-1}  & b_{n-1}+1 \\
    c_1+1  & c_2+1   & \dots   &c_{n-1}+1        & \sum_{i=1}^{n-1}(b_i+1)+1
    \end{pmatrix},$$ 
    where $b_i=\sum_{j=1}^{n-1}m_{ij}$, $c_j=\sum_{i=1}^{n-1}m_{ij}$, $1\leq i,j\leq n-1$ and $m_{ij}$'s satisfy the following equations for $1 \leq k< l \leq n-1$
    \begin{align}
    \sum_{i =1}^{n-1}\sum_{j =1, j \neq i}^{n-1}m_{ki}m_{lj}+b_k+b_l =1 \label{eq:ind_row1}\\
    \sum_{i =1}^{n-1}\sum_{j =1, j \neq i}^{n-1}m_{ik}m_{jl}+c_k+c_l =1. \label{eq:ind_col2}
    \end{align}
\end{theorem}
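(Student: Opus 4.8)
The plan is to read orthogonality off entrywise from the identity $MM^{T}=I$ and to exploit the characteristic-$2$ arithmetic of $\mathbb{F}_{2^m}$. Writing $N=MM^{T}$, the diagonal condition $N_{ii}=1$ reads $\sum_{k=1}^{n}m_{ik}^{2}=1$; since squaring is a field automorphism in characteristic $2$, this equals $\bigl(\sum_{k=1}^{n}m_{ik}\bigr)^{2}=1$, forcing $\sum_{k=1}^{n}m_{ik}=1$. Thus \emph{every row of an orthogonal $M$ sums to $1$}. Because $MM^{T}=I$ is equivalent to $M^{T}M=I$, the same argument applied to $M^{T}$ shows that \emph{every column sums to $1$} as well. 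These two linear constraints are the backbone of the argument.

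First I would use the row- and column-sum constraints to force the boundary of $M$ into the stated shape. The sum of row $i$ (for $1\le i\le n-1$) gives $b_i+m_{in}=1$, i.e.\ $m_{in}=b_i+1$, which is the last column; the sum of column $j$ gives $c_j+m_{nj}=1$, i.e.\ $m_{nj}=c_j+1$, which is the last row; and the remaining sum pins down the corner as $m_{nn}=\sum_{i=1}^{n-1}(b_i+1)+1$, where I use that $\sum_{i=1}^{n-1}b_i=\sum_{j=1}^{n-1}c_j$ (both equal the total of the top-left $(n-1)\times(n-1)$ block). This recovers exactly the displayed matrix form.

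Next I would extract the interior conditions from the off-diagonal equations $N_{kl}=0$. For $1\le k<l\le n-1$, I split $N_{kl}=\sum_{j=1}^{n-1}m_{kj}m_{lj}+m_{kn}m_{ln}$, substitute $m_{kn}=b_k+1$ and $m_{ln}=b_l+1$, and expand $(b_k+1)(b_l+1)=b_kb_l+b_k+b_l+1$. Writing $b_kb_l=\sum_{i,j=1}^{n-1}m_{ki}m_{lj}$ and separating the diagonal terms $i=j$, the diagonal part coincides with $\sum_{j=1}^{n-1}m_{kj}m_{lj}$ and cancels it in characteristic $2$; what survives is precisely \eqref{eq:ind_row1}. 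The transposed computation applied to $M^{T}M$ yields \eqref{eq:ind_col2}. This simultaneously proves the forward (\emph{only if}) direction in full and establishes the interior block of the converse.

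The step I expect to be the crux is closing the converse: showing that the stated form together with \eqref{eq:ind_row1} and \eqref{eq:ind_col2} already forces $MM^{T}=I$, even though no condition is imposed \emph{directly} on entries touching the last row or column. Here the plan is threefold: (i) the prescribed boundary entries guarantee that all rows and all columns of $M$ sum to $1$, so the diagonal of $N$ is identically $1$; (ii) \eqref{eq:ind_row1} gives $N_{kl}=0$ for $k,l\le n-1$; and (iii) for the remaining entry $N_{kn}$ I would use that each row of $N=MM^{T}$ sums to $\sum_{j}m_{kj}\bigl(\sum_{l}m_{lj}\bigr)=\sum_{j}m_{kj}=1$, invoking the column sums and then the row sum of $M$. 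Combining $\sum_{l}N_{kl}=1$ with $N_{kk}=1$ and the vanishing of the interior off-diagonal entries forces $N_{kn}=0$, and symmetry of $N$ gives $N_{nk}=0$; hence $N=I$. A pleasant by-product is that \eqref{eq:ind_col2} is in fact \emph{implied} by the rest (since $MM^{T}=I$ automatically yields $M^{T}M=I$), so the converse genuinely needs only \eqref{eq:ind_row1}. The real obstacle is therefore not a deep one but a careful one: tracking the characteristic-$2$ cancellations and recognizing that the last-index orthogonality relations are not extra hypotheses but consequences of the sum conditions.
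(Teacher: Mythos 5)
Your proposal is correct and follows essentially the same route as the paper: read the diagonal of $MM^{T}=M^{T}M=I$ in characteristic $2$ to force unit row and column sums (hence the boundary entries), then substitute into the off-diagonal conditions for the first $n-1$ rows and columns to obtain \eqref{eq:ind_row1} and \eqref{eq:ind_col2}, with the last-row/column orthogonality relations being redundant. Your converse is in fact spelled out more carefully than the paper's (which only asserts the redundancy), and your observation that \eqref{eq:ind_col2} already follows from \eqref{eq:ind_row1} and the stated form is a correct extra remark.
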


\begin{proof}
Let
\begin{equation*}
\begin{aligned}
M &
= \begin{pmatrix}
m_{11} & m_{12} &\ldots&  m_{1n}\\
m_{21} & m_{22} &\ldots&  m_{2n}\\
m_{31} & m_{32} &\ldots&  m_{3n}\\
    \vdots &\vdots  &\ddots&  \vdots\\
m_{n1} & m_{n2} &\ldots&  m_{nn}
\end{pmatrix} 
\end{aligned}
\end{equation*}        
be a $n\times n$ orthogonal matrix over $\mathbb{F}_{2^m}$. Since $M$ is orthogonal, then $MM^T=M^TM=I$, where $I$ is a $n\times n$ identity matrix. Let $R_i$ and $C_i$ be the $i$th row and column vectors of $M$. We have 
\begin{equation} \label{Th_Orth_mat:eq}
R_i\cdot R_j = C_i\cdot C_j= \delta_{ij}=
\begin{cases} 
1 &  \text{if } i=j, \\
0 &  \text{if } i\neq j.
\end{cases}   
\end{equation}

For $i=j$, we get the following values of $m_{in}$ and $m_{nj}$, $1 \leq i,j \leq n$.
%         \begin{eqnarray}
%         m_{1n}&=& \sum_{j=1}^{n-1}m_{1j}+1= b_1+1         \label{Th_gen_Form_Orth:eq1}\\ 
% %	m_{2n}&=& \sum_{j=1}^{n-1}m_{2j}+1= b_2+1         \label{Th_gen_Form_Orth:eq2}\\
%         &\vdots&                                          \nonumber \\
% 	m_{n-1n}&=&\sum_{j=1}^{n-1}m_{n-1j}+1= b_{n-1}+1  \label{Th_gen_Form_Orth:eq3}\\
%         m_{n1}&=&\sum_{i=1}^{n-1}m_{i1}+1= c_1+1          \label{Th_gen_Form_OrthS:eq4}\\
% %	m_{n2}&=&\sum_{i=1}^{n-1}m_{i2}+1= c_2+1          \label{Th_gen_Form_Orth:eq5}\\
%         &\vdots&                                          \nonumber \\
%         m_{nn-1}&=&\sum_{i=1}^{n-1}m_{in-1}+1= c_{n-1}+1  \label{Th_gen_Form_Orth:e6}\\
%         m_{nn}&=&\sum_{i=1}^{n-1} (b_i+1)+1 = \sum_{j=1}^{n-1}(c_j+1)+1,   \label{Th_gen_Form_Orth:eq7}
% 	\end{eqnarray}
%     where $b_i=\sum_{j=1}^{n-1}m_{ij}$ and $c_j=\sum_{i=1}^{n-1}m_{ij}$.\\

\begin{equation}\label{Eqn:single_i=j}
    \left.
    \begin{array}{ll}
    m_{1n}&= \sum_{j=1}^{n-1}m_{1j}+1= b_1+1 \\ [10pt]
    &\vdots \\ [10pt]
    m_{n-1n}&= \sum_{j=1}^{n-1}m_{n-1j}+1= b_{n-1}+1\\[10pt]
    m_{n1}&= \sum_{i=1}^{n-1}m_{i1}+1= c_1+1 \\ [10pt]
    &\vdots \\ [10pt]
    m_{nn-1}&= \sum_{i=1}^{n-1}m_{in-1}+1= c_{n-1}+1 \\ [10pt]
    m_{nn}&= \sum_{i=1}^{n-1} (b_i+1)+1 = \sum_{j=1}^{n-1}(c_j+1)+1
    \end{array}
    \right \}
\end{equation}\\

where $b_i=\sum_{j=1}^{n-1}m_{ij}$ and $c_j=\sum_{i=1}^{n-1}m_{ij}$, $1\leq i,j\leq n-1$.\\

From (\ref{Th_Orth_mat:eq}), let's consider the dot product of any two distinct rows (columns) of the first $(n-1)$ rows (columns) of $M$ which is zero. Thus, for $1 \leq k < l \leq n-1$, we have the following equations 
\begin{eqnarray}
   R_k\cdot R_l = 0   \label{Th_gen_Form_Orth:eq8} \\
   C_k\cdot C_l = 0.   \label{Th_gen_Form_Orth:eq9} 
\end{eqnarray} 
      
The equations $R_i\cdot R_n = 0$ and $C_i \cdot C_n = 0$  for $1 \leq i \leq n-1$ are redundant because they can be derived from the set of equations (\ref{Eqn:single_i=j}), (\ref{Th_gen_Form_Orth:eq8}) and (\ref{Eqn:single_i=j}), (\ref{Th_gen_Form_Orth:eq9}), respectively.\\

Now, by substituting the values of $m_{kn}$ and $m_{ln}$ in  (\ref{Th_gen_Form_Orth:eq8}), we get the following relation

\begin{align*}
\sum_{i =1}^{n-1}\sum_{j =1, j \neq i}^{n-1}m_{ki}m_{lj}+\sum_{i =1}^{n-1}(m_{ki}+ 
m_{li})&=\sum_{i =1}^{n-1}\sum_{j =1, j \neq i}^{n-1}m_{ki}m_{lj}+b_k+b_l =1. 
\end{align*}

Similarly, by substituting the values of $m_{nk}$ and $m_{nl}$ in  (\ref{Th_gen_Form_Orth:eq9}), we get the following relation
\begin{align*}
\sum_{i =1}^{n-1}\sum_{j =1, j \neq i}^{n-1}m_{ik}m_{jl}+\sum_{i =1}^{n-1}(m_{ik}+ 
  m_{il})&=\sum_{i =1}^{n-1}\sum_{j =1, j \neq i}^{n-1}m_{ik}m_{jl}+c_k+c_l =1.
\end{align*}
%Therefore, (\ref{eq:ind_row}) and (\ref{eq:ind_col}) represent $(n-1)(n-2)$ equations in total. 
Thus, we obtain the desired form for $M$. This completes the proof. 
\qed
\end{proof}

\noindent There have been numerous studies on $2^n \times 2^n$ MDS matrices that focus on efficient implementation. However, other dimensions have not been explored as thoroughly. Since sub-matrices of an MDS matrix are also MDS, studying the structure of $3 \times 3$ MDS matrices could be useful in generating higher order MDS matrices. In addition, $3 \times 3$ MDS matrices that are involutory or orthogonal can be valuable in the design of lightweight block ciphers and hash functions with non-standard block or key sizes. For example, in~\cite{barreto2007curupira}, the CURUPIRA block cipher, designed for constrained platforms, was introduced. CURUPIRA operates on 96-bit plaintext blocks with key sizes of 96, 144, or 192 bits. The diffusion layer of CURUPIRA employs a $3 \times 3$ MDS matrix over $\mathbb{F}_{2^8}$. In Theorem~\ref{count_ortho_mds}, we provide an explicit formula for counting $3 \times 3$ orthogonal MDS matrices over $\mathbb{F}_{2^m}$. To derive this formula, we first identify the generic form of $3 \times 3$ orthogonal matrices in Corollary~\ref{rep_ortho}, which is established using Theorem~\ref{Th_gen_Form_Ortho}. Additionally, by analyzing Equations \ref{eq:ind_row1} and \ref{eq:ind_col2}, we find that they reduce to a single equation 
\[
m_{11}m_{22}+m_{12}m_{21}+m_{11}+m_{12}+m_{21}+m_{22}=1 
\]
for a $3 \times 3$ orthogonal matrix over $\mathbb{F}_{2^m}$. This leads to the following characterization of such matrices.

\begin{corollary}\label{rep_ortho}
    Let $M = (m_{ij})$ be a $3 \times 3$ matrix over $\mathbb{F}_{2^m}$. Then $M$ is orthogonal if and only if $M$ can be expressed in the following form 
    \begin{equation*}
        \begin{aligned}
            \begin{pmatrix}
            m_{11} & m_{12}  & m_{11}+m_{12}+1 \\
            m_{21} & m_{22} & m_{21}+m_{22}+1 \\
            m_{11}+m_{21}+1 & m_{12}+m_{22}+1 & m_{11}+m_{12}+m_{21}+m_{22}+1
            \end{pmatrix},
        \end{aligned}
    \end{equation*}
    where $m_{11}m_{22}+m_{12}m_{21}+m_{11}+m_{12}+m_{21}+m_{22} = 1$.
\end{corollary}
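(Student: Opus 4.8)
The plan is to obtain this as a direct specialization of Theorem~\ref{Th_gen_Form_Ortho} to the case $n = 3$, where the only free entries are those of the top-left $2\times 2$ block, namely $m_{11}, m_{12}, m_{21}, m_{22}$. First I would instantiate the auxiliary quantities of that theorem: with $n = 3$ we have $b_1 = m_{11}+m_{12}$, $b_2 = m_{21}+m_{22}$, $c_1 = m_{11}+m_{21}$, and $c_2 = m_{12}+m_{22}$. Substituting these into the generic matrix form immediately fills in the last column as $(b_1+1,\,b_2+1)^T = (m_{11}+m_{12}+1,\,m_{21}+m_{22}+1)^T$ and the last row as $(c_1+1,\,c_2+1) = (m_{11}+m_{21}+1,\,m_{12}+m_{22}+1)$, matching the claimed shape. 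For the corner entry I would simplify $\sum_{i=1}^{2}(b_i+1)+1$; in characteristic $2$ the three additive ones collapse to a single one, giving $m_{11}+m_{12}+m_{21}+m_{22}+1$, as required.

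It then remains to reduce the constraint equations (\ref{eq:ind_row1}) and (\ref{eq:ind_col2}) to the single displayed equation. Since the index range is $1 \leq k < l \leq n-1 = 2$, the only admissible pair is $(k,l) = (1,2)$, so each family contributes exactly one equation. Expanding the double sum in (\ref{eq:ind_row1}) for this pair, the restriction $j \neq i$ leaves only the cross terms $m_{11}m_{22}$ (from $i=1,\,j=2$) and $m_{12}m_{21}$ (from $i=2,\,j=1$); adding $b_1+b_2 = m_{11}+m_{12}+m_{21}+m_{22}$ yields precisely $m_{11}m_{22}+m_{12}m_{21}+m_{11}+m_{12}+m_{21}+m_{22} = 1$. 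The key observation---and essentially the only point requiring care---is that the column equation (\ref{eq:ind_col2}) collapses to exactly the same relation: its double sum again picks up $m_{11}m_{22}$ and $m_{21}m_{12}$, while $c_1+c_2$ coincides with $b_1+b_2$. Thus the two a priori distinct conditions are identical, leaving the single scalar constraint stated in the corollary.

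Because Theorem~\ref{Th_gen_Form_Ortho} already establishes the equivalence between orthogonality and the conjunction of the matrix form with equations (\ref{eq:ind_row1})--(\ref{eq:ind_col2}), the specialization above settles both directions simultaneously: every $3\times 3$ orthogonal matrix must have the displayed form subject to the single constraint, and conversely any matrix of that form is orthogonal. I do not anticipate a genuine obstacle here; the work is purely the bookkeeping of the $n=3$ substitution, the one substantive piece being the verification that the row and column constraints coincide into a single quadratic relation.
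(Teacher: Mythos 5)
Your proposal is correct and follows exactly the route the paper takes: the paper presents Corollary~\ref{rep_ortho} as a direct specialization of Theorem~\ref{Th_gen_Form_Ortho} to $n=3$, observing (as you do) that equations (\ref{eq:ind_row1}) and (\ref{eq:ind_col2}) collapse to the single relation $m_{11}m_{22}+m_{12}m_{21}+m_{11}+m_{12}+m_{21}+m_{22}=1$. Your verification that the row and column constraints coincide is the same key bookkeeping step the paper relies on, carried out correctly.
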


The explicit formula for counting orthogonal MDS matrices of order $3$ over $\mathbb{F}_{2^m}$ is obtained in the following theorem.

\begin{theorem}\label{count_ortho_mds}
   For $m\geq 3$, the number of all  orthogonal MDS matrices of order  $3$ over $\mathbb{F}_{2^m}$ is $(2^m-2)(2^m-3)(2^m-4)$.  
\end{theorem}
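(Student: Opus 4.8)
The plan is to begin from the explicit parametrization provided by Corollary~\ref{rep_ortho}: every $3\times 3$ orthogonal matrix over $\mathbb{F}_{2^m}$ is determined by the four free entries $(a,b,c,d):=(m_{11},m_{12},m_{21},m_{22})$ subject to the single relation $ad+bc+a+b+c+d=1$, the other five entries being the prescribed affine combinations of $a,b,c,d$. Recall that a square matrix is MDS if and only if all of its square submatrices are non-singular, so here $M$ is MDS iff its nine entries, its nine $2\times 2$ minors, and its $3\times 3$ determinant are all nonzero. Since $M$ is orthogonal over a field of characteristic $2$, $\det(M)^2=\det(MM^T)=\det(I)=1$ forces $\det(M)=1$; thus the top-order minor is automatically nonzero and may be ignored.

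The crucial step is to show that the MDS requirement collapses to the single condition that all nine entries are nonzero. First I would record the nine $1\times 1$ conditions, namely $a,b,c,d\neq 0$ together with $a+b+1\neq 0$, $c+d+1\neq 0$, $a+c+1\neq 0$, $b+d+1\neq 0$ and $a+b+c+d+1\neq 0$. Then I would compute each of the nine $2\times 2$ minors explicitly and simplify it modulo the defining relation $ad+bc+a+b+c+d=1$. A direct calculation shows that every such minor is congruent to exactly one of the nine entries: for example the minor on rows $\{1,3\}$ and columns $\{1,3\}$ reduces to $d$, while the minor on rows $\{1,2\}$ and columns $\{1,2\}$, namely $ad+bc$, reduces to $a+b+c+d+1$. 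Consequently the nine $2\times 2$ minors are nonzero precisely when the nine entries are, and being MDS is equivalent to all nine entries being nonzero.

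It then remains to count the quadruples $(a,b,c,d)$ on the quadric with all nine entries nonzero. I would solve the relation for $d$, writing $d(a+1)=1+a+b+c(b+1)$, and split on the coefficient $a+1$. When $a=1$ the relation forces $(b+1)(c+1)=1$ and leaves $d$ free; here $b$ ranges over the $2^m-2$ values with $b\neq 0,1$ (so that $c$ is well defined and nonzero), and the four forbidden values $\{0,\,b+1,\,c+1,\,b+c\}$ of $d$ turn out to be pairwise distinct, yielding $2^m-4$ admissible $d$ and hence $(2^m-2)(2^m-4)$ matrices in this case. When $a\neq 0,1$ there are $2^m-2$ choices of $a$, and for each fixed $(a,b)$ the quantity $d$ is an affine function $d=\mu c+\nu$ of $c$; rewriting the remaining conditions as $b\notin\{0,a+1\}$, $c\notin\{0,a+1\}$ and $d\notin\{0,\,b+1,\,c+1,\,a+b+c+1\}$ makes the count symmetric in $b$ and $c$ and reduces the problem to forbidding at most six explicit values of the free pair.

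The main obstacle is the bookkeeping in the case $a\neq 0,1$: one must determine, for each admissible $(a,b)$, how many of the (at most six) forbidden $c$-values coincide, and separately handle the degeneracies $\mu=0$ (equivalently $b=1$) and $\mu=1$ (equivalently $b=a$), where two of the conditions either merge or become vacuous. I expect the coincidence analysis to show that the number of admissible $(b,c)$ equals $(2^m-4)^2$ for each of the $2^m-2$ choices of $a$, giving $(2^m-2)(2^m-4)^2$ matrices in this case. Summing the two cases then gives $(2^m-2)(2^m-4)^2+(2^m-2)(2^m-4)=(2^m-2)(2^m-4)(2^m-3)$, which is exactly the claimed count $(2^m-2)(2^m-3)(2^m-4)$.
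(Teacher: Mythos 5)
Your proposal is correct and follows essentially the same route as the paper's proof: the same parametrization from Corollary~\ref{rep_ortho}, the same reduction of the MDS condition to the non-vanishing of the nine entries (the paper gets this slightly more directly from $M^{-1}=M^{T}$, i.e.\ the $2\times2$ cofactors of $M$ are $\det(M)\,m_{ij}$, rather than by reducing each minor modulo the quadric), and the same case split $m_{11}=1$ versus $m_{11}\neq 1$ with the same degenerate subcases $m_{12}=1$ and $m_{12}=m_{11}$, yielding the identical counts $(2^m-2)(2^m-4)$ and $(2^m-2)(2^m-4)^2$.
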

\begin{proof}
    Let $M$ be a $3\times 3$ matrix of the form as defined in Corollary \ref{rep_ortho}. From the definition of an MDS matrix, every square sub-matrix of an MDS matrix is non-singular. Since the inverse of an orthogonal matrix is the transpose of that matrix, if all the elements of an orthogonal matrix are non-zero, then all the minors of order $1$ and $2$ of that matrix are non-zero. Therefore, $M$ is MDS if and only if all elements of $M$ are non-zero, i.e., 
    \begin{eqnarray}\label{eqn_ortho_count}
    && m_{11}, m_{12}, m_{21}, m_{22} \in \mathbb{F}_{2^m}^*, m_{12} \neq  m_{11}+1, m_{21} \neq  m_{11}+1 \nonumber \\
    && m_{22}\neq m_{12}+1, m_{22}\neq m_{21}+1, m_{22} \neq m_{11}+m_{12}+m_{21}+1.
    \end{eqnarray}
    
    \noindent We will now analyze two cases to determine the count of all orthogonal MDS matrices of order $3$.  \\
    
   \noindent \textbf{Case I:} $m_{11}=1$. From (\ref{eqn_ortho_count}), we have the following conditions  on the parameters of $M$
    \begin{eqnarray*}
    && m_{12}, m_{21}, m_{22} \in \mathbb{F}_{2^m}^*, m_{22} \neq  m_{12}+1, m_{22} \neq  m_{21}+1\\
    && m_{22}\neq m_{12}+m_{21}.
    \end{eqnarray*}
    
    The parameters of $M$ also satisfy the following relation
    \begin{equation*}
    m_{11}m_{22}+m_{12}m_{21}+m_{11}+m_{12}+m_{21}+m_{22} = 1.    
    \end{equation*}

    Since $m_{11} = 1$, it follows that $m_{12}m_{21} = m_{12} + m_{21}$. Now, suppose $m_{12} = 1$. Substituting this into the equation, we get $m_{21} = 1 + m_{21}$, which leads to a contradiction. Therefore, $m_{12} \neq 1$. Consequently, $m_{21} = (m_{12} + 1)^{-1}m_{12}$.

    % Since $m_{11}=1$, we have $m_{12}m_{21} = m_{12}+m_{21}$. Now, if possible let $m_{12}=1$. then we have $m_{21}=1+m_{21}$ which is not possible. Therefore, $m_{12} \neq 1$. Hence $m_{21}=(m_{12}+1)^{-1}m_{12}$.

    In this case, the number of all orthogonal MDS matrices of order $3$ over $\mathbb{F}_{2^m}$ is equal to the cardinality of the set $S_1$, defined as:
    
    \begin{center}
    \begin{small}
        $S_1= \big\{\splitatcommas{(m_{11}, m_{12}, m_{21}, m_{22})\in \mathbb{F}_{2^m}^4:~m_{11}=1, 
             m_{12} \not \in \{0,1\}, m_{21}=(m_{12}+1)^{-1}m_{12}, m_{22} \neq 0, 
             m_{22}\neq m_{12}+1, m_{22}\neq  (m_{12}+1)^{-1}m_{12}+1,
             m_{22}\neq (m_{12}+1)^{-1}m_{12}+m_{12}
             } \big \}$.
    \end{small}
    \end{center}
	
    Since $m_{11}=1$ and $m_{21}=(m_{12}+1)^{-1}m_{12}$, then the problem of finding the cardinality of $S_1$ is reduced to find the count of two parameters $m_{12}$ and $m_{22}$.
    
    Now, since $m_{12}\in \mathbb{F}_{2^m}\setminus \{0,1\}$, we have $2^m-2$ many choices for $m_{12}$.
    
    Also, since $m_{22} \neq 0$,  $m_{22}\neq m_{12}+1$, $m_{22}\neq (m_{12}+1)^{-1}m_{12}+1$, and $m_{22}\neq (m_{12}+1)^{-1}m_{12}+m_{12}$, we can say that $m_{22}\notin T_1$, where $T_1$ is defined as
    $$T_1= \set{0,m_{12}+1,(m_{12}+1)^{-1}m_{12}+1,(m_{12}+1)^{-1}m_{12}+m_{12}}.$$
    
    We now demonstrate that the cardinality of $T_1$ is $4$ for any $m_{12} \in \mathbb{F}_{2^m} \setminus \set{0,1}$. To establish this, we equate all pairs of elements in $T_1$ and solve for $m_{12}$. The resulting cases are as follows:
       
   \begin{enumerate}[(i)]
    \setlength{\itemsep}{-1em}
       \item $m_{12}+1=0 \implies m_{12}=1$,  \\
       \item $(m_{12}+1)^{-1}m_{12}+1=0 \implies  m_{12}=m_{12}+1$,\\
       \item $(m_{12}+1)^{-1}m_{12}+m_{12}=0 \implies  m_{12}=0$, \\
       \item $m_{12}+1=(m_{12}+1)^{-1}m_{12}+1 \implies  m_{12}=0$, \\
       \item $m_{12}+1=(m_{12}+1)^{-1}m_{12}+m_{12} \implies m_{12}=m_{12}+1$,\\
       \item $(m_{12}+1)^{-1}m_{12}+1=(m_{12}+1)^{-1}m_{12}+m_{12} \implies m_{12}=1$.
   \end{enumerate}
   
    By analyzing these cases, we confirm that $T_1$ contains precisely four distinct elements. Hence, we have $2^m-4$ many choices for $m_{22}$. Therefore, the cardinality of $S_1$ is $(2^m-2)(2^m-4)$.\\
      
   \noindent \textbf{Case II:} $m_{11}\neq 1$. From (\ref{eqn_ortho_count}) we have the following conditions on the parameters of $M$
    \begin{eqnarray*} 
    && m_{11}, m_{12}, m_{21}, m_{22} \in \mathbb{F}_{2^m}^*, m_{12} \neq  m_{11}+1, m_{21} \neq  m_{11}+1 \nonumber \\
    && m_{22}\neq m_{12}+1, m_{22}\neq m_{21}+1, m_{22} \neq m_{11}+m_{12}+m_{21}+1.
   \end{eqnarray*}

    The parameters of $M$ also satisfy the following relation
    \begin{eqnarray} 
        && m_{11}m_{22}+m_{12}m_{21}+m_{11}+m_{12}+m_{21}+m_{22}=1 \nonumber \\
        && \implies m_{22}= (m_{11}+1)^{-1}(m_{12}m_{21}+m_{11}+m_{12}+m_{21}+1).  
    \end{eqnarray}
	
    Now, we will eliminate some restrictions on $m_{22}$. 
    Since $m_{22} \neq 0$ and $m_{22}= (m_{11}+1)^{-1}(m_{12}m_{21}+m_{11}+m_{12}+m_{21}+1)$, we have
    \begin{eqnarray*}
        && m_{12}m_{21}+m_{11}+m_{12}+m_{21}+1 \neq 0\\
        &\implies& m_{21}(m_{12}+1)\neq m_{11}+m_{12}+1.
    \end{eqnarray*}
    
    Also, $m_{22} \neq m_{12}+1$ implies that
    \begin{eqnarray*}
        && m_{12}m_{21}+m_{11}+m_{12}+m_{21}+1 \neq (m_{11}+1) (m_{12}+1) \\
        &\implies& m_{12}m_{21}+m_{11}+m_{12}+m_{21}+1 \neq m_{11}m_{12}+m_{11}+m_{12}+1 \\
        &\implies& m_{21}(m_{12}+1) \neq m_{11}m_{12}. 
    \end{eqnarray*}
    
    Also, when $m_{22} \neq m_{21}+1$ implies that
    \begin{eqnarray*}
        && m_{12}m_{21}+m_{11}+m_{12}+m_{21}+1 \neq (m_{11}+1) (m_{21}+1) \\
        &\implies& m_{12}m_{21}+m_{11}+m_{12}+m_{21}+1 \neq m_{11}m_{21}+m_{11}+m_{21}+1 \\
        &\implies& m_{21}(m_{12}+m_{11}) \neq m_{12}.
    \end{eqnarray*}
    
    Similarly, $m_{22} \neq m_{11}+m_{12}+m_{21}+1$ implies that  
    \begin{eqnarray*}
        &\implies&  m_{12}m_{21}+m_{11}+m_{12}+m_{21}+1 \neq (m_{11}+1) (m_{11}+m_{12}+m_{21}+1) \\
        &\implies&  m_{12}m_{21} \neq m_{11}(m_{11}+m_{12}+m_{21}+1) \\ 
        &\implies&  m_{21}(m_{12}+m_{11}) \neq m_{11}(m_{11}+m_{12}+1).
    \end{eqnarray*}
    
    Now, in this case, the number of all orthogonal MDS matrices of order  $3$ over $\mathbb{F}_{2^m}$ is equal to the cardinality of the set $S_2$, defined as:

    \begin{small}
        \begin{equation}\label{Eqn:Set-S2}
            \begin{aligned}
                S_2 = \Big\{ 
                & (m_{11}, m_{12}, m_{21}, m_{22}) \in \mathbb{F}_{2^m}^4:~
                m_{11} \notin \{0,1\},~m_{12} \neq 0,~m_{12} \neq m_{11} + 1,~m_{21} \neq 0, \\
                & ~m_{21} \neq m_{11} + 1,~m_{21}(m_{12}+1)\neq m_{11}+m_{12}+1,~m_{21}(m_{12} + 1) \neq m_{11}m_{12},\\
                &~m_{21}(m_{12} + m_{11}) \neq m_{12},~m_{21}(m_{12} + m_{11}) \neq m_{11}(m_{11} + m_{12} + 1), \\
                &~m_{22} = (m_{11} + 1)^{-1}(m_{12}m_{21} + m_{11} + m_{12} + m_{21} + 1)
                \Big\}.
            \end{aligned}
        \end{equation}
    \end{small}

    % \begin{center}
    %     \begin{small}
    %         $S_2=\{\splitatcommas{(m_{11}, m_{12}, m_{21},m_{22})\in \mathbb{F}_{2^m}^4:~m_{11} 
    %              \notin \set{0,1},~m_{12}\neq 0,~m_{12} \neq  m_{11}+1,~m_{21}\neq 0,~m_{21} \neq  m_{11}+1,
    %              ~m_{21}\neq m_{12}m_{21}+m_{11}+m_{12}+1, ~m_{21}(m_{12}+1) \neq m_{11}m_{12},
    %              ~m_{21}(m_{12}+m_{11}) \neq m_{12}, ~m_{21}(m_{12}+m_{11}) \neq m_{11}(m_{11}+m_{12}+1),
    %              ~m_{22}= (m_{11}+1)^{-1}(m_{12}m_{21}+m_{11}+m_{12}+m_{21}+1)
    %              }\}$.
    %     \end{small}
    % \end{center}
    
    Since $m_{22}= (m_{11}+1)^{-1}(m_{12}m_{21}+m_{11}+m_{12}+m_{21}+1)$, then the problem of finding the cardinality of $S_2$ is reduced to find the count of three parameters $m_{11}$, $m_{12}$ and $m_{21}$. For this, we will analyze two subcases further. \\

   \noindent \textbf{Subcase 1:} $m_{12}=1$. In this Subcase, the set $S_2$, given in (\ref{Eqn:Set-S2}), is reduced to
   \begin{center}
        \begin{small}
            $S_2=\big \{\splitatcommas{(m_{11}, m_{12}, m_{21},m_{22})\in \mathbb{F}_{2^m}^4:~m_{11} \notin 
                 \set{0,1},~m_{12}=1,~m_{21}\neq 0,~m_{21} \neq  m_{11}+1,
                 ~m_{21} \neq (m_{11}+1)^{-1},~m_{21} \neq (m_{11}+1)^{-1}m_{11}^2,
                 ~m_{22}= (m_{11}+1)^{-1}m_{11}
                 } \big \}$.
        \end{small}
    \end{center}
 
       Now, since $m_{11}\in \mathbb{F}_{2^m}\setminus \{0,1\}$, we have $2^m-2$ many choices for $m_{11}$.\\
    
       Also since $m_{21}\neq 0$,  $m_{21} \neq  m_{11}+1$, $m_{21} \neq (m_{11}+1)^{-1}$, and   $m_{21} \neq (m_{11}+1)^{-1}m_{11}^2$, we can say that $m_{21}\notin T_2$, where $T_2$ is defined as
       $$T_2= \set{0,m_{11}+1,(m_{11}+1)^{-1}, (m_{11}+1)^{-1}m_{11}^2}.$$

       We now demonstrate that the cardinality of $T_2$ is $4$ for any values of $m_{11} \in \mathbb{F}_{2^m}\setminus\{0,1\}$. To establish this, we equate all pairs of elements in $T_1$ and solve for $m_{11}$. By analyzing these six cases, we confirm that $T_2$ contains precisely four distinct elements. Thus, for $m_{21}$, we have $2^m-4$ many choices over $\mathbb{F}_{2^m}$.\\
       
       Hence, in this Subcase, the cardinality of $S_2$ is $(2^m-2)(2^m-4)$.\\
       
      \noindent \textbf{Subcase 2:} $m_{12}\neq 1$. In this subcase, we will analyze two subcases further. \\
    
       \textbf{Subcase 2.1:} $m_{11}=m_{12}$. In this Subcase, the set $S_2$, given in (\ref{Eqn:Set-S2}), is reduced to
       \begin{center}
		\begin{small}
			$S_2=\big \{\splitatcommas{(m_{11}, m_{12}, m_{21},m_{22})\in \mathbb{F}_{2^m}^4:~m_{11} \notin \set{0,1},
             ~m_{12}=m_{11}, ~m_{21}\neq 0, ~m_{21} \neq  m_{11}+1,
             ~m_{21} \neq (m_{11}+1)^{-1},  ~m_{21} \neq (m_{11}+1)^{-1}m_{11}^2,
             ~m_{22}= (m_{11}+1)^{-1}(m_{11}m_{21}+m_{21}+1)
             } \big \}$.
		\end{small}
	\end{center}
 
       Now, since $m_{11}\in \mathbb{F}_{2^m}\setminus \{0,1\}$, we have $2^m-2$ many choices for $m_{11}$.\\

    Also, we have $m_{21} \neq 0$, $m_{21} \neq m_{11} + 1$, $m_{21} \neq (m_{11} + 1)^{-1}$, and $m_{21} \neq (m_{11} + 1)^{-1}m_{11}^2$. Therefore, by the similar analysis done for Subcase 1, we can conclude that we have $2^m - 4$ many choices for $m_{21}$.\\
       
   Hence, in Subcase 2.1, the cardinality of $S_2$ is $(2^m-2)(2^m-4)$.\\

       \textbf{Subcase 2.2:} $m_{11} \neq m_{12}$. In this Subcase, the set $S_2$, given in (\ref{Eqn:Set-S2}), is reduced to
       \begin{center}
		\begin{small}
			$S_2=\big \{\splitatcommas{(m_{11}, m_{12}, m_{21},m_{22})\in \mathbb{F}_{2^m}^4:~m_{11},m_{12} 
                 \notin \set{0,1},
                 ~m_{12}\neq m_{11}, ~m_{12} \neq  m_{11}+1, ~m_{21}\neq 0, ~m_{21} \neq  m_{11}+1,
                 ~m_{21} \neq (m_{12}+1)^{-1}(m_{11}+m_{12}+1),  ~m_{21} \neq (m_{12}+1)^{-1}m_{11}m_{12},
                 ~m_{21} \neq (m_{11}+m_{12})^{-1}m_{12},
                 ~m_{21} \neq m_{11}(m_{11}+m_{12})^{-1}(m_{11}+m_{12}+1),
                 ~m_{22}= (m_{11}+1)^{-1}(m_{12}m_{21}+m_{11}+m_{12}+m_{21}+1)
                 } \big \}$.
		\end{small}
	\end{center}
 
       Now, since $m_{11}\in \mathbb{F}_{2^m}\setminus \{0,1\}$, we have $2^m-2$ many choices for $m_{11}$.\\

       Also since $m_{12}\in \mathbb{F}_{2^m}\setminus \{0,1\}$,  $m_{12} \neq  m_{11}$,  and $m_{12} \neq m_{11}+1$, for $m_{12}$, we have $2^m-4$ many choices.\\
    
       Also since $~m_{21}\neq 0$,  $m_{21} \neq  m_{11}+1$,  $m_{21} \neq (m_{12}+1)^{-1}(m_{11}+m_{12}+1)$,  $m_{21} \neq (m_{12}+1)^{-1}m_{11}m_{12}$, $m_{21} \neq (m_{11}+m_{12})^{-1}m_{12}$, and 
     $m_{21} \neq m_{11}(m_{11}+m_{12})^{-1}(m_{11}+m_{12}+1)$, we can say that $m_{21} \notin T_3$, where $T_3$ is defined as
        \begin{equation*}
            \begin{split}
             &T_3= \{0,m_{11}+1, (m_{12}+1)^{-1}(m_{11}+m_{12}+1), (m_{12}+1)^{-1}m_{11}m_{12},\\
             &\qquad (m_{11}+m_{12})^{-1}m_{12}, m_{11}(m_{11}+m_{12})^{-1}(m_{11}+m_{12}+1)
             \}.
            \end{split}
        \end{equation*}

    We now demonstrate that the cardinality of $T_3$ is $6$ for any values of $m_{11} \notin \mathbb{F}_{2^m}\setminus\{0,1\}$ and $m_{12} \notin \set{0,1,m_{11},m_{11}+1}$. To establish this, we equate all pairs of elements in $T_3$ and solve for $m_{11}$ or $m_{12}$. By analyzing these $15$ cases, we can say that $T_3$ contains precisely six elements. Therefore, we have $2^m-6$ many choices for $m_{21}$.\\

   Hence, in Subcase 2.2, the cardinality of $S_2$ is $(2^m-2)(2^m-4)(2^m-6)$.\\

   Therefore, in Subcase 2, the cardinality of $S_2$ is given by
   \begin{equation*}
       \begin{aligned}
           &(2^m-2)(2^m-4)+(2^m-2)(2^m-4)(2^m-6)\\
           &=(2^m-2)(2^m-4)(2^m-5).
       \end{aligned}
   \end{equation*}

The cardinality of $S_2$ in \textbf{Case II} is determined by whether $m_{12} = 1$ (Subcase 1) or $m_{12} \neq 1$ (Subcase 2), and is given by        
\begin{equation*}
   \begin{aligned}
       &(2^m-2)(2^m-4)+(2^m-2)(2^m-4)(2^m-5)\\
       &=(2^m-2)(2^m-4)^2.
   \end{aligned}
\end{equation*}

Finally, by analyzing \textbf{Case I}  and \textbf{Case II}, we can conclude that the total number of orthogonal MDS matrices of order $3$ over $\mathbb{F}_{2^m}$ is
\begin{equation*}
   \begin{aligned}
       &(2^m-2)(2^m-4)+(2^m-2)(2^m-4)^2\\
       &=(2^m-2)(2^m-3)(2^m-4).
   \end{aligned}
\end{equation*}	
This concludes the proof.
\qed
\end{proof}

The number of all $3 \times 3$ orthogonal MDS matrices is $(2^m-2)(2^m-3)(2^m-4)$, as demonstrated in Theorem \ref{count_ortho_mds}. Hence, from Theorem \ref{Th_count_Orthogonal_MDS}, the number of all $3\times 3$ semi-orthogonal representative MDS matrices of the form $M_1$ is $(2^m-2)(2^m-3)(2^m-4)$. By the definition of a semi-orthogonal matrix, for any two non-singular diagonal matrices $D_1$ and $D_2$, if $M_1$ is semi-orthogonal, then $M=\Phi(D_1,D_2,M_1)$ is also a semi-orthogonal matrix. Hence, choices for $D_1$ and $D_2$ for $M$ to be a semi-orthogonal matrix are $(2^m-1)^5$, therefore, the number of all $3\times 3$ semi-orthogonal MDS  matrices is $(2^m-1)^5(2^m-2)(2^m-3)(2^m-4)$. We formally present this result below.

\begin{theorem}\label{Th_count_SOMDS_3}
    For $m\geq 3$, the number of semi-orthogonal MDS matrices of order $3$ over $\mathbb{F}_{2^m}$ is $(2^m-1)^5(2^m-2)(2^m-3)(2^m-4)$.
\end{theorem}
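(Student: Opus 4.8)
The plan is to chain together the three enumeration results already established in the excerpt, exploiting the fact that the desired count factors cleanly as the product of the number of orthogonal MDS matrices and a power of $2^m-1$ that accounts for the diagonal degrees of freedom in the decomposition $M = \Phi(D_1, D_2, M_1)$.

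First I would invoke Theorem~\ref{count_ortho_mds}, which gives that for $m \geq 3$ the number of $3 \times 3$ orthogonal MDS matrices over $\mathbb{F}_{2^m}$ equals $(2^m-2)(2^m-3)(2^m-4)$. This is the substantive input; its derivation (the casework on $m_{11}$ and the auxiliary parameter $m_{12}$, together with the verification that the exclusion sets $T_1, T_2, T_3$ have the claimed cardinalities) constitutes the only genuinely technical step, and the remainder of the present argument is pure bookkeeping.

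Next I would apply Theorem~\ref{Th_count_Orthogonal_MDS}, specialized to $n = 3$, which asserts that the number of orthogonal MDS matrices of order $3$ over $\mathbb{F}_{2^m}$ coincides with the number $N_4$ of semi-orthogonal MDS matrices of order $3$ of the representative form $M_1$. Combining this identification with the previous step yields immediately that $N_4 = (2^m-2)(2^m-3)(2^m-4)$.

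Finally I would feed this value of $N_4$ into Theorem~\ref{Th_count_Semi-OMDS} at $n = 3$. That theorem states that the total number of semi-orthogonal MDS matrices of order $n$ over $\mathbb{F}_{2^m}$ is $(2^m-1)^{2n-1} \cdot N_4$; for $n = 3$ the exponent is $2 \cdot 3 - 1 = 5$, so the count becomes
\[
(2^m-1)^5 \cdot N_4 = (2^m-1)^5 (2^m-2)(2^m-3)(2^m-4),
\]
which is exactly the claimed formula. Since each step is a direct citation of an already-proved statement, I do not anticipate any real obstacle: all the difficulty is front-loaded into the orthogonal count of Theorem~\ref{count_ortho_mds}, and this final result is essentially a corollary obtained by multiplying that count by the $(2^m-1)^5$ free choices for the two non-singular diagonal matrices $D_1$ and $D_2$ (normalized so that the first entry of $D_2$ is $1$, as in Theorem~\ref{Th_decomposition}).
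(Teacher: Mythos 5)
Your proposal is correct and follows essentially the same route as the paper: the authors likewise take the orthogonal MDS count from Theorem~\ref{count_ortho_mds}, identify it with the number of semi-orthogonal representative matrices $M_1$ via Theorem~\ref{Th_count_Orthogonal_MDS}, and then multiply by the $(2^m-1)^5$ choices of diagonal matrices exactly as in Theorem~\ref{Th_count_Semi-OMDS} with $n=3$. No gaps.
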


\begin{remark}
    We note that in a recent work~\cite{kumar2026study}, the authors derived a formula for counting $3 \times 3$ semi-orthogonal MDS matrices over $\mathbb{F}_{2^m}$. We derive the general structure of $n\times n$ orthogonal matrices over $\mathbb{F}_{2^m}$, which yields the formula $(2^m - 2)(2^m - 3)(2^m - 4)$ for $3 \times 3$ orthogonal MDS matrices. The structural interconnections we establish in Theorem~\ref{Th_count_Orthogonal_MDS} between semi-orthogonal and orthogonal matrices provide a theoretical link between our enumeration and their result.
\end{remark}

% \begin{proof}
% 	The number of all $3 \times 3$ orthogonal MDS matrices is $(2^m-2)(2^m-3)(2^m-4)$, as demonstrated in Theorem \ref{count_ortho_mds}. Hence, from Theorem \ref{Th_count_Orthogonal_MDS}, the number of all $3\times 3$ semi-orthogonal representative MDS matrices of the form $M_1$ is $(2^m-2)(2^m-3)(2^m-4)$. By definition of a semi-orthogonal matrix, for any two non-singular diagonal matrices $D_1$ and $D_2$, if $M_1$ is semi-orthogonal, then $M=\Phi(D_1,D_2,M_1)$ is also a semi-orthogonal matrix. Hence, choices for $D_1$ and $D_2$ for $M$ to be a semi-orthogonal matrix are $(2^m-1)^5$, therefore, the number of all $3\times 3$ semi-orthogonal MDS  matrices is $(2^m-1)^5(2^m-2)(2^m-3)(2^m-4)$.
% \end{proof}

\noindent From Theorem~\ref{Th_gen_Form_Ortho}, we know that any $4\times 4$ orthogonal matrix over $\mathbb{F}_{2^m}$ can be expressed in the following form:
\begin{equation}\label{Eqn:GenForm-4Ortho}
    \begin{pmatrix}
    m_{11} & m_{12}  & m_{13}  & b_1+1\\
    m_{21} & m_{22} & m_{23} & b_2+1 \\
    m_{31} & m_{32} & m_{33} & b_3+1 \\
    c_1+1 & c_2+1& c_3+1 & b_1+b_2+b_3
\end{pmatrix},
\end{equation}
where $b_i=\sum_{j=1}^{3}m_{ij}$, $c_j=\sum_{i=1}^{3}m_{ij}$, $1\leq i,j \leq 3$ and $m_{ij}$'s satisfy the following equations for $1 \leq k< l \leq 3$
\begin{align*}
    \sum_{i =1}^{3}\sum_{j =1, j \neq i}^{3}m_{ki}m_{lj}+b_k+b_l =1 \\
    \sum_{i =1}^{3}\sum_{j =1, j \neq i}^{3}m_{ik}m_{jl}+c_k+c_l =1.
\end{align*}

Now, we are ready to get a count of the $4\times 4$ orthogonal MDS matrices over $\mathbb{F}_{2^m}$. The result is given in Table~\ref{Table:4_semi_ortho_MDS}. Note that because of the large search space involved, we are unable to offer the count of all $4\times 4$ orthogonal MDS matrices over $\mathbb{F}_{2^m}$ for $m\geq 5$. Nevertheless, no explicit count is currently available for all $4\times 4$ orthogonal MDS matrices. Hence, an explicit formula for counting the orthogonal MDS matrices of order $n\geq 4$ could be a possible direction for future research.

\begin{remark}
   It is worth noting that there is an additional advantage to constructing an orthogonal MDS matrix. We know that if all of the entries in the inverse of a $4 \times 4$ non-singular matrix are non-zero, then all of its $3 \times 3$ sub-matrices are also non-singular (in this case, $M^{-1}=M^T$). Therefore, to confirm the MDS property of the orthogonal matrix of order $4$, we need to inspect sub-matrices of $M$ of order $1$ and $2$.
\end{remark}

\noindent Since $4 \times 4$ matrices are most commonly used in the design of block ciphers and hash functions, we focus on finding efficient orthogonal MDS matrices of order 4. To do so, we use the general form given in (\ref{Eqn:GenForm-4Ortho}) to identify the lightest orthogonal MDS matrices. In the past, it was widely believed that implementing the multiplication of finite field elements with low Hamming weights incurred lower hardware costs. However, in 2014, the authors of~\cite{dXOR_FOAM2014} proposed an approach for estimating implementation costs by counting the number of XOR gates (d-XOR gates) required to implement the field element based on the multiplicative matrix of the element. \\

We adopt this metric to identify lightweight orthogonal MDS matrices of order 4 over $\mathbb{F}_{2^3}$ and $\mathbb{F}_{2^4}$. We found that orthogonal MDS matrices of order 4 achieve the lowest d-XOR count of 64 over $\mathbb{F}_{2^3}$, with 144 such matrices exhibiting this optimal count. These matrices are provided in Appendix~\ref{Appendix:4-OMDS_F_2^3}. Similarly, we observed that orthogonal MDS matrices of order 4 achieve the lowest d-XOR count of 72 over $\mathbb{F}_{2^4}$, with 144 matrices that have the lowest d-XOR count. These matrices are listed in Appendix~\ref{Appendix:4-OMDS_F_2^4}.

\begin{table}[ht]
    \centering
    \caption{Comparison of the count of $3\times 3$ MDS matrices with various properties}
	\label{Table:Count_Comparison_3_MDS}
	\vspace{2mm}
	\begin{tabular}{|c|c|c|c|c|c|c|}\hline
           & IMDS   & SIMDS  & OMDS  & SOMDS & SISOMDS \\ \hline
           $\mathbb{F}_{2^3}$  & $7^2\times24$      & $7^5\times24$        & $120$      & $7^5\times 120$  & $7^5\times24$  \\   
           $\mathbb{F}_{2^4}$  & $15^2\times168$    & $15^5\times168$      & $2184$     & $15^5\times 2184$  & $15^5\times168$ \\                       
           $\mathbb{F}_{2^5}$  & $31^2\times840$    & $31^5\times840$      & $24360$    & $31^5\times 24360$  & $31^5\times840$ \\                      
           $\mathbb{F}_{2^6}$  & $63^2\times3720$   & $63^5\times3720$     & $226920$   & $63^5\times 226920$  & $63^5\times3720$ \\                   
           $\mathbb{F}_{2^7}$  & $127^2\times15624$ & $127^5\times15624$   & $1953000$  & $127^5\times 1953000$  & $127^5\times15624$ \\                  
           $\mathbb{F}_{2^8}$  & $255^2\times64008$ & $255^5\times64008$   & $16194024$ & $255^5\times 16194024$  & $255^5\times64008$ \\            
		\hline
	\end{tabular}
\end{table}

% \footnotetext{The abbreviations are as follows. IMDS: Involutory MDS, SIMDS: Semi-involutory MDS, OMDS: Orthogonal MDS, SOMDS: Semi-orthogonal MDS, and SISOMDS: Semi-involutory and semi-orthogonal MDS. The same abbreviations apply to Table~\ref{Table:Count_Comparison_4_MDS}.}

\begin{table}[ht]
    \centering
    \begin{threeparttable}
    \caption{Comparison of the count of $4\times 4$ MDS matrices with various properties~\tnote{1}}
    \label{Table:Count_Comparison_4_MDS}
    \vspace{2mm}
    \begin{tabular}{|c|c|c|c|c|c|}\hline
        &     IMDS   & SIMDS  & OMDS  & SOMDS & SISOMDS\\ \hline
        $\mathbb{F}_{2^3}$  &$7^3\times48$             &$7^7\times48$        & $720$     & $7^7\times 720$      &$7^7\times48$\\ 
        $\mathbb{F}_{2^4}$  &$15^3\times71856$         &$15^7\times71856$    & $1147440$ & $15^7\times 1147440$ &$15^7\times11088$ \\ 
        $\mathbb{F}_{2^5}$  &$31^3\times10188240$      &$31^7\times10188240$ &&&\\ 
        $\mathbb{F}_{2^6}$  &$63^3\times612203760$     &$63^7\times612203760$ &&&\\ 
        $\mathbb{F}_{2^7}$  &$127^3\times26149708368$  &$127^7\times26149708368$ &&&\\ 
        $\mathbb{F}_{2^8}$  &$255^3\times961006331376$ &$255^7\times961006331376$ &&&\\
        \hline
    \end{tabular}
    \begin{tablenotes}
            \item[1] The abbreviations are as follows. IMDS: Involutory MDS, SIMDS: Semi-involutory MDS, OMDS: Orthogonal MDS, SOMDS: Semi-orthogonal MDS, and SISOMDS: Semi-involutory and semi-orthogonal MDS. The same abbreviations apply to Table~\ref{Table:Count_Comparison_3_MDS}.
        \end{tablenotes}
    \end{threeparttable}
\end{table}

% \section{Conclusion}~\label{Sec:Conclusion}
% This paper presents an exploration of the relationships between semi-involutory and involutory matrices, as well as between semi-orthogonal and orthogonal. By leveraging these connections, it demonstrates that the enumeration of semi-involutory MDS matrices can be directly derived from the count of involutory MDS matrices, with a similar correspondence holding for semi-orthogonal and orthogonal matrices. Using the interconnections, it simply proves the formula for the counting of $3 \times 3$ semi-involutory MDS matrices. Additionally, it establishes explicit formulas for the counts of both orthogonal and semi-orthogonal MDS matrices of size $3 \times 3$ over $\mathbb{F}_{2^m}$ and provides the count of both orthogonal and semi-orthogonal MDS matrices of order $4$ over $\mathbb{F}_{2^m}$ for $m = 3, 4$.

\section{Conclusion}\label{Sec:Conclusion}
% This paper presents a study of the structural relationships between semi-involutory and involutory matrices, as well as between semi-orthogonal and orthogonal matrices. By leveraging these structural connections, we demonstrate that the enumeration of semi-involutory MDS matrices can be directly inferred from the count of involutory MDS matrices, and vice versa. An analogous correspondence is established between semi-orthogonal and orthogonal MDS matrices. Using these interconnections, we provide a simplified derivation of the formula for counting $3 \times 3$ semi-involutory MDS matrices. Furthermore, we establish explicit formulas for the counts of $3 \times 3$ orthogonal and semi-orthogonal MDS matrices over $\mathbb{F}_{2^m}$. We also determine the number of $4 \times 4$ semi-involutory MDS matrices over $\mathbb{F}_{2^m}$ for $m = 3, 4, \ldots, 8$, as well as the number of $4 \times 4$ orthogonal and semi-orthogonal MDS matrices over $\mathbb{F}_{2^m}$ for $m = 3, 4$.

This paper presents a study of the structural relationships between semi-involutory and involutory matrices, as well as between semi-orthogonal and orthogonal matrices. We show that these relationships allow us to derive the enumeration of semi-involutory MDS matrices from the count of involutory ones, and vice versa. A similar correspondence is demonstrated for the semi-orthogonal and orthogonal cases. Leveraging these results, we provide a simplified derivation of the formula for counting $3 \times 3$ semi-involutory MDS matrices. Additionally, we establish explicit formulas for enumerating $3 \times 3$ orthogonal and semi-orthogonal MDS matrices over $\mathbb{F}_{2^m}$. We also characterize the class of matrices that simultaneously satisfy both the semi-involutory and semi-orthogonal properties. Using this characterization, we prove that the number of $3 \times 3$ MDS matrices possessing both properties is identical to the count of $3 \times 3$ semi-involutory MDS matrices over $\mathbb{F}_{2^m}$. Finally, we extend our analysis to $4 \times 4$ matrices by determining the number of semi-involutory MDS matrices over $\mathbb{F}_{2^m}$ for $m = 3, 4, \ldots, 8$, as well as the counts of orthogonal and semi-orthogonal MDS matrices over $\mathbb{F}_{2^m}$ for $m = 3, 4$.

\medskip

\bibliographystyle{plain}
\bibliography{ref}

@ARTICLE{shan,
    author={Shannon, C. E.},  
    journal={The {B}ell {S}ystem {T}echnical Journal},   
    title={Communication {T}heory of {S}ecrecy {S}ystems},   
    year={1949},  
    volume={28},  
    number={4},  
    pages={656-715}
}

@article{GR15,
    title = {Cryptographically significant {MDS} matrices based on circulant and circulant-like matrices for lightweight applications},
    journal = {Cryptography and Communications},
    volume = {7},
    number = {},
    pages = {257-287},
    year = {2015},
    author = {Kishan Chand Gupta and
          Indranil Ghosh Ray}
}

@inproceedings{kc2,
    title={On constructions of involutory {MDS} matrices},
    author={Gupta, Kishan Chand and Ray, Indranil Ghosh},
    booktitle={International Conference on Cryptology in Africa},
    pages={43--60},
    year={2013},
    organization={Springer}
}

@article{kcz,
    title = {Cryptographically significant {MDS} matrices over finite fields: {A} brief survey and some generalized results},
    journal = {Advances in Mathematics of Communications},
    volume = {13},
    number = {4},
    pages = {779-843},
    year = {2019},
    author = {Kishan Chand Gupta and
          Sumit Kumar Pandey and
          Indranil Ghosh Ray and
          Susanta Samanta}
}

@article{sdm,
    author = {Sajadieh, Mahdi and Dakhilalian, Mohammad and Mala, Hamid and Omoomi, Behnaz},
    title = {On Construction of {I}nvolutory {MDS} {M}atrices from {V}andermonde {M}atrices in {$GF(2^q)$}},
    year = {2012},
    issue_date = {September 2012},
    publisher = {Kluwer Academic Publishers},
    address = {USA},
    volume = {64},
    number = {3},
    issn = {0925-1022},
    journal = {Designs, Codes and Cryptography},
    month = {sep},
    pages = {287-308}
}

@inproceedings{skf,
    author="Sim, Siang Meng
    and Khoo, Khoongming
    and Oggier, Fr{\'e}d{\'e}rique
    and Peyrin, Thomas",
    editor="Leander, Gregor",
    title="Lightweight {MDS} {I}nvolution {M}atrices",
    booktitle="Fast Software Encryption",
    year="2015",
    publisher="Springer Berlin Heidelberg",
    address="Berlin, Heidelberg",
    pages="471--493",
    isbn="978-3-662-48116-5"
}

@article{psa,
    author = {Pehlivano{\~{g}}lu, Meltem Kurt and Sakalli, Muharrem Tolga and Akleylek, Sedat and Duru, Nevcihan and Rijmen, Vincent},
    title = {Generalisation of {H}adamard matrix to generate involutory {MDS} matrices for lightweight cryptography},
    journal = {IET Information Security},
    volume = {12},
    number = {4},
    pages = {348-355},
    year = {2018}
}

@article{gmt,
    title={A new matrix form to generate all $3 \times 3$ involutory {MDS} matrices over $\mathbb{F}_{2^m}$},
    author={G{\"u}zel, G{\"u}ls{\"u}m G{\"o}zde and Sakalli, Muharrem Tolga and Akleylek, Sedat and Rijmen, Vincent and {\c{C}}engellenmi{\c{s}}, Yasemin},
    journal={Information {P}rocessing {L}etters},
    volume={147},
    pages={61--68},
    year={2019},
    publisher={Elsevier}
}

@InProceedings{LACAN2003,
    author="Lacan, J{\'e}r{\^o}me
    and Fimes, J{\'e}r{\^o}me",
    editor="Mullen, Gary L.
    and Poli, Alain
    and Stichtenoth, Henning",
    title="A {C}onstruction of {M}atrices with {N}o {S}ingular {S}quare {S}ubmatrices",
    booktitle="Finite Fields and Applications",
    year="2004",
    publisher="Springer Berlin Heidelberg",
    address="Berlin, Heidelberg",
    pages="145--147",
    isbn="978-3-540-24633-6"
}

@misc{Gupta2023direct,
    title = {On the {D}irect {C}onstruction of {MDS} and {N}ear-{MDS} {M}atrices},
    year={2023},
    archivePrefix={arXiv},
    primaryClass={cs.IT},
    note = {\url{https://arxiv.org/abs/2306.12848}},
    author = {Kishan Chand Gupta and
          Sumit Kumar Pandey and
          Susanta Samanta}
}

@ARTICLE{Roth1989,
    author={Roth, Ron M. and Lempel, Abraham},
    journal={IEEE Transactions on Information Theory}, 
    title={On {MDS} codes via {C}auchy matrices}, 
    year={1989},
    volume={35},
    number={6},
    pages={1314-1319}
}

@InProceedings{CYCLICM,
    author="Liu, Meicheng
    and Sim, Siang Meng",
    editor="Peyrin, Thomas",
    title="Lightweight {MDS} {G}eneralized {C}irculant {M}atrices",
    booktitle="Fast Software Encryption",
    year="2016",
    publisher="Springer Berlin Heidelberg",
    address="Berlin, Heidelberg",
    pages="101--120",
    isbn="978-3-662-52993-5"
}

@book{FJ77,
  author = {MacWilliams, F.J. and Sloane, N.J.A.},
  title = {The {T}heory of {E}rror {C}orrecting {C}odes},
  year = {1977},
  publisher = {North-Holland Publishing Co., Amsterdam-New York-Oxford},
}

@article{Kumar_MDS2024,
    title = {Construction of all {MDS} and involutory {MDS} matrices},
    journal = {Advances in Mathematics of Communications},
    volume = {19},
    number = {3},
    pages = {922--941},
    year = {2025},
    author = {Yogesh Kumar and Prasanna Raghaw Mishra and Susanta Samanta and Kishan Chand Gupta and Atul Gaur},
    keywords = {Diffusion layer, MDS matrix, involutory matrix, finite field}
}

@article{cheon2021semi,
	title={Semi-involutory matrices and signed self-inverse},
	author={Cheon, Gi-Sang and Curtis, Bryan and Kim, Hana},
	journal={Linear Algebra and its Applications},
	volume={622},
	pages={294--315},
	year={2021},
	publisher={Elsevier}
}

@article{FIEDLER2012,
    title = {G-matrices},
    journal = {Linear Algebra and its Applications},
    volume = {436},
    number = {3},
    pages = {731-741},
    year = {2012},
    issn = {0024-3795},
    author = {Miroslav Fiedler and Frank J. Hall}
}

@misc{chatterjee2024_3_SIMDS,
  title={A {C}haracterization of {S}emi-{I}nvolutory {MDS} {M}atrices}, 
  author={Tapas Chatterjee and Ayantika Laha},
  year={2024},
  eprint={2406.12842},
  archivePrefix={arXiv},
  primaryClass={cs.CR},
  note = {\url{https://arxiv.org/abs/2406.12842}}
}

@misc{chatterjee2024_circulant,
      title={A note on {MDS} Property of {C}irculant {M}atrices}, 
      author={Tapas Chatterjee and Ayantika Laha},
      year={2024},
      eprint={2406.16973},
      archivePrefix={arXiv},
      primaryClass={cs.CR},
      note = {\url{https://arxiv.org/abs/2406.16973}}
}

@article{chatterjee2023note,
    title={A note on semi-orthogonal ({G}-matrix) and semi-involutory {MDS} matrices},
    author={Chatterjee, Tapas and Laha, Ayantika},
    journal={Finite Fields and Their Applications},
    volume={92},
    pages={102279},
    year={2023},
    publisher={Elsevier}
}

@InProceedings{avalanche_effect,
  author="Webster, A. F.
  and Tavares, S. E.",
  editor="Williams, Hugh C.",
  title="On the {D}esign of {S-B}oxes",
  booktitle="Advances in Cryptology --- CRYPTO '85 Proceedings",
  year="1986",
  publisher="Springer Berlin Heidelberg",
  address="Berlin, Heidelberg",
  pages="523--534",
  isbn="978-3-540-39799-1"
}

@misc{Orthogonal_MDS2017,
  author = {Lijing Zhou and Licheng Wang and Yiru Sun},
  title = {On the {C}onstruction of {L}ightweight {O}rthogonal {MDS} {M}atrices},
  howpublished = {Cryptology {ePrint} Archive, Paper 2017/371},
  year = {2017},
  url = {https://eprint.iacr.org/2017/371}
}

@article{barreto2007curupira,
  title={CURUPIRA, a block cipher for constrained platforms},
  author={Barreto, PSLM and Simplicio, M},
  journal={25th Brazilian Symposium on Computer Networks and Distributed Systems (SBRC)},
  volume={1},
  pages={61--74},
  year={2007}
}

@article{IterativeMDS2019,
  author       = {Shun Li and
                  Siwei Sun and
                  Danping Shi and
                  Chaoyun Li and
                  Lei Hu},
  title        = {Lightweight {I}terative {MDS} {M}atrices: {H}ow {S}mall {C}an {W}e {G}o?},
  journal      = {{IACR} Trans. Symmetric Cryptol.},
  volume       = {2019},
  number       = {4},
  pages        = {147--170},
  year         = {2019},
  url          ={https://doi.org/10.13154/tosc.v2019.i4.147-170},
  doi          = {10.13154/TOSC.V2019.I4.147-170}
}

@article{Li2019FSE,
    author = {Li, Shun and Sun, Siwei and Li, Chaoyun and Wei, Zihao and Hu, Lei},
    year = {2019},
    month = {03},
    pages = {84-117},
    title = {Constructing {L}ow-latency {I}nvolutory {MDS} {M}atrices with {L}ightweight {C}ircuits},
    journal = {IACR Transactions on Symmetric Cryptology},
    doi = {10.46586/tosc.v2019.i1.84-117}
}

@article{Duval2018,
    author       = {S{\'{e}}bastien Duval and
                  Ga{\"{e}}tan Leurent},
    title        = {{MDS} {M}atrices with {L}ightweight {C}ircuits},
    journal      = {{IACR} Trans. Symmetric Cryptol.},
    volume       = {2018},
    number       = {2},
    pages        = {48--78},
    year         = {2018},
    url          = {https://doi.org/10.13154/tosc.v2018.i2.48-78},
    doi          = {10.13154/TOSC.V2018.I2.48-78}
}

@article{Sajadieh2021,
    author={Mahdi Sajadieh
    and Mohsen Mousavi},
    title={Construction of {MDS} matrices from generalized {F}eistel structures},
    journal={Designs, Codes and Cryptography},
    year={2021},
    day={01},
    volume={89},
    number={7},
    pages={1433-1452},
    issn={1573-7586},
    doi={10.1007/s10623-021-00876-6},
    url={https://doi.org/10.1007/s10623-021-00876-6}
}

@article{xiang2020,
    title={Optimizing {I}mplementations of {L}inear {L}ayers},
    author={Xiang, Zejun and Zeng, Xiangyoung and Lin, Da and Bao, Zhenzhen and Zhang, Shasha},
    journal={IACR Transactions on Symmetric Cryptology},
    pages={120--145},
    year={2020}
}

@InProceedings{dXOR_FOAM2014,
  author="Khoo, Khoongming
  and Peyrin, Thomas
  and Poschmann, Axel Y.
  and Yap, Huihui",
  editor="Batina, Lejla
  and Robshaw, Matthew",
  title="{FOAM}: {S}earching for {H}ardware-{O}ptimal {SPN} {S}tructures and {C}omponents with a {F}air {C}omparison",
  booktitle="Cryptographic Hardware and Embedded Systems -- CHES 2014",
  year="2014",
  publisher="Springer Berlin Heidelberg",
  address="Berlin, Heidelberg",
  pages="433--450"
}

@article{kumar2026study,
  title={On the study of semi-involutory and semi-orthogonal matrices},
  author={Kumar, Yogesh and Samanta, Susanta and Mishra, PR and Gaur, Atul},
  journal={Finite Fields and Their Applications},
  volume={110},
  pages={102730},
  year={2026},
  publisher={Elsevier}
}

@article{Kumar_4MDS,
  title = {A systematic construction approach for all $4\times 4$ involutory {MDS} matrices},
  author = {Yogesh Kumar and P. R. Mishra and Susanta Samanta and Atul Gaur},
  journal = {Journal of Applied Mathematics and Computing},
  year = {2024},
  volume = {70},
  number = {5},
  pages = {4677--4697},
  isbn = {1865-2085},
  doi = {10.1007/s12190-024-02142-z},
  url = {https://doi.org/10.1007/s12190-024-02142-z}
}

% \newpage
\appendix
\section{The $4\times 4$ orthogonal MDS matrices with optimal d-XOR count $64$ over $\mathbb{F}_{2^3}/0$x$b$} \label{Appendix:4-OMDS_F_2^3}

The tuples in the following are represented as $(m_{11}, m_{12}, m_{13}, m_{21}, m_{22}, m_{23}, m_{31}, m_{32}, m_{33})$, where $m_{ij}$ ($1 \leq i,j \leq 3$) are defined by equation (\ref{Eqn:GenForm-4Ortho}).

\begin{small}
\begin{longtable}{c c c}
$(1_x, 2_x, 4_x, 2_x, 1_x, 6_x, 4_x, 6_x, 1_x)$,& $(1_x, 2_x, 4_x, 2_x, 1_x, 6_x, 6_x, 4_x, 2_x)$,& $(1_x, 2_x, 4_x, 4_x, 6_x, 1_x, 2_x, 1_x, 6_x)$ \\ 
$(1_x, 2_x, 4_x, 4_x, 6_x, 1_x, 6_x, 4_x, 2_x)$,& $(1_x, 2_x, 4_x, 6_x, 4_x, 2_x, 2_x, 1_x, 6_x)$,& $(1_x, 2_x, 4_x, 6_x, 4_x, 2_x, 4_x, 6_x, 1_x)$ \\ 
$(1_x, 2_x, 6_x, 2_x, 1_x, 4_x, 4_x, 6_x, 2_x)$,& $(1_x, 2_x, 6_x, 2_x, 1_x, 4_x, 6_x, 4_x, 1_x)$,& $(1_x, 2_x, 6_x, 4_x, 6_x, 2_x, 2_x, 1_x, 4_x)$ \\ 
$(1_x, 2_x, 6_x, 4_x, 6_x, 2_x, 6_x, 4_x, 1_x)$,& $(1_x, 2_x, 6_x, 6_x, 4_x, 1_x, 2_x, 1_x, 4_x)$,& $(1_x, 2_x, 6_x, 6_x, 4_x, 1_x, 4_x, 6_x, 2_x)$ \\ 
$(1_x, 4_x, 2_x, 2_x, 6_x, 1_x, 4_x, 1_x, 6_x)$,& $(1_x, 4_x, 2_x, 2_x, 6_x, 1_x, 6_x, 2_x, 4_x)$,& $(1_x, 4_x, 2_x, 4_x, 1_x, 6_x, 2_x, 6_x, 1_x)$ \\ 
$(1_x, 4_x, 2_x, 4_x, 1_x, 6_x, 6_x, 2_x, 4_x)$,& $(1_x, 4_x, 2_x, 6_x, 2_x, 4_x, 2_x, 6_x, 1_x)$,& $(1_x, 4_x, 2_x, 6_x, 2_x, 4_x, 4_x, 1_x, 6_x)$ \\ 
$(1_x, 4_x, 6_x, 2_x, 6_x, 4_x, 4_x, 1_x, 2_x)$,& $(1_x, 4_x, 6_x, 2_x, 6_x, 4_x, 6_x, 2_x, 1_x)$,& $(1_x, 4_x, 6_x, 4_x, 1_x, 2_x, 2_x, 6_x, 4_x)$ \\ 
$(1_x, 4_x, 6_x, 4_x, 1_x, 2_x, 6_x, 2_x, 1_x)$,& $(1_x, 4_x, 6_x, 6_x, 2_x, 1_x, 2_x, 6_x, 4_x)$,& $(1_x, 4_x, 6_x, 6_x, 2_x, 1_x, 4_x, 1_x, 2_x)$ \\ 
$(1_x, 6_x, 2_x, 2_x, 4_x, 1_x, 4_x, 2_x, 6_x)$,& $(1_x, 6_x, 2_x, 2_x, 4_x, 1_x, 6_x, 1_x, 4_x)$,& $(1_x, 6_x, 2_x, 4_x, 2_x, 6_x, 2_x, 4_x, 1_x)$ \\ 
$(1_x, 6_x, 2_x, 4_x, 2_x, 6_x, 6_x, 1_x, 4_x)$,& $(1_x, 6_x, 2_x, 6_x, 1_x, 4_x, 2_x, 4_x, 1_x)$,& $(1_x, 6_x, 2_x, 6_x, 1_x, 4_x, 4_x, 2_x, 6_x)$ \\ 
$(1_x, 6_x, 4_x, 2_x, 4_x, 6_x, 4_x, 2_x, 1_x)$,& $(1_x, 6_x, 4_x, 2_x, 4_x, 6_x, 6_x, 1_x, 2_x)$,& $(1_x, 6_x, 4_x, 4_x, 2_x, 1_x, 2_x, 4_x, 6_x)$ \\ 
$(1_x, 6_x, 4_x, 4_x, 2_x, 1_x, 6_x, 1_x, 2_x)$,& $(1_x, 6_x, 4_x, 6_x, 1_x, 2_x, 2_x, 4_x, 6_x)$,& $(1_x, 6_x, 4_x, 6_x, 1_x, 2_x, 4_x, 2_x, 1_x)$ \\ 
$(2_x, 1_x, 4_x, 1_x, 2_x, 6_x, 4_x, 6_x, 2_x)$,& $(2_x, 1_x, 4_x, 1_x, 2_x, 6_x, 6_x, 4_x, 1_x)$,& $(2_x, 1_x, 4_x, 4_x, 6_x, 2_x, 1_x, 2_x, 6_x)$ \\ 
$(2_x, 1_x, 4_x, 4_x, 6_x, 2_x, 6_x, 4_x, 1_x)$,& $(2_x, 1_x, 4_x, 6_x, 4_x, 1_x, 1_x, 2_x, 6_x)$,& $(2_x, 1_x, 4_x, 6_x, 4_x, 1_x, 4_x, 6_x, 2_x)$ \\ 
$(2_x, 1_x, 6_x, 1_x, 2_x, 4_x, 4_x, 6_x, 1_x)$,& $(2_x, 1_x, 6_x, 1_x, 2_x, 4_x, 6_x, 4_x, 2_x)$,& $(2_x, 1_x, 6_x, 4_x, 6_x, 1_x, 1_x, 2_x, 4_x)$ \\ 
$(2_x, 1_x, 6_x, 4_x, 6_x, 1_x, 6_x, 4_x, 2_x)$,& $(2_x, 1_x, 6_x, 6_x, 4_x, 2_x, 1_x, 2_x, 4_x)$,& $(2_x, 1_x, 6_x, 6_x, 4_x, 2_x, 4_x, 6_x, 1_x)$ \\ 
$(2_x, 4_x, 1_x, 1_x, 6_x, 2_x, 4_x, 2_x, 6_x)$,& $(2_x, 4_x, 1_x, 1_x, 6_x, 2_x, 6_x, 1_x, 4_x)$,& $(2_x, 4_x, 1_x, 4_x, 2_x, 6_x, 1_x, 6_x, 2_x)$ \\ 
$(2_x, 4_x, 1_x, 4_x, 2_x, 6_x, 6_x, 1_x, 4_x)$,& $(2_x, 4_x, 1_x, 6_x, 1_x, 4_x, 1_x, 6_x, 2_x)$,& $(2_x, 4_x, 1_x, 6_x, 1_x, 4_x, 4_x, 2_x, 6_x)$ \\ 
$(2_x, 4_x, 6_x, 1_x, 6_x, 4_x, 4_x, 2_x, 1_x)$,& $(2_x, 4_x, 6_x, 1_x, 6_x, 4_x, 6_x, 1_x, 2_x)$,& $(2_x, 4_x, 6_x, 4_x, 2_x, 1_x, 1_x, 6_x, 4_x)$ \\ 
$(2_x, 4_x, 6_x, 4_x, 2_x, 1_x, 6_x, 1_x, 2_x)$,& $(2_x, 4_x, 6_x, 6_x, 1_x, 2_x, 1_x, 6_x, 4_x)$,& $(2_x, 4_x, 6_x, 6_x, 1_x, 2_x, 4_x, 2_x, 1_x)$ \\ 
$(2_x, 6_x, 1_x, 1_x, 4_x, 2_x, 4_x, 1_x, 6_x)$,& $(2_x, 6_x, 1_x, 1_x, 4_x, 2_x, 6_x, 2_x, 4_x)$,& $(2_x, 6_x, 1_x, 4_x, 1_x, 6_x, 1_x, 4_x, 2_x)$ \\ 
$(2_x, 6_x, 1_x, 4_x, 1_x, 6_x, 6_x, 2_x, 4_x)$,& $(2_x, 6_x, 1_x, 6_x, 2_x, 4_x, 1_x, 4_x, 2_x)$,& $(2_x, 6_x, 1_x, 6_x, 2_x, 4_x, 4_x, 1_x, 6_x)$ \\
$(2_x, 6_x, 4_x, 1_x, 4_x, 6_x, 4_x, 1_x, 2_x)$,& $(2_x, 6_x, 4_x, 1_x, 4_x, 6_x, 6_x, 2_x, 1_x)$,& $(2_x, 6_x, 4_x, 4_x, 1_x, 2_x, 1_x, 4_x, 6_x)$ \\ 
$(2_x, 6_x, 4_x, 4_x, 1_x, 2_x, 6_x, 2_x, 1_x)$,& $(2_x, 6_x, 4_x, 6_x, 2_x, 1_x, 1_x, 4_x, 6_x)$,& $(2_x, 6_x, 4_x, 6_x, 2_x, 1_x, 4_x, 1_x, 2_x)$ \\ 
$(4_x, 1_x, 2_x, 1_x, 4_x, 6_x, 2_x, 6_x, 4_x)$,& $(4_x, 1_x, 2_x, 1_x, 4_x, 6_x, 6_x, 2_x, 1_x)$,& $(4_x, 1_x, 2_x, 2_x, 6_x, 4_x, 1_x, 4_x, 6_x)$ \\ 
$(4_x, 1_x, 2_x, 2_x, 6_x, 4_x, 6_x, 2_x, 1_x)$,& $(4_x, 1_x, 2_x, 6_x, 2_x, 1_x, 1_x, 4_x, 6_x)$,& $(4_x, 1_x, 2_x, 6_x, 2_x, 1_x, 2_x, 6_x, 4_x)$ \\ 
$(4_x, 1_x, 6_x, 1_x, 4_x, 2_x, 2_x, 6_x, 1_x)$,& $(4_x, 1_x, 6_x, 1_x, 4_x, 2_x, 6_x, 2_x, 4_x)$,& $(4_x, 1_x, 6_x, 2_x, 6_x, 1_x, 1_x, 4_x, 2_x)$ \\ 
$(4_x, 1_x, 6_x, 2_x, 6_x, 1_x, 6_x, 2_x, 4_x)$,& $(4_x, 1_x, 6_x, 6_x, 2_x, 4_x, 1_x, 4_x, 2_x)$,& $(4_x, 1_x, 6_x, 6_x, 2_x, 4_x, 2_x, 6_x, 1_x)$ \\ 
$(4_x, 2_x, 1_x, 1_x, 6_x, 4_x, 2_x, 4_x, 6_x)$,& $(4_x, 2_x, 1_x, 1_x, 6_x, 4_x, 6_x, 1_x, 2_x)$,& $(4_x, 2_x, 1_x, 2_x, 4_x, 6_x, 1_x, 6_x, 4_x)$ \\ 
$(4_x, 2_x, 1_x, 2_x, 4_x, 6_x, 6_x, 1_x, 2_x)$,& $(4_x, 2_x, 1_x, 6_x, 1_x, 2_x, 1_x, 6_x, 4_x)$,& $(4_x, 2_x, 1_x, 6_x, 1_x, 2_x, 2_x, 4_x, 6_x)$ \\
$(4_x, 2_x, 6_x, 1_x, 6_x, 2_x, 2_x, 4_x, 1_x)$,& $(4_x, 2_x, 6_x, 1_x, 6_x, 2_x, 6_x, 1_x, 4_x)$,& $(4_x, 2_x, 6_x, 2_x, 4_x, 1_x, 1_x, 6_x, 2_x)$ \\ 
$(4_x, 2_x, 6_x, 2_x, 4_x, 1_x, 6_x, 1_x, 4_x)$,& $(4_x, 2_x, 6_x, 6_x, 1_x, 4_x, 1_x, 6_x, 2_x)$,& $(4_x, 2_x, 6_x, 6_x, 1_x, 4_x, 2_x, 4_x, 1_x)$ \\ 
$(4_x, 6_x, 1_x, 1_x, 2_x, 4_x, 2_x, 1_x, 6_x)$,& $(4_x, 6_x, 1_x, 1_x, 2_x, 4_x, 6_x, 4_x, 2_x)$,& $(4_x, 6_x, 1_x, 2_x, 1_x, 6_x, 1_x, 2_x, 4_x)$ \\ 
$(4_x, 6_x, 1_x, 2_x, 1_x, 6_x, 6_x, 4_x, 2_x)$,& $(4_x, 6_x, 1_x, 6_x, 4_x, 2_x, 1_x, 2_x, 4_x)$,& $(4_x, 6_x, 1_x, 6_x, 4_x, 2_x, 2_x, 1_x, 6_x)$ \\ 
$(4_x, 6_x, 2_x, 1_x, 2_x, 6_x, 2_x, 1_x, 4_x)$,& $(4_x, 6_x, 2_x, 1_x, 2_x, 6_x, 6_x, 4_x, 1_x)$,& $(4_x, 6_x, 2_x, 2_x, 1_x, 4_x, 1_x, 2_x, 6_x)$ \\ 
$(4_x, 6_x, 2_x, 2_x, 1_x, 4_x, 6_x, 4_x, 1_x)$,& $(4_x, 6_x, 2_x, 6_x, 4_x, 1_x, 1_x, 2_x, 6_x)$,& $(4_x, 6_x, 2_x, 6_x, 4_x, 1_x, 2_x, 1_x, 4_x)$ \\ 
$(6_x, 1_x, 2_x, 1_x, 6_x, 4_x, 2_x, 4_x, 6_x)$,& $(6_x, 1_x, 2_x, 1_x, 6_x, 4_x, 4_x, 2_x, 1_x)$,& $(6_x, 1_x, 2_x, 2_x, 4_x, 6_x, 1_x, 6_x, 4_x)$ \\ 
$(6_x, 1_x, 2_x, 2_x, 4_x, 6_x, 4_x, 2_x, 1_x)$,& $(6_x, 1_x, 2_x, 4_x, 2_x, 1_x, 1_x, 6_x, 4_x)$,& $(6_x, 1_x, 2_x, 4_x, 2_x, 1_x, 2_x, 4_x, 6_x)$ \\ 
$(6_x, 1_x, 4_x, 1_x, 6_x, 2_x, 2_x, 4_x, 1_x)$,& $(6_x, 1_x, 4_x, 1_x, 6_x, 2_x, 4_x, 2_x, 6_x)$,& $(6_x, 1_x, 4_x, 2_x, 4_x, 1_x, 1_x, 6_x, 2_x)$ \\ 
$(6_x, 1_x, 4_x, 2_x, 4_x, 1_x, 4_x, 2_x, 6_x)$,& $(6_x, 1_x, 4_x, 4_x, 2_x, 6_x, 1_x, 6_x, 2_x)$,& $(6_x, 1_x, 4_x, 4_x, 2_x, 6_x, 2_x, 4_x, 1_x)$ \\ 
$(6_x, 2_x, 1_x, 1_x, 4_x, 6_x, 2_x, 6_x, 4_x)$,& $(6_x, 2_x, 1_x, 1_x, 4_x, 6_x, 4_x, 1_x, 2_x)$,& $(6_x, 2_x, 1_x, 2_x, 6_x, 4_x, 1_x, 4_x, 6_x)$ \\ 
$(6_x, 2_x, 1_x, 2_x, 6_x, 4_x, 4_x, 1_x, 2_x)$,& $(6_x, 2_x, 1_x, 4_x, 1_x, 2_x, 1_x, 4_x, 6_x)$,& $(6_x, 2_x, 1_x, 4_x, 1_x, 2_x, 2_x, 6_x, 4_x)$ \\ 
$(6_x, 2_x, 4_x, 1_x, 4_x, 2_x, 2_x, 6_x, 1_x)$,& $(6_x, 2_x, 4_x, 1_x, 4_x, 2_x, 4_x, 1_x, 6_x)$,& $(6_x, 2_x, 4_x, 2_x, 6_x, 1_x, 1_x, 4_x, 2_x)$ \\ 
$(6_x, 2_x, 4_x, 2_x, 6_x, 1_x, 4_x, 1_x, 6_x)$,& $(6_x, 2_x, 4_x, 4_x, 1_x, 6_x, 1_x, 4_x, 2_x)$,& $(6_x, 2_x, 4_x, 4_x, 1_x, 6_x, 2_x, 6_x, 1_x)$ \\ 
$(6_x, 4_x, 1_x, 1_x, 2_x, 6_x, 2_x, 1_x, 4_x)$,& $(6_x, 4_x, 1_x, 1_x, 2_x, 6_x, 4_x, 6_x, 2_x)$,& $(6_x, 4_x, 1_x, 2_x, 1_x, 4_x, 1_x, 2_x, 6_x)$ \\ 
$(6_x, 4_x, 1_x, 2_x, 1_x, 4_x, 4_x, 6_x, 2_x)$,& $(6_x, 4_x, 1_x, 4_x, 6_x, 2_x, 1_x, 2_x, 6_x)$,& $(6_x, 4_x, 1_x, 4_x, 6_x, 2_x, 2_x, 1_x, 4_x)$ \\ 
$(6_x, 4_x, 2_x, 1_x, 2_x, 4_x, 2_x, 1_x, 6_x)$,& $(6_x, 4_x, 2_x, 1_x, 2_x, 4_x, 4_x, 6_x, 1_x)$,& $(6_x, 4_x, 2_x, 2_x, 1_x, 6_x, 1_x, 2_x, 4_x)$ \\ 
$(6_x, 4_x, 2_x, 2_x, 1_x, 6_x, 4_x, 6_x, 1_x)$,& $(6_x, 4_x, 2_x, 4_x, 6_x, 1_x, 1_x, 2_x, 4_x)$,& $(6_x, 4_x, 2_x, 4_x, 6_x, 1_x, 2_x, 1_x, 6_x)$ \\
\end{longtable}
\end{small}

\section{The $4\times 4$ orthogonal MDS matrices with optimal d-XOR count $72$ over $\mathbb{F}_{2^4}/0$x$13$}\label{Appendix:4-OMDS_F_2^4}
%\begin{table}    
%\centering
%\captionof{table}{The tupples represents nine parameters in hexadecinal in terms of $m_{ij}$ for $1\leq i,j\leq 3$ as given in equation (\ref{Eqn:GenForm-4Ortho})}
%\vspace{2mm}

The tuples in the following are represented as $(m_{11}, m_{12}, m_{13}, m_{21}, m_{22}, m_{23}, m_{31}, m_{32}, m_{33})$, where $m_{ij}$ ($1 \leq i,j \leq 3$) are defined by equation (\ref{Eqn:GenForm-4Ortho}).

\begin{small}
\begin{longtable}{c c c}
$(1_x, 4_x, 9_x, 4_x, 1_x, d_x, 9_x, d_x, 1_x)$,& $(1_x, 4_x, 9_x, 4_x, 1_x, d_x, d_x, 9_x, 4_x)$,& $(1_x, 4_x, 9_x, 9_x, d_x, 1_x, 4_x, 1_x, d_x)$ \\ 
$(1_x, 4_x, 9_x, 9_x, d_x, 1_x, d_x, 9_x, 4_x)$,& $(1_x, 4_x, 9_x, d_x, 9_x, 4_x, 4_x, 1_x, d_x)$,& $(1_x, 4_x, 9_x, d_x, 9_x, 4_x, 9_x, d_x, 1_x)$ \\ 
$(1_x, 4_x, d_x, 4_x, 1_x, 9_x, 9_x, d_x, 4_x)$,& $(1_x, 4_x, d_x, 4_x, 1_x, 9_x, d_x, 9_x, 1_x)$,& $(1_x, 4_x, d_x, 9_x, d_x, 4_x, 4_x, 1_x, 9_x)$ \\ 
$(1_x, 4_x, d_x, 9_x, d_x, 4_x, d_x, 9_x, 1_x)$,& $(1_x, 4_x, d_x, d_x, 9_x, 1_x, 4_x, 1_x, 9_x)$,& $(1_x, 4_x, d_x, d_x, 9_x, 1_x, 9_x, d_x, 4_x)$ \\ 
$(1_x, 9_x, 4_x, 4_x, d_x, 1_x, 9_x, 1_x, d_x)$,& $(1_x, 9_x, 4_x, 4_x, d_x, 1_x, d_x, 4_x, 9_x)$,& $(1_x, 9_x, 4_x, 9_x, 1_x, d_x, 4_x, d_x, 1_x)$ \\ 
$(1_x, 9_x, 4_x, 9_x, 1_x, d_x, d_x, 4_x, 9_x)$,& $(1_x, 9_x, 4_x, d_x, 4_x, 9_x, 4_x, d_x, 1_x)$,& $(1_x, 9_x, 4_x, d_x, 4_x, 9_x, 9_x, 1_x, d_x)$ \\ 
$(1_x, 9_x, d_x, 4_x, d_x, 9_x, 9_x, 1_x, 4_x)$,& $(1_x, 9_x, d_x, 4_x, d_x, 9_x, d_x, 4_x, 1_x)$,& $(1_x, 9_x, d_x, 9_x, 1_x, 4_x, 4_x, d_x, 9_x)$ \\
$(1_x, 9_x, d_x, 9_x, 1_x, 4_x, d_x, 4_x, 1_x)$,& $(1_x, 9_x, d_x, d_x, 4_x, 1_x, 4_x, d_x, 9_x)$,& $(1_x, 9_x, d_x, d_x, 4_x, 1_x, 9_x, 1_x, 4_x)$ \\ 
$(1_x, d_x, 4_x, 4_x, 9_x, 1_x, 9_x, 4_x, d_x)$,& $(1_x, d_x, 4_x, 4_x, 9_x, 1_x, d_x, 1_x, 9_x)$,& $(1_x, d_x, 4_x, 9_x, 4_x, d_x, 4_x, 9_x, 1_x)$ \\ 
$(1_x, d_x, 4_x, 9_x, 4_x, d_x, d_x, 1_x, 9_x)$,& $(1_x, d_x, 4_x, d_x, 1_x, 9_x, 4_x, 9_x, 1_x)$,& $(1_x, d_x, 4_x, d_x, 1_x, 9_x, 9_x, 4_x, d_x)$ \\ 
$(1_x, d_x, 9_x, 4_x, 9_x, d_x, 9_x, 4_x, 1_x)$,& $(1_x, d_x, 9_x, 4_x, 9_x, d_x, d_x, 1_x, 4_x)$,& $(1_x, d_x, 9_x, 9_x, 4_x, 1_x, 4_x, 9_x, d_x)$ \\
$(1_x, d_x, 9_x, 9_x, 4_x, 1_x, d_x, 1_x, 4_x)$,& $(1_x, d_x, 9_x, d_x, 1_x, 4_x, 4_x, 9_x, d_x)$,& $(1_x, d_x, 9_x, d_x, 1_x, 4_x, 9_x, 4_x, 1_x)$ \\ 
$(4_x, 1_x, 9_x, 1_x, 4_x, d_x, 9_x, d_x, 4_x)$,& $(4_x, 1_x, 9_x, 1_x, 4_x, d_x, d_x, 9_x, 1_x)$,& $(4_x, 1_x, 9_x, 9_x, d_x, 4_x, 1_x, 4_x, d_x)$ \\ 
$(4_x, 1_x, 9_x, 9_x, d_x, 4_x, d_x, 9_x, 1_x)$,& $(4_x, 1_x, 9_x, d_x, 9_x, 1_x, 1_x, 4_x, d_x)$,& $(4_x, 1_x, 9_x, d_x, 9_x, 1_x, 9_x, d_x, 4_x)$ \\ 
$(4_x, 1_x, d_x, 1_x, 4_x, 9_x, 9_x, d_x, 1_x)$,& $(4_x, 1_x, d_x, 1_x, 4_x, 9_x, d_x, 9_x, 4_x)$,& $(4_x, 1_x, d_x, 9_x, d_x, 1_x, 1_x, 4_x, 9_x)$ \\ 
$(4_x, 1_x, d_x, 9_x, d_x, 1_x, d_x, 9_x, 4_x)$,& $(4_x, 1_x, d_x, d_x, 9_x, 4_x, 1_x, 4_x, 9_x)$,& $(4_x, 1_x, d_x, d_x, 9_x, 4_x, 9_x, d_x, 1_x)$ \\ 
$(4_x, 9_x, 1_x, 1_x, d_x, 4_x, 9_x, 4_x, d_x)$,& $(4_x, 9_x, 1_x, 1_x, d_x, 4_x, d_x, 1_x, 9_x)$,& $(4_x, 9_x, 1_x, 9_x, 4_x, d_x, 1_x, d_x, 4_x)$ \\ 
$(4_x, 9_x, 1_x, 9_x, 4_x, d_x, d_x, 1_x, 9_x)$,& $(4_x, 9_x, 1_x, d_x, 1_x, 9_x, 1_x, d_x, 4_x)$,& $(4_x, 9_x, 1_x, d_x, 1_x, 9_x, 9_x, 4_x, d_x)$ \\ 
$(4_x, 9_x, d_x, 1_x, d_x, 9_x, 9_x, 4_x, 1_x)$,& $(4_x, 9_x, d_x, 1_x, d_x, 9_x, d_x, 1_x, 4_x)$,& $(4_x, 9_x, d_x, 9_x, 4_x, 1_x, 1_x, d_x, 9_x)$ \\ 
$(4_x, 9_x, d_x, 9_x, 4_x, 1_x, d_x, 1_x, 4_x)$,& $(4_x, 9_x, d_x, d_x, 1_x, 4_x, 1_x, d_x, 9_x)$,& $(4_x, 9_x, d_x, d_x, 1_x, 4_x, 9_x, 4_x, 1_x)$ \\ 
$(4_x, d_x, 1_x, 1_x, 9_x, 4_x, 9_x, 1_x, d_x)$,& $(4_x, d_x, 1_x, 1_x, 9_x, 4_x, d_x, 4_x, 9_x)$,& $(4_x, d_x, 1_x, 9_x, 1_x, d_x, 1_x, 9_x, 4_x)$ \\ 
$(4_x, d_x, 1_x, 9_x, 1_x, d_x, d_x, 4_x, 9_x)$,& $(4_x, d_x, 1_x, d_x, 4_x, 9_x, 1_x, 9_x, 4_x)$,& $(4_x, d_x, 1_x, d_x, 4_x, 9_x, 9_x, 1_x, d_x)$ \\ 
$(4_x, d_x, 9_x, 1_x, 9_x, d_x, 9_x, 1_x, 4_x)$,& $(4_x, d_x, 9_x, 1_x, 9_x, d_x, d_x, 4_x, 1_x)$,& $(4_x, d_x, 9_x, 9_x, 1_x, 4_x, 1_x, 9_x, d_x)$ \\ 
$(4_x, d_x, 9_x, 9_x, 1_x, 4_x, d_x, 4_x, 1_x)$,& $(4_x, d_x, 9_x, d_x, 4_x, 1_x, 1_x, 9_x, d_x)$,& $(4_x, d_x, 9_x, d_x, 4_x, 1_x, 9_x, 1_x, 4_x)$ \\ 
$(9_x, 1_x, 4_x, 1_x, 9_x, d_x, 4_x, d_x, 9_x)$,& $(9_x, 1_x, 4_x, 1_x, 9_x, d_x, d_x, 4_x, 1_x)$,& $(9_x, 1_x, 4_x, 4_x, d_x, 9_x, 1_x, 9_x, d_x)$ \\ 
$(9_x, 1_x, 4_x, 4_x, d_x, 9_x, d_x, 4_x, 1_x)$,& $(9_x, 1_x, 4_x, d_x, 4_x, 1_x, 1_x, 9_x, d_x)$,& $(9_x, 1_x, 4_x, d_x, 4_x, 1_x, 4_x, d_x, 9_x)$ \\ 
$(9_x, 1_x, d_x, 1_x, 9_x, 4_x, 4_x, d_x, 1_x)$,& $(9_x, 1_x, d_x, 1_x, 9_x, 4_x, d_x, 4_x, 9_x)$,& $(9_x, 1_x, d_x, 4_x, d_x, 1_x, 1_x, 9_x, 4_x)$ \\ 
$(9_x, 1_x, d_x, 4_x, d_x, 1_x, d_x, 4_x, 9_x)$,& $(9_x, 1_x, d_x, d_x, 4_x, 9_x, 1_x, 9_x, 4_x)$,& $(9_x, 1_x, d_x, d_x, 4_x, 9_x, 4_x, d_x, 1_x)$ \\ 
$(9_x, 4_x, 1_x, 1_x, d_x, 9_x, 4_x, 9_x, d_x)$,& $(9_x, 4_x, 1_x, 1_x, d_x, 9_x, d_x, 1_x, 4_x)$,& $(9_x, 4_x, 1_x, 4_x, 9_x, d_x, 1_x, d_x, 9_x)$ \\ 
$(9_x, 4_x, 1_x, 4_x, 9_x, d_x, d_x, 1_x, 4_x)$,& $(9_x, 4_x, 1_x, d_x, 1_x, 4_x, 1_x, d_x, 9_x)$,& $(9_x, 4_x, 1_x, d_x, 1_x, 4_x, 4_x, 9_x, d_x)$ \\ 
$(9_x, 4_x, d_x, 1_x, d_x, 4_x, 4_x, 9_x, 1_x)$,& $(9_x, 4_x, d_x, 1_x, d_x, 4_x, d_x, 1_x, 9_x)$,& $(9_x, 4_x, d_x, 4_x, 9_x, 1_x, 1_x, d_x, 4_x)$ \\ 
$(9_x, 4_x, d_x, 4_x, 9_x, 1_x, d_x, 1_x, 9_x)$,& $(9_x, 4_x, d_x, d_x, 1_x, 9_x, 1_x, d_x, 4_x)$,& $(9_x, 4_x, d_x, d_x, 1_x, 9_x, 4_x, 9_x, 1_x)$ \\ 
$(9_x, d_x, 1_x, 1_x, 4_x, 9_x, 4_x, 1_x, d_x)$,& $(9_x, d_x, 1_x, 1_x, 4_x, 9_x, d_x, 9_x, 4_x)$,& $(9_x, d_x, 1_x, 4_x, 1_x, d_x, 1_x, 4_x, 9_x)$ \\ 
$(9_x, d_x, 1_x, 4_x, 1_x, d_x, d_x, 9_x, 4_x)$,& $(9_x, d_x, 1_x, d_x, 9_x, 4_x, 1_x, 4_x, 9_x)$,& $(9_x, d_x, 1_x, d_x, 9_x, 4_x, 4_x, 1_x, d_x)$ \\ 
$(9_x, d_x, 4_x, 1_x, 4_x, d_x, 4_x, 1_x, 9_x)$,& $(9_x, d_x, 4_x, 1_x, 4_x, d_x, d_x, 9_x, 1_x)$,& $(9_x, d_x, 4_x, 4_x, 1_x, 9_x, 1_x, 4_x, d_x)$ \\ 
$(9_x, d_x, 4_x, 4_x, 1_x, 9_x, d_x, 9_x, 1_x)$,& $(9_x, d_x, 4_x, d_x, 9_x, 1_x, 1_x, 4_x, d_x)$,& $(9_x, d_x, 4_x, d_x, 9_x, 1_x, 4_x, 1_x, 9_x)$ \\ 
$(d_x, 1_x, 4_x, 1_x, d_x, 9_x, 4_x, 9_x, d_x)$,& $(d_x, 1_x, 4_x, 1_x, d_x, 9_x, 9_x, 4_x, 1_x)$,& $(d_x, 1_x, 4_x, 4_x, 9_x, d_x, 1_x, d_x, 9_x)$ \\ 
$(d_x, 1_x, 4_x, 4_x, 9_x, d_x, 9_x, 4_x, 1_x)$,& $(d_x, 1_x, 4_x, 9_x, 4_x, 1_x, 1_x, d_x, 9_x)$,& $(d_x, 1_x, 4_x, 9_x, 4_x, 1_x, 4_x, 9_x, d_x)$ \\
$(d_x, 1_x, 9_x, 1_x, d_x, 4_x, 4_x, 9_x, 1_x)$,& $(d_x, 1_x, 9_x, 1_x, d_x, 4_x, 9_x, 4_x, d_x)$,& $(d_x, 1_x, 9_x, 4_x, 9_x, 1_x, 1_x, d_x, 4_x)$ \\ 
$(d_x, 1_x, 9_x, 4_x, 9_x, 1_x, 9_x, 4_x, d_x)$,& $(d_x, 1_x, 9_x, 9_x, 4_x, d_x, 1_x, d_x, 4_x)$,& $(d_x, 1_x, 9_x, 9_x, 4_x, d_x, 4_x, 9_x, 1_x)$ \\ 
$(d_x, 4_x, 1_x, 1_x, 9_x, d_x, 4_x, d_x, 9_x)$,& $(d_x, 4_x, 1_x, 1_x, 9_x, d_x, 9_x, 1_x, 4_x)$,& $(d_x, 4_x, 1_x, 4_x, d_x, 9_x, 1_x, 9_x, d_x)$ \\ 
$(d_x, 4_x, 1_x, 4_x, d_x, 9_x, 9_x, 1_x, 4_x)$,& $(d_x, 4_x, 1_x, 9_x, 1_x, 4_x, 1_x, 9_x, d_x)$,& $(d_x, 4_x, 1_x, 9_x, 1_x, 4_x, 4_x, d_x, 9_x)$ \\ 
$(d_x, 4_x, 9_x, 1_x, 9_x, 4_x, 4_x, d_x, 1_x)$,& $(d_x, 4_x, 9_x, 1_x, 9_x, 4_x, 9_x, 1_x, d_x)$,& $(d_x, 4_x, 9_x, 4_x, d_x, 1_x, 1_x, 9_x, 4_x)$ \\ 
$(d_x, 4_x, 9_x, 4_x, d_x, 1_x, 9_x, 1_x, d_x)$,& $(d_x, 4_x, 9_x, 9_x, 1_x, d_x, 1_x, 9_x, 4_x)$,& $(d_x, 4_x, 9_x, 9_x, 1_x, d_x, 4_x, d_x, 1_x)$ \\ 
$(d_x, 9_x, 1_x, 1_x, 4_x, d_x, 4_x, 1_x, 9_x)$,& $(d_x, 9_x, 1_x, 1_x, 4_x, d_x, 9_x, d_x, 4_x)$,& $(d_x, 9_x, 1_x, 4_x, 1_x, 9_x, 1_x, 4_x, d_x)$ \\ 
$(d_x, 9_x, 1_x, 4_x, 1_x, 9_x, 9_x, d_x, 4_x)$,& $(d_x, 9_x, 1_x, 9_x, d_x, 4_x, 1_x, 4_x, d_x)$,& $(d_x, 9_x, 1_x, 9_x, d_x, 4_x, 4_x, 1_x, 9_x)$ \\ 
$(d_x, 9_x, 4_x, 1_x, 4_x, 9_x, 4_x, 1_x, d_x)$,& $(d_x, 9_x, 4_x, 1_x, 4_x, 9_x, 9_x, d_x, 1_x)$,& $(d_x, 9_x, 4_x, 4_x, 1_x, d_x, 1_x, 4_x, 9_x)$ \\ 
$(d_x, 9_x, 4_x, 4_x, 1_x, d_x, 9_x, d_x, 1_x)$,& $(d_x, 9_x, 4_x, 9_x, d_x, 1_x, 1_x, 4_x, 9_x)$,& $(d_x, 9_x, 4_x, 9_x, d_x, 1_x, 4_x, 1_x, d_x)$  
\end{longtable}
\end{small}

\end{document}